\newtheorem{theorem}{Theorem}[section]
\newtheorem{lemma}{Lemma}[section]
\newtheorem{definition}{Definition}[section]
\newtheorem{remark}{Remark}[section]
\newtheorem{proposition}{Proposition}[section]
\newtheorem{assumption}{Assumption}[section]
\newtheorem{corollary}{Corollary}[section]
\def\BibTeX{{\rm B\kern-.05em{\sc i\kern-.025em b}\kern-.08em
    T\kern-.1667em\lower.7ex\hbox{E}\kern-.125emX}}
\begin{document}
\title{Resilient Consensus with Multi-hop Communication}
\author{Liwei Yuan and Hideaki Ishii, \IEEEmembership{Fellow, IEEE}
\thanks{This work was supported in the part by JSPS under Grant-in-Aid for 
	Scientific Research Grant No.~18H01460. The support provided by the China
	Scholarship Council is also acknowledged. }
\thanks{L. Yuan and H. Ishii are with the Department of Computer Science, Tokyo Institute of Technology, Yokohama, 226-8502, Japan. (e-mail: yuan@sc.dis.titech.ac.jp; ishii@c.titech.ac.jp). }
}

\maketitle

\begin{abstract}
In this paper, we study the problem of resilient consensus
for a multi-agent network where some of the nodes might be
adversarial, attempting to prevent consensus by transmitting
faulty values. Our approach is based on that of the so-called
weighted mean subsequence reduced (W-MSR) algorithm with
a special emphasis on its use in agents capable to communicate
with multi-hop neighbors.
The MSR algorithm is a powerful tool for achieving resilient
consensus under minimal requirements for network structures,
characterized by the class of robust graphs. 
Our analysis highlights that through multi-hop communication,
the network connectivity can be reduced especially in comparison
with the common one-hop communication case.
Moreover, we analyze the multi-hop W-MSR algorithm with
delays in communication since the values from different multi-hop
neighbors may arrive at the agents at different time steps. 
\end{abstract}

\begin{IEEEkeywords}
Cyber security, distributed algorithms, multi-hop communication, resilient consensus.
\end{IEEEkeywords}

\section{Introduction}

\IEEEPARstart{C}{yber}
security of multi-agent systems and distributed algorithms has become an important research area in systems control in the last decade. For multi-agent systems, consensus is one of the fundamental problems \cite{bullo2009distributed}, \cite{Lynch}. 
Based on consensus algorithms, various applications and distributed algorithms have been developed to solve different industrial problems, e.g., clock synchronization \cite{kikuya2017fault}, energy management \cite{yang2013consensus}, distributed state estimation \cite{mitra2021}, distributed optimization \cite{nedic2009distributed, sundaram2018distributed}, and so on.
As concerns for cyber security have rised in general, consensus problems in the presence of adversarial agents creating failures and attacks have attracted much attention; see, e.g.,
\cite{dibaji2018resilient, leblanc2013resilient, su2017reaching, nugraha2021dynamic}. 
One class of interdisciplinary problems that has been studied in both control and computer science is that of resilient consensus \cite{dibaji2017resilient, leblanc2013resilient, vaidya2012iterative}. 
In these works, the adversarial agents are categorized into basically two types: Malicious agents and Byzantine agents. These agents are capable to manipulate their data 
arbitrarily and may even collaborate with each other. However, certain differences lie between them since malicious agents are limited as they must broadcast the same messages 
to all of their neighbors, while Byzantine agents can send individual messages to different neighbors (e.g., \cite{leblanc2013resilient, Lynch, teixeira2012attack}). 

Resilient consensus problems under the Byzantine model have a rich history in the area of distributed computing \cite{Lynch}. In \cite{dolev1982byzantine}, it has
been shown that given a network with $n$ nodes and at most $f$ Byzantine nodes, a necessary and sufficient condition to reach resilient exact consensus is that $n\geq3f+1$ and the graph connectivity is no less than $2f+1$. Furthermore, in  \cite{fischer1985impossibility}, it has been reported that under deterministic asynchronous updates, even one misbehaving agent can make it impossible for the system to reach exact consensus. Then, to avoid this constraint of exact consensus, the authors of \cite{dolev1986reaching} introduced the approximate Byzantine consensus problem in complete networks (i.e., all-to-all communication), where the non-adversarial, normal nodes are required to achieve approximate agreement by converging to a relatively small interval in finite time. In \cite{vaidya2012iterative}, the authors proposed a necessary and sufficient condition for the approximate Byzantine consensus problem in networks with general topologies.

In this paper, we study resilient asymptotic (approximate) consensus under malicious model from the viewpoint of the so-called mean subsequence reduced (MSR) algorithms, which are known for the simplicity and scalability (e.g., \cite{azadmanesh2002asynchronous, leblanc2013resilient, vaidya2012iterative, abbas2020interplay, senejohnny2019resilience, wang2020event, wang2021resilient}). 
This line of work has gained much attention in the last decade as the malicious model can be widely assumed for broadcasting networks \cite{leblanc2013resilient}, wireless sensor networks \cite{kikuya2017fault}, and so on. 
A basic assumption in MSR algorithms is the knowledge regarding an upper bound on the maximum number of malicious agents among the neighbors; this bound is denoted by $f$ throughout this paper. Such a bound represents the level of caution assumed by the system operator and can be set based on past experience, with possibly some safety margin.
Then, at each iteration, each node eliminates extreme values received from neighbors to avoid being influenced by such potentially faulty values. In particular, it removes the $f$ largest values and the $f$ smallest values from neighbors.
Moreover, the graph property called robustness is shown to be critical for the network structure, guaranteeing the success of resilient consensus algorithms in static networks \cite{dibaji2017resilient, leblanc2013resilient} as well as time-varying networks \cite{saldana2017resilient}. 
A recent work \cite{usevitch2020determining} attempts to check robustness of given graphs using mixed integer linear programming.
Nevertheless, such robustness requires the networks to be relatively dense and complex. Therefore, how to enhance resilience of more sparse networks without changing the original network topologies has become an urgent problem.

There are several directions for developing solutions to tackle this problem. In \cite{abbas2017improving}, the authors improved the graph robustness by introducing trusted nodes, which are assumed to be fault-free with extra security measures. They also provided another alternative method \cite{abbas2019diversity} to enhance the graph robustness by introducing different types of nodes in the network and by assuming that the attackers can only compromise a certain type of nodes.
On the other hand, the works \cite{yuan2021secure, zhao2018resilient} pursue approaches based on detection of malicious agents in the network. Compared to MSR algorithms, which do not have such detection capabilities, the algorithms are applicable to more sparse networks with the same tolerance of adversaries.
Furthermore, in \cite{su2017reaching}, by introducing multi-hop communication in MSR algorithms, the authors solved the approximate Byzantine consensus problem with a weaker condition on network structures compared to that derived under the one-hop communication model \cite{vaidya2012iterative}. 
In \cite{sakavalas2020asynchronous}, the authors tackled the same problem under asynchronous updates based on rounds, which is different from the asynchrony setting used in this paper (see the discussions in Section VI).
The work \cite{khan2020exact} studied Byzantine binary consensus under the local broadcast model (malicious model) using a flooding algorithm, where nodes relay their values over the entire network.

Multi-hop communication techniques are commonly used in the areas of wireless communication \cite{goldsmith2005wireless} and computer science \cite{Lynch}. 
Such techniques are also used for consensus problems in many works. In \cite{jin2006multi}, a multi-hop relay technique is introduced in the consensus problem to increase the speed of consensus forming. In \cite{zhao2016global}, a similar method based on multi-hop relay is developed to solve the global leader-following consensus problem. Moreover, application of multi-hop communication in wireless sensor networks from the viewpoint of control is investigated in \cite{manfredi2013design}. It is clear that with multi-hop communication, each node can have more information for updating compared to the one-hop case. Thus, the network may have more resilience against adversary nodes. Yet, only few works have looked into resilient consensus with multi-hop communication. 
As mentioned earlier, the work \cite{su2017reaching} is restricted to the Byzantine adversary case. 
To the best of our knowledge, resilient consensus with multi-hop communication under malicious attacks still remains as an open problem.
Moreover, for many applications of multi-agent systems, time delays in the communication among agents can happen naturally \cite{ren2011distributed}, \cite{mesbahi2010graph}. 
Especially, in the multi-hop communication setting, the length of time delays should increase as the messages are relayed by more agents. 
It is hence of significant importance to extend our algorithm to the case of asynchronous updates with time delays. We must note that such a case is not considered in \cite{su2017reaching} and \cite{khan2020exact}.

The main contribution of this paper can be outlined as follows. 
Inspired by the definition of graph robustness of \cite{leblanc2013resilient}, we extend this notion to the multi-hop setting and name it as \textit{robustness with $l$ hops}.
Specifically, we formally characterize the ability of normal nodes to be influenced by normal multi-hop neighbors outside a given node set in the malicious environment.
Furthermore, we provide analysis for the properties of the new notion of robustness.

Unlike in the case with one-hop communication, the MSR algorithm in the multi-hop case may not exclude all the possible effects from malicious nodes if each normal node just eliminates the $f$ largest and the $f$ smallest received values. Since a malicious node can manipulate not only its own value but also the values it relays, such nodes can produce more than $f$ false values even if there are at most $f$ malicious nodes. To completely exclude the effects from malicious nodes, we propose the multi-hop weighted mean subsequence reduced (MW-MSR) algorithm. Normal nodes using the MW-MSR algorithm will exclude the extreme values which are produced precisely by $f$ multi-hop neighbors. To realize this trimming capability in the multi-hop setting requires the notion of \textit{message cover}, which represents the set of nodes intersecting with a given set of different message paths.

In this paper, we consider the malicious model, which is suitable for broadcast network and is different from the Byzantine model studied in \cite{su2017reaching}. 
Then we derive necessary and sufficient graph conditions based on the new notion of robustness with $l$ hops for the proposed MW-MSR algorithms to achieve resilient consensus under synchronous updates and asynchronous updates with time delays in the communication.
Moreover, we present examples to illustrate how multi-hop communication helps to improve graph robustness without changing the network topology. 
As a side result, we prove that for the case of unbounded path length in message relaying, our graph condition is equivalent to the necessary and sufficient graph condition for binary consensus under malicious attacks studied in \cite{khan2020exact}.

The rest of this paper is organized as follows. 
In Section~II, we outline preliminaries on graphs and the system model. In Sections~III and IV, we present the MW-MSR algorithm and define graph robustness with multi-hop communication, respectively.
Then in Sections~V and VI, we derive tight graph conditions under which the MW-MSR algorithms guarantee resilient asymptotic consensus under synchronous and asynchronous updates, respectively.
In Section~VII, we provide some properties of the new robustness and in Section~VIII, we present examples to demonstrate that multi-hop communication can improve robustness of graphs in general.
Lastly, in Section~IX, we conclude the paper.
A preliminary version of this paper appeared as \cite{yuan2021resilient}.
The current paper contains all the proofs of the theoretical results, further discussions, as well as more extensive simulations.

\section{Preliminaries and Problem Formulation}
In this section, we provide preliminaries on the network models
and introduce the basic settings for the resilient
consensus problems studied in this paper.

\subsection{Network Model}
First, we introduce the graph notions used in this paper.
Consider the directed graph $\mathcal{G} = (\mathcal{V},\mathcal{E})$ consisting of the node set $\mathcal{V}=\{1,...,n\}$ and the edge set $\mathcal{E}\subset \mathcal{V} \times \mathcal{V}$. Here, the edge $(j,i)\in \mathcal{E}$ indicates that node $i$ can get information from node $j$. 
A path from node $i_1$ to $i_m$ is a sequence of distinct nodes $(i_1, i_2, \dots, i_m)$, where $(i_j, i_{j+1})\in \mathcal{E} $ for $j=1, \dots, m-1$. Such a path is referred to as an $(m-1)$-hop path (or a path of length $m-1$) and also as $(i_1,i_m)$-path when the number of hops is not relevant but the source and destination nodes are. We also say that node $i_m$ is reachable from node $i_1$. 
An $\mathcal{X}u$-path is a path from a node in set $\mathcal{X}$ to node $u\notin \mathcal{X}$. We also denote the set minus symbol by $\mathcal{X}\setminus\mathcal{Y}$.

For node $i$, let $\mathcal{N}_i^{l-}$ be the set of nodes that can reach node $i$ via at most $l$-hop paths, where $l$ is a positive integer. Also, let $\mathcal{N}_i^{l+}$ be the set of nodes that are reachable from node $i$ via at most $l$-hop paths. 
The $l$-th power of the graph $\mathcal{G}$, denoted by $\mathcal{G}^l$, is a multigraph\footnote[1]{
	In a multigraph, two nodes can have multiple edges between them.} with the same vertices as $\mathcal{G}$ and a directed edge from node $j$ to node $i$ is defined by a path of length at most $l$ from $j$ to $i$ in $\mathcal{G}$. 
The adjacency matrix $A = [a_{ij} ]$ of $\mathcal{G}^l$ is given by $\alpha \leq a_{ij}<1$ if $j\in \mathcal{N}_i^{l-}$ and otherwise $a_{ij} = 0$, where $\alpha > 0$ is a fixed lower bound. We assume that $\sum_{j=1,j\neq i}^{n} a_{ij}\leq 1$ for all $i$. Let $L = [b_{ij} ]$ be the Laplacian matrix of $\mathcal{G}^l$, whose entries are defined as $b_{ii} =\sum_{j=1,j\neq i}^{n}a_{ij}$ and $b_{ij} = -a_{ij}, i\neq j$; we can see that the sum of the elements of each row of $L$ is zero.

\subsection{Multi-hop Communication for Multi-agent Consensus}\label{problemsetting}

Here, we introduce the multi-agent system with multi-hop communication and the update rule used by the agents under no attacks.
Consider a time-invariant network modeled by the directed graph $\mathcal{G} = (\mathcal{V},\mathcal{E})$. Each node $i$ has a real-valued state $x_i[k]$.
The goal of the agents is to arrive at consensus in their state values asymptotically, that is, $|x_i[k] - x_j[k]| \rightarrow 0$ as $k\rightarrow \infty$ for all $i,j\in \mathcal{V}$. This is to be achieved by updating the states at each time step $k$ based on the information exchanged among the nodes. Their initial values $x_i[0]$ are given. Until we reach Section~VI, we assume that no delay is present in the communication among nodes.

In this problem setting, the agents not only communicate with their direct neighbors as in conventional schemes,
but also with their multi-hop neighbors. Let $l$ be the maximum number of hops allowed in the network. Specifically, 
node $i_1$ can send messages of its own to an $l$-hop neighbor $i_{l+1}$ via different paths.
We represent a message as a tuple $m=(w,P)$, where $w=\mathrm{value}(m)\in \mathbb{R}$ is the message content and $P=\mathrm{path}(m)$ indicates the path via which message $m$ is transmitted. 
Moreover, nodes $i_1$ and $i_{l+1}$ are, respectively, the message source and the message destination.
When source node $i_1$ sends out a message, $P$ is a path vector of length $l+1$ with the source node being $i_1$ and other entries being empty. Then the one-hop neighbor $i_2$ receives this message from $i_1$, and it stores the value of node $i_1$ for consensus and relays the value of node $i_1$ to all the one-hop neighbors of $i_2$ with the second entry of $P$ being $i_2$ and other entries being unchanged. This relay procedure will continue until every entry of $P$ of this message is occupied, i.e., this message reaches node $i_{l+1}$. We denote by $\mathcal{V}(P)$ the set of nodes in $P$.

We now outline the message exchanges among the agents. 
At each time $k$, normal node $i$ conducts the following steps:

\noindent\textit{1. Transmit step:} Transmit message $m_{ij}[k]=(x_i[k],P_{ij}[k])$ over each $l$-hop path to node $j
\in\mathcal{N}_i^{l+}$.

\noindent\textit{2. Receive step:} Receive messages $m_{ji}[k]=(x_j[k],P_{ji}[k])$ from $j\in \mathcal{N}_i^{l-}$, whose destination is $i$. 
Let $\mathcal{M}_i[k]$ be the set of messages that node $i$ received in this step.

\noindent\textit{3. Update step:} Update the state $x_i[k]$ as
\begin{equation}
x_i[k+1]=g_i(\mathcal{M}_i[k]),  \label{updaterule}
\end{equation}
where $g_i(\cdot)$ is a real-valued function of the states received in this time step, to be defined later.

In the Transmit step and Receive step, nodes exchange messages with others that are up to $l$ hops away. Then in the Update step, node $i$ updates its state using the received values in $\mathcal{M}_i[k]$. Note that the adversary nodes may deviate from this specification as we describe in the next subsection.

In an agent network equipped with multi-hop communication, as the consensus update rule \eqref{updaterule}, we can extend the common one (e.g., \cite{olfati2007consensus}). Let $u_i[k]$ denote the control input for node $i$ at time $k$. Each node updates as
\begin{equation}
\begin{aligned}
x_i[k+1]&=x_i[k]+u_i[k], \\
u_i[k]&=-\sum_{j\in \mathcal{N}_i^{l-}} a_{ij}[k](x_i[k]-x_j[k]).
\end{aligned}
\end{equation}
This system can be given in the compact form as
\begin{equation}\label{m1}
\begin{aligned}
x[k+1]&=x[k] +  u[k],\\
u[k]&=-L[k]x[k],
\end{aligned}
\end{equation}
where $x[k]\in \mathbb{R}^n$ and $u[k]\in \mathbb{R}^n$ are the state vector and control input vector, respectively, and $L[k]$ is the Laplacian matrix of the $l$-th power of $\mathcal{G}$ determined by the messages $m_{ij}[k], i\in \mathcal{V} \ \text{and} \ j\in \mathcal{N}_i^{l-}$. 
As a generalization of the one-hop result (e.g. \cite{bullo2009distributed}, \cite{mesbahi2010graph}), it is obvious that with $l$-hop communication, consensus is possible if $\mathcal{G}^l$ has a rooted spanning tree.

\subsection{Threat Model}\label{threatmodel}

We introduce the threat model adopted in this paper. In the network, the node set $\mathcal{V}$ is partitioned into the set of normal nodes $\mathcal{N}$ and the set of adversary nodes $\mathcal{A}$. The latter set $\mathcal{A}$ is unknown to the normal nodes at all times. Moreover, we constrain the class of adversaries as follows (see, e.g., \cite{leblanc2013resilient}):
\begin{definition}
	\textit{($f$-total set)}
	The set of adversary nodes $\mathcal{A}$ is said to be $f$-total
	if it contains at most $f$ nodes, i.e., $\left| \mathcal{A}\right| \leq f$.
\end{definition}

\begin{definition}
	\textit{(Malicious nodes)}
	An adversary node $i\in \mathcal{A}$ is said to be a malicious node
	if it can arbitrarily modify its own value and relayed values, but sends the same state
	and relayed values to its neighbors at each iteration. 
	It can also decide not to send any value.\footnote[2]{This behavior corresponds to the omissive/crash model \cite{Lynch}.}
\end{definition}

As commonly done in the literature \cite{leblanc2013resilient}, \cite{su2017reaching}, 
we assume that each normal node knows the value of $f$ and the topology information of the graph up to $l$ hops. 
Moreover, the malicious model is reasonable in applications such as wireless sensor networks, where neighbors' information is obtained
by broadcast communication. 
In the multi-hop setting studied in this paper, it is important to impose the following assumption.

\begin{assumption}\label{assumptionpath}
	Each malicious node $i$ cannot manipulate the path values in the 
	messages containing its own state $x_i[k]$ and those that it relays. 
\end{assumption}

This is introduced for ease of analysis, but is not a strong constraint. In fact, manipulating message paths can be easily detected and hence does not create problems. We show how this can be done in Section~II-E.

\subsection{Resilient Asymptotic Consensus}

We now introduce the type of consensus among the normal agents to be sought in this paper \cite{leblanc2013resilient}, \cite{su2017reaching},  \cite{dibaji2017resilient}.

\begin{definition}
	If for any possible sets and behaviors of the
	malicious agents and any state values of the normal
	nodes, the following two conditions are satisfied,
	then we say that the normal agents reach 
	resilient asymptotic consensus:
	
	\begin{enumerate}
		\item Safety: There exists a bounded safety interval $\mathcal{S}$ determined by the initial values of the normal agents such that $x_i[k] \in \mathcal{S}, \forall i \in \mathcal{N}, k \in \mathbb{Z}_+$. 
		\item Agreement: There exists a state $x^*\in \mathcal{S}$
		such that $\lim_{k\to \infty}x_i[k]=x^*,  \forall i\in \mathcal{N}$.
	\end{enumerate}
	
\end{definition}

The problem studied in this paper is to develop an MSR-based algorithm for agents that can make $l$-hop communication to reach resilient consensus under the $f$-total malicious model and to characterize conditions on the network topology for the algorithm to properly perform. 
Note that in general, for MSR-based algorithms with one-hop communication, resilient consensus can be achieved under the $f$-total model with the necessary and sufficient condition expressed in terms of the so-called graph robustness; see, e.g., \cite{dibaji2018resilient, leblanc2013resilient}, and the following sections for the definition of robust graphs and related discussions.

\subsection{Discussion on Manipulation in Message Path Information}

It is notable that multi-hop communication is vulnerable to false data injection in the information relayed by nodes, which can make the problem of resilient consensus more complicated than the one-hop case. Earlier, Assumption \ref{assumptionpath} was introduced stating that the malicious nodes however cannot manipulate the path information in messages that they relay. We briefly explain here how such attacks can be detected, inspired by the discussion in \cite{su2017reaching}.

Such detection requires that each node can identify the neighbor from which it receives each message, which is commonly assumed (e.g., \cite{su2017reaching}, \cite{dolev1982byzantine}). 
Moreover, there are many methods to realize this function in real-world applications. For instance, by using the encryption technique of the RSA algorithm \cite{rivest1978method}, each node can send out its value associated with a digital signature using its own private key. Then, using the sender’s public key, the receiver can confirm that this message is indeed sent by the particular sender.


In each iteration of the synchronous algorithm, there are three potential cases where a message sent to normal node $i$ is
manipulated in its path information:
(i) Node $i$ receives multiple messages along the same path $P$; (ii) it receives messages along an unknown path $P'$; or
(iii) it does not receive any message along a known path $P$. Note that a normal node receives only one message along each path in each iteration when no adversarial node is present.

For case (i), this faulty behavior is caused either by duplicating messages or by manipulating path information in messages. 
We show that in both situations, the receiving node $i$ can find that there is at least one faulty node in path $P$. It is obvious for the first situation. For the path manipulating situation, consider the case where a normal node $h$ receives a message $m =(w, P)$ directly from node $j$ but the path $P$ does not contain node $j$. Then node $h$ knows that node $j$ is faulty, and will not forward the message. 
This indicates that in general, if there is a sequence of faulty nodes along a path, then the last one in the sequence must keep its own index within the path information in the messages that it transmits.
Moreover, this argument also holds for case (ii), i.e., node $i$ knows that at least one node in path $P'$ is faulty.
Therefore, in cases (i) and (ii), from the perspective of node $i$, manipulating the message path data is equivalent to having a faulty node in $P$ or $P'$ sending additional messages with manipulated values, and thus it will remove any values in this path by the MW-MSR algorithm. 

For case (iii), either a faulty node does not send/forward the message $m$, or it manipulates the message path $P$. In the latter case, for node $i$, manipulating the message path is equivalent to having a faulty node in $P$ not sending/forwarding the message.

Actually, this analysis can be extended to the algorithms with asynchronous updates.
In each update of such algorithms, consider the following three path manipulating cases for node $i$: (i) Node $i$ receives multiple messages along one path $P$ at the same time step; (ii) it receives messages along an unknown path $P'$; or
(iii) it does not receive any message along path $P$ in a period of time $\tau$, where $\tau$ is the maximum time delay of normal agents. Note that in case (i) for the asynchronous algorithm, faulty nodes can send multiple messages along $P$ as long as these faulty messages do not arrive at node $i$ at the same time step and this behavior will not affect normal agents, since only the most recent values of multi-hop neighbors will be used in the asynchronous MW-MSR algorithm. The analysis of cases (ii) and (iii) is similar to that of the synchronous algorithm.

The above analysis is based on the assumption that there is no packet loss in the fault-free networks. In real-world applications, packet losses can happen even in fault-free networks. We note that there are methods to deal with this issue.
Packet losses in the communication between two normal nodes can also cause the situation of case (iii) mentioned above. Like the one-hop W-MSR algorithm, if a packet loss happens in the communication from neighbor $j$ to node $i$, then node $i$ may receive only $|\mathcal{N}_i|-1$
values at this particular time step and still remove $f$ largest and $f$ smallest received values; hence, node $i$ uses less information from normal nodes to update. This behavior will not violate the safety interval, but it may slow down the speed of consensus. If the packet loss behavior happens frequently in this transmission path, then node $i$ can consider this path containing faulty nodes.

\section{Multi-hop Weighted MSR Algorithm}

In this section, we introduce the multi-hop weighted MSR (MW-MSR) algorithm, which is designed to solve the resilient consensus problem under the multi-hop setting. We first introduce the notion of message cover which plays a key role in the trimming function of our MSR algorithm. Then we outline the structure of the MW-MSR algorithm and provide examples to illustrate the idea behind the algorithm.

The notion of message cover \cite{su2017reaching} is crucial in the update rule of our algorithm to be proposed in this section. It evaluates the effects of adversary nodes that can possibly manipulate the updates of normal nodes in a multi-hop communication setting. Its formal definition is given as follows.

\begin{definition} For a graph $\mathcal{G} = (\mathcal{V},\mathcal{E})$, let $\mathcal{M}$ be a set of messages transmitted over $\mathcal{G}$, and let $\mathcal{P}(\mathcal{M})$ be the set of message paths of all the messages in $\mathcal{M}$, i.e., $\mathcal{P}(\mathcal{M}) =\{\mathrm{path}(m):m \in \mathcal{M}\}$. A \textit{message cover} of $\mathcal{M}$ is a set of nodes $\mathcal{T}(\mathcal{M})\subset \mathcal{V}$ whose removal disconnects all message paths, i.e., for each path $P\in \mathcal{P}(\mathcal{M})$, we have $\mathcal{V}(P)\cap \mathcal{T}(\mathcal{M})\neq \emptyset$. In particular, a \textit{minimum} message cover of $\mathcal{M}$ is defined by
	\begin{equation*}
	\mathcal{T}^*(\mathcal{M})\in	\arg \min_{\substack{ \mathcal{T}(\mathcal{M}): \textup{ Cover of } \mathcal{M}}} 	\left|  \mathcal{T} (\mathcal{M})\right| . 
	\end{equation*}
\end{definition}
\vspace{0.12cm}

As a simple example, consider the set $\mathcal{M}$ of paths connecting node $i$ to node $j$ which do not overlap. Then, its message cover must contain at least one node per path.
Clearly, there may be multiple minimum message covers if the paths are of length greater than three.

\begin{algorithm}[t] 
	\caption{MW-MSR Algorithm} 
	
	1) At each time $k$, normal node $i$ 
	sends its own message to nodes in $\mathcal{N}_i^{l+}$.
	Then, it obtains messages of the nodes in $\mathcal{N}_i^{l-}$ and itself, whose set is denoted by $\mathcal{M}_i[k]$, and sorts the values in $\mathcal{M}_i[k]$ in an increasing order.
	
	2) (a) Define two subsets of $\mathcal{M}_i[k]$ based on the message values:
	\begin{equation*}
	\overline{\mathcal{M}}_i[k]=\{ m\in \mathcal{M}_i[k]: \mathrm{value}(m)> x_i[k]  \},
	\end{equation*}
	\begin{equation*}
	\underline{\mathcal{M}}_i[k]=\{ m\in \mathcal{M}_i[k]: \mathrm{value}(m)< x_i[k]  \}.
	\end{equation*}
	
	(b) Then, let $\overline{\mathcal{R}}_i[k]=\overline{\mathcal{M}}_i[k]$ if the cardinality of a minimum cover of $\overline{\mathcal{M}}_i[k]$ is less than $f$, i.e., $\left|  \mathcal{T}^* (\overline{\mathcal{M}}_i[k])\right| <f$. Otherwise, let $\overline{\mathcal{R}}_i[k]$ be the largest sized subset of $\overline{\mathcal{M}}_i[k]$ such that (i) for all $m\in \overline{\mathcal{M}}_i[k]\setminus \overline{\mathcal{R}}_i[k]$ and $m'\in \overline{\mathcal{R}}_i[k]$ we have $\mathrm{value}(m) \leq \mathrm{value}(m')$, and (ii) the cardinality of a minimum cover of $\overline{\mathcal{R}}_i[k]$ is exactly $f$, i.e., $\left|  \mathcal{T}^* (\overline{\mathcal{R}}_i[k])\right| =f$. 
	
	(c) Similarly, let $\underline{\mathcal{R}}_i[k]=\underline{\mathcal{M}}_i[k]$ if the cardinality of a minimum cover of $\underline{\mathcal{M}}_i[k]$ is less than $f$, i.e., $\left|  \mathcal{T}^* (\underline{\mathcal{M}}_i[k])\right| <f$. Otherwise, let $\underline{\mathcal{R}}_i[k]$ be the largest sized subset of $\underline{\mathcal{M}}_i[k]$ such that (i) for all $m\in \underline{\mathcal{M}}_i[k]\setminus \underline{\mathcal{R}}_i[k]$ and $m'\in \underline{\mathcal{R}}_i[k]$ we have $\mathrm{value}(m) \geq \mathrm{value}(m')$, and (ii) the cardinality of a minimum cover of $\underline{\mathcal{R}}_i[k]$ is exactly $f$, i.e., $\left|  \mathcal{T}^* (\underline{\mathcal{R}}_i[k])\right| =f$. 
	
	(d) Finally, let $\mathcal{R}_i[k]=\overline{\mathcal{R}}_i[k]\cup\underline{\mathcal{R}}_i[k]$.

	3) Node $i$ updates its value as follows:
	\begin{equation}
	x_i[k+1]=\sum_{m\in \mathcal{M}_i[k]\setminus \mathcal{R}_i[k]} a_{i}[k]\mathrm{value}(m),  \label{msrupdate}
	\end{equation}
	where $a_{i}[k]=1/(\left| \mathcal{M}_i[k]\setminus \mathcal{R}_i[k] \right| )$.
\end{algorithm}

Now, we are ready to introduce the structure of the synchronous \textit{MW-MSR algorithm} in Algorithm 1.
Note that the one-hop version of the MW-MSR algorithm (i.e., with $l=1$) is equivalent to the W-MSR algorithm in \cite{leblanc2013resilient}.
However, we can see that the difference between the MW-MSR algorithm and the W-MSR algorithm mainly lies in the trimming function in step 2 when $l\geq 2$. 
For general MSR algorithms of one-hop communication, the essential idea for the normal nodes to avoid being affected by adversary nodes is that 
each normal node $i$ will exclude the effects from $f$ neighbors with extreme values (possibly sent by faulty nodes).
This can guarantee that values outside the safety interval will not be used by any normal node at any time step. In the one-hop case, for each normal node $i$, the number of values received from such $f$ neighbors is exactly $f$, i.e., node $i$ will trim away $f$ largest and $f$ smallest values at each step. This is because each neighbor sends only one value of its own to node $i$ at each step under the typical assumptions made in MSR-related works \cite{leblanc2013resilient}, \cite{vaidya2012iterative}.

Under the multi-hop setting, the situation changes significantly even if we assume that each node can only send out one value of its own to its neighbors at each step. Since each node relays the values from different neighbors, normal node $i$ can receive more than one value from one direct neighbor. Thus, in the MW-MSR algorithm, normal node $i$ cannot just trim away $f$ largest and $f$ smallest values at each step. Instead, it needs to trim away the largest and smallest values from exactly $f$ nodes within $l$ hops away, which is the generalization of the essential idea in the one-hop W-MSR algorithm.

\begin{figure}[t]
	\centering
	\subfigure[]{
		\includegraphics[width=1.65in]{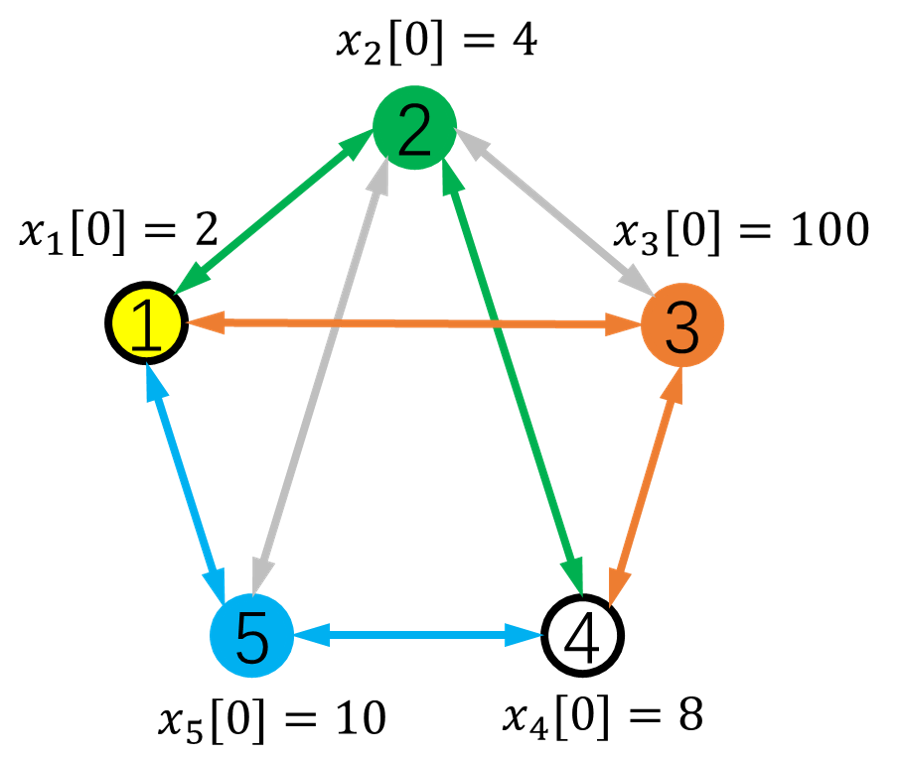}
	}
	\quad
	\vspace{-5pt}
	\subfigure[]{
		\includegraphics[width=1.1in]{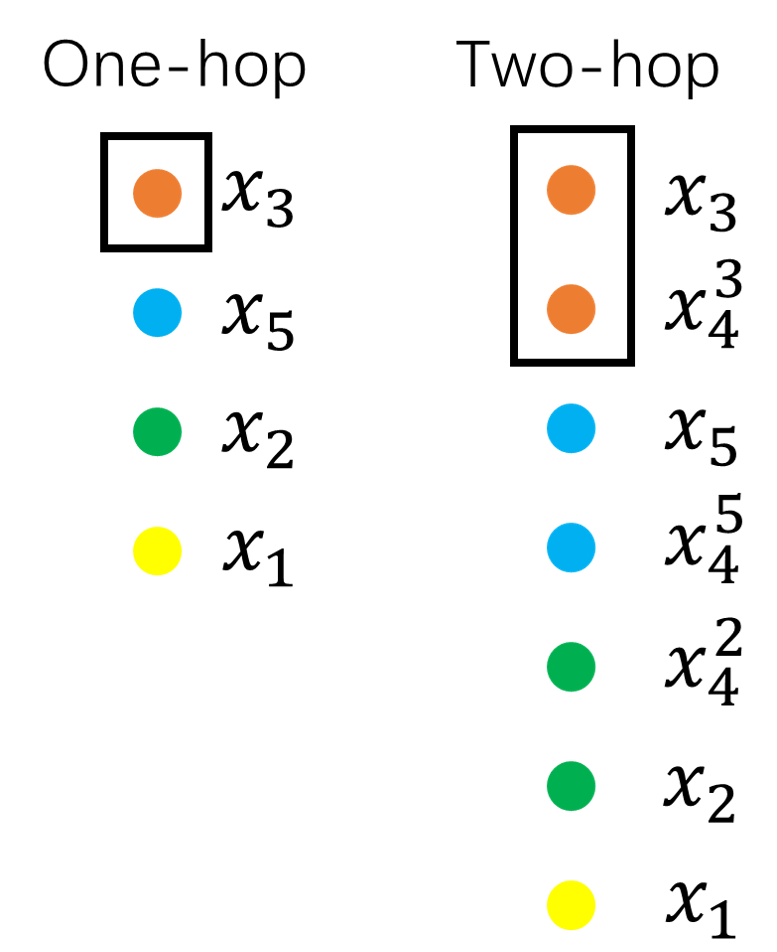}
	}
	\vspace{-4pt}
	\caption{(a) A 5-node graph. (b) Values removed by node 1 in one-hop and two-hop algorithms.}
	\label{one-hop-two-hop}
\end{figure}

	To characterize the number of the extreme values from exactly $f$ nodes for node $i$, the notion of minimum message cover (MMC) is designed. 
	Intuitively speaking, for normal node $i$, $\overline{\mathcal{R}}_i[k]$ and $\underline{\mathcal{R}}_i[k]$ are the largest sized sets of received messages containing very large and small values that may have been generated or tampered by $f$ adversary nodes, respectively. 
	Here, we focus on how $\underline{\mathcal{R}}_i[k]$ is determined, as $\overline{\mathcal{R}}_i[k]$ can be obtained in a similar way.
	When the cardinality of the MMC of set $\underline{\mathcal{M}}_i[k]$ is no more than $f$, node $i$ simply takes $\underline{\mathcal{R}}_i[k]=\underline{\mathcal{M}}_i[k]$. 
	Otherwise, node $i$ will check the first $f+1$ values of $\underline{\mathcal{M}}_i[k]$, and if the MMC of these values is of cardinality $f$, then it will check the first $f+2$ values of $\underline{\mathcal{M}}_i[k]$. This procedure will continue until for the first $f+h$ ($h\geq 1$) values of $\underline{\mathcal{M}}_i[k]$, the MMC of these values is of cardinality $f+1$. Then $\underline{\mathcal{R}}_i[k]$ is taken as the first $f+h-1$ values of $\underline{\mathcal{M}}_i[k]$. 
	After sets $\overline{\mathcal{R}}_i[k]$ and $\underline{\mathcal{R}}_i[k]$ are determined, in the control input $u_i(k)$ computed by \eqref{msrupdate} in step 3, values in these sets are excluded. Note that this control is consistent with the one in \eqref{updaterule} when $f=0$.

We also illustrate the determination of such subsets through a simple example.
Consider the network in Fig. \ref{one-hop-two-hop} with initial states $x[0]=[2\ 4\ 100\ 8\ 10]^T$, where node 3 is set to be malicious ($f=1$) and constantly broadcasts the value 100 as its own value as well as those in the relayed messages. We look at node 1 at time $k=0$ and drop the time index $k$. In the one-hop version of the MW-MSR algorithm, the input for node 1 is $\{x_1, x_2, x_5, x_3\}$, and it chooses $\overline{\mathcal{R}}_1[0]=\{x_3=100\}$ and $\underline{\mathcal{R}}_1[0]=\emptyset$ in step 2 of the algorithm (since the value $x_1$ is the smallest in the input). 

In the two-hop version of the MW-MSR algorithm, node 1 receives the state values $x_2, x_3,$ and $ x_5$ directly from nodes 2, 3, and 5, respectively. Moreover, it receives the relayed values of node 4 through nodes 2, 3, and 5, denoted by $x_4^2, x_4^3,$ and $ x_4^5$. Then 
the sorted input for node 1 is $\{x_1, x_2, x_4^2,  x_4^5, x_5, x_4^3, x_3\}$, and node 1 checks the MMC of the subset of the largest values starting from the $(f+1)$th value (since the values before the $f$th one are definitely removed by node 1). First, it evaluates $\{x_4^3, x_3\}$, and the MMC of this message set is the node set $\{3\}$ with cardinality 1. Then, it evaluates $\{x_5, x_4^3, x_3\}$ and the MMC of this message set can be found to be the node set $\{3, 5\}$ with cardinality 2, which is bigger than $f=1$. As a result, node 1 confirms that $\{x_4^3, x_3\}$ is the largest sized set of the large values that may have been generated or tampered by $f$ adversary nodes. Therefore, node 1 chooses $\overline{\mathcal{R}}_1[0]=\{x_4^3=100, x_3=100\}$ and $\underline{\mathcal{R}}_1[0]=\emptyset$ in step 2 of the algorithm.

In this paper, the key question to be addressed is, under what conditions on the network can the above algorithm achieve resilient asymptotic consensus? Our approach is to develop a generalization of the results and analysis of the one-hop case. In particular, this necessitates us to extend the notion of graph robustness by taking account of multi-hop communication. This is carried out in the next section.

\section{Robustness with Multi-hop Communication}

In this section, we discuss the notion of graph robustness.
This notion was first introduced in \cite{leblanc2013resilient}, which corresponds
to the one-hop case. We provide its multi-hop generalization,
which plays a crucial role in our resilient consensus problem. 

As in the definition of robustness for the one-hop case \cite{leblanc2013resilient}, 
we start with the definition of $r$-reachability. Specifically, 
when the communication is only one-hop, 
a node set is said to be $r$-reachable if it contains
at least one node that has at least $r$ incoming neighbors outside this set. This notion basically captures the capability of a set to be influenced by the outside of the set when the nodes apply the MSR algorithms with parameter $r-1$.


\begin{figure}[t]
	\centering
	\subfigure[]{
		\includegraphics[width=1.25in]{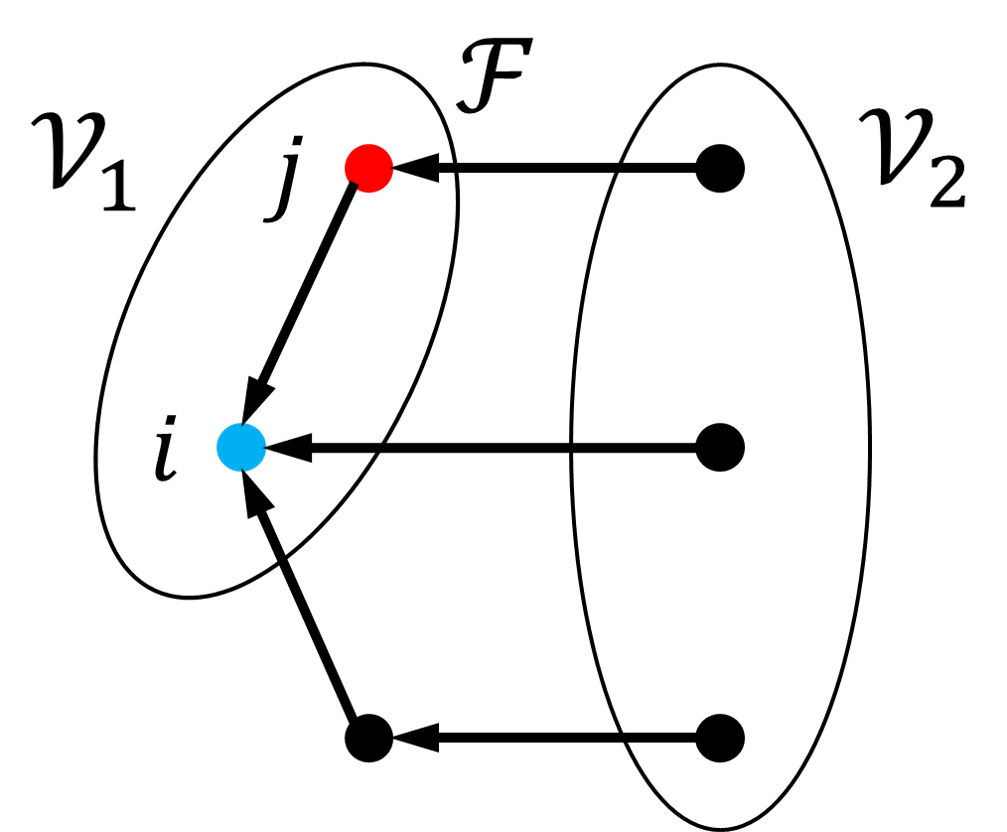}
		
	}
	\quad
	\vspace{-3pt}
	\subfigure[]{
		\includegraphics[width=1.25in]{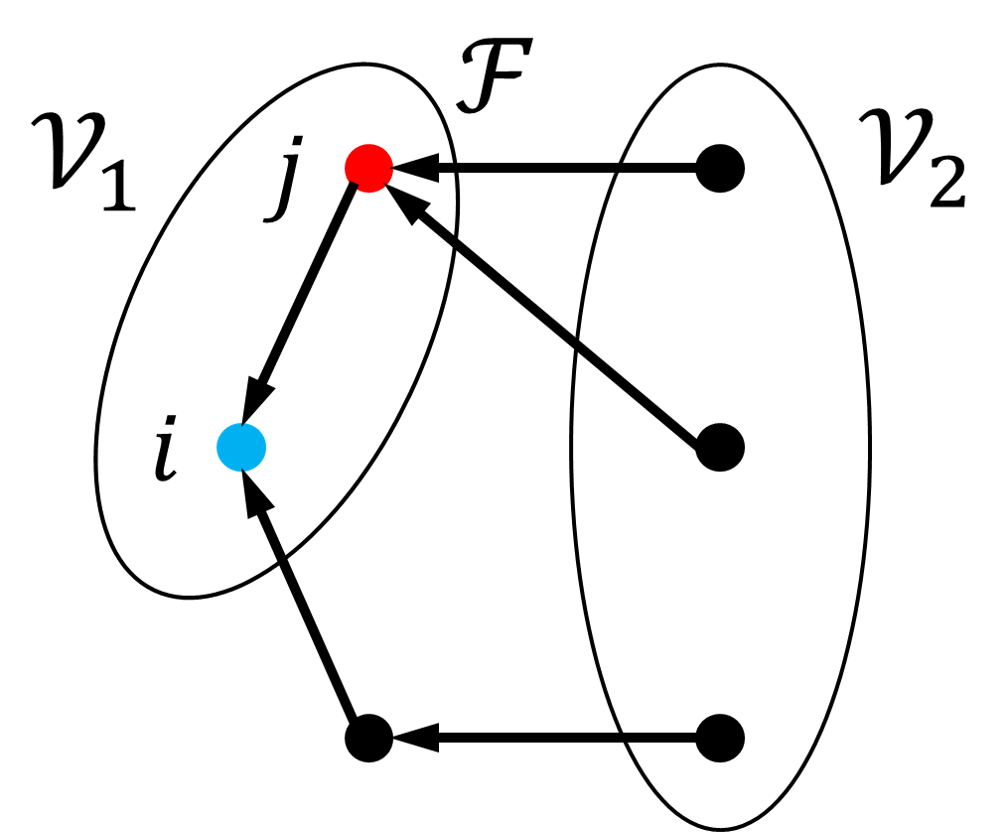}
		
	}
	\vspace{-3pt}
	\caption{(a) Node $i$ has two independent paths originating from the outside of $\mathcal{V}_1$ with respect to set $\mathcal{F}=\{j\}$. (b) Node $i$ only has one independent path sharing the same property.}
	\label{3-reachable}
	\vspace*{-3.5mm}
\end{figure}

In generalizing this notion to the case of multi-hop communication, it is crucial to extend the above-mentioned capability. In particular, the influence from the outside of the set may come from remote nodes and are not restricted to direct neighbors. With a slight change, in the multi-hop setting, we define the $r$-reachability as follows.

\begin{definition}
	Consider a graph $\mathcal{G} = (\mathcal{V},\mathcal{E})$ with $l$-hop communication. For $r\in \mathbb{Z}_+$, set $\mathcal{F}\subset \mathcal{V}$, and nonempty set $\mathcal{V}_1\subset \mathcal{V}$, a node $i\in \mathcal{V}_1$
	is said to be $r$-reachable with $l$-hop communication with respect to $\mathcal{F}$ if it has at least $r$ independent paths (i.e., only node $i$ is the common node in these paths) of at most $l$ hops originating from nodes outside $\mathcal{V}_1$ and all these paths do not have any node in set $\mathcal{F}$ as an intermediate node (i.e., the nodes in $\mathcal{F}$ can be source or destination nodes in these paths).
\end{definition}

Intuitively speaking, for any set $\mathcal{F}\subset \mathcal{V}$ and for node $i\in \mathcal{V}_\text{1}$ to have the above-mentioned property, there should be at least $r$ source nodes outside $\mathcal{V}_\text{1}$ and at least one independent path of length at most $l$ hops from each of the $r$ source nodes to node $i$, where such a path does not contain any internal node from the set $\mathcal{F}$.
It is clear that for the one-hop case, to count the independent paths simply becomes to count the in-neighbors.

As an example, consider the graph in Fig.~\ref{3-reachable}(a), where the two node sets $\mathcal{V}_1$ and $\mathcal{V}_2$ are taken as indicated and the set $\mathcal{F}=\{j\}$. Here, node $i\in \mathcal{V}_1$ has two independent paths of at most two hops originating from the nodes outside $\mathcal{V}_1$ with respect to the set $\mathcal{F}$. In contrast, in a similar graph shown in Fig.~\ref{3-reachable}(b), such a property is lost and node $i$ has only one path from the outside of $\mathcal{V}_1$ w.r.t. the set $\mathcal{F}$.


\begin{figure}[t]
	\centering
	\subfigure[]{
		\includegraphics[width=0.85in]{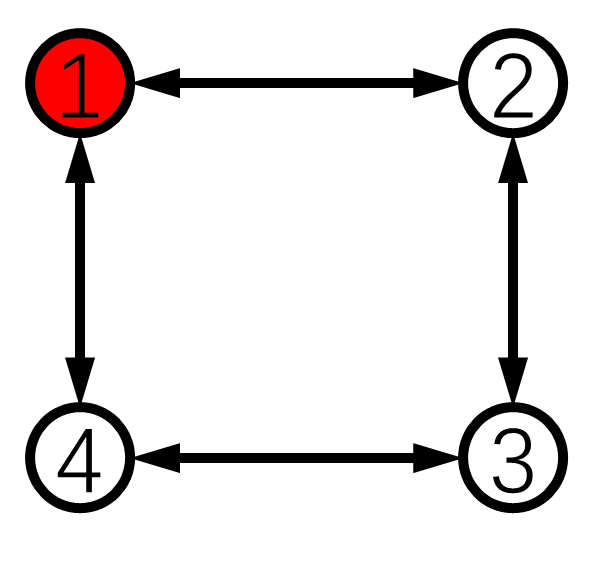}
		
	}
	\quad
	\subfigure[]{
		
		\includegraphics[width=1.35in]{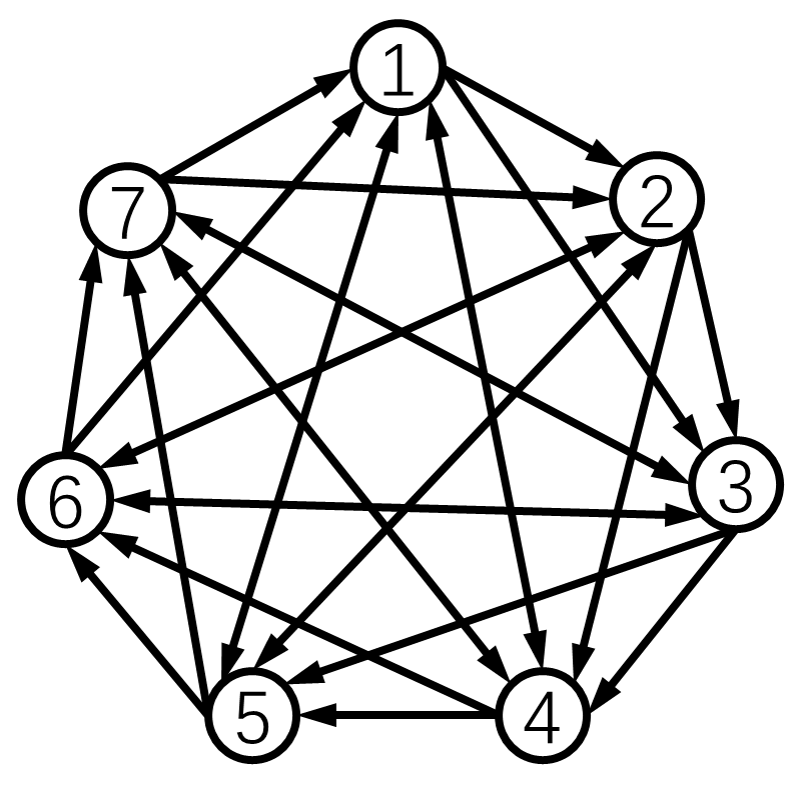}
	}
	\vspace{-3pt}
	\caption{(a) The graph is not $(2, 2)$-robust with one hop, but it is $(2,2)$-robust with $2$ hops. (b) The graph is $(2,2)$-robust with one hop and $(3,3)$-robust with $2$ hops.}
	\label{graph1}
	\vspace*{-3.5mm}
\end{figure}

Now, we are ready to generalize this notion to the entire graph and define $r$-robustness and $(r,s)$-robustness with $l$ hops as follows.


\begin{definition}\label{rs-robust} A directed graph $\mathcal{G} = (\mathcal{V},\mathcal{E})$ is said to be $(r,s)$-robust with $l$ hops with respect to a given set $\mathcal{F}\subset \mathcal{V}$,
	if for every pair of nonempty disjoint subsets $\mathcal{V}_\text{1},\mathcal{V}_\text{2}\subset \mathcal{V}$, at least one of the following conditions holds:
	
	
	\begin{enumerate}
		\item $\mathcal{Z}_{\mathcal{V}_1}^r=\mathcal{V}_1$,
		\item $\mathcal{Z}_{\mathcal{V}_2}^r=\mathcal{V}_2$,
		\item  $\left| \mathcal{Z}_{\mathcal{V}_1}^r\right| +\left| \mathcal{Z}_{\mathcal{V}_2}^r\right| \geq s$,
	\end{enumerate}
	
	\noindent where $\mathcal{Z}_{\mathcal{V}_a}^\textit{r}$ is the set of nodes in $\mathcal{V}_\textit{a}$ ($a=1,2$) that are $r$-reachable with $l$-hop communication with respect to $\mathcal{F}$.
	Moreover, if the graph $\mathcal{G}$ satisfies this property with respect to any set $\mathcal{F}$ following the $f$-total model, then we say that $\mathcal{G}$ is $(r,s)$-robust with $l$ hops under the $f$-total model. 
	When it is clear from the context, we will just say $\mathcal{G}$ is $(r,s)$-robust with $l$ hops.
	Furthermore, when the graph is $(r,1)$-robust with $l$ hops, we also say it is $r$-robust with $l$ hops.
\end{definition}

Generally, robustness of a graph increases as the relay range $l$ increases.
We will illustrate this point using the graphs in Fig.~\ref{graph1}. 
Note that graph robustness with multi-hop communication needs to be checked for every possible set $\mathcal{F}$ satisfying the $f$-total model. In the context of this paper, we are interested to check $(r,s)$-robustness under $(r-1)$-total model.

First, the graph in Fig.~\ref{graph1}(a) is not $(2,2)$-robust with one hop since for the sets $\{1,2\}$ and $\{3,4\}$, none of the nodes has two in-neighbors outside the corresponding set. However, under the 1-total model, this graph is $(2,2)$-robust with $2$ hops. For instance, we first check the condition for the set $\mathcal{F}=\{1\}$. For sets $\{1,2\}$ and $\{3,4\}$, all of the nodes 1, 3 and 4 have two independent paths of at most two hops originating from the outside of the set with node 1 not being the internal node. After checking all the possible subsets of $\mathcal{V}$, one can confirm that this graph is $(2,2)$-robust with $2$ hops with respect to set $\mathcal{F}=\{1\}$. Since this graph is actually symmetric for each node, we can conclude that for the set $\mathcal{F}=\{v\}$  ($v=2,3,4$), this graph is $(2,2)$-robust with $2$ hops with respect to this set. Hence, this graph is $(2,2)$-robust with $2$ hops.

Next, we look at the graph in Fig.~\ref{graph1}(b).
When $l=1$, this graph is $(2,2)$-robust with $1$ hop but not $(3,3)$-robust with $1$ hop.
When $l=2$, it becomes $(3,3)$-robust with $2$ hops. 
It is further noted that the level of robustness is constrained
by the in-degrees of the nodes. 
In the graph of Fig.~\ref{graph1}(b), each node has four incoming edges. As a result,
for $r\geq 4$, this graph cannot be $(r, s)$-robust with any number of hops. 
We elaborate more on this aspect in Section~VII.

\section{Synchronous Network}

In this section, we analyze the MW-MSR algorithm under synchronous updates, i.e., each normal node will update its value using those received from all of its $l$-hop neighbors in a synchronous manner with other nodes at each time $k$.

\subsection{Matrix Representation}
First, we write the system in a matrix form.
For ease of notation in our analysis, reorder the node indices so that the normal nodes take indices $1,\dots,n_N$ and the malicious nodes are $n_N+1,\dots,n$. Then the state vector and control input vector can be written as 
\begin{equation} 
x[k]=\begin{bmatrix} x^N[k]  \\ x^F[k] \end{bmatrix}, \medspace u[k]=\begin{bmatrix} u^N[k]  \\ u^F[k] \end{bmatrix}, 
\end{equation}
where the superscript $N$ stands for normal and $F$ for faulty. Regarding the control inputs $u^N[k]$ and $u^F[k]$, the normal nodes follow \eqref{msrupdate} while the malicious nodes may not. Hence, they can be expressed as
\begin{equation}
\begin{array}{lll} 
u^N[k] =-L^N[k]x[k],\\
u^F[k] : \textup{arbitrary,}
\end{array}
\end{equation}
where $L^N[k]\in \mathbb{R}^{n_N\times n}$ is the matrix formed by the first $n_N$ rows of $L[k]$ associated with normal nodes. The row sums of this matrix $L^N[k]$ are zero as in $L[k]$.
Thus, we can rewrite the system as
\begin{equation} \label{system1}
x[k+1]=\left(  I_n -   \begin{bmatrix} L^N[k]  \\ 0 \end{bmatrix} \right)  x[k] +   \begin{bmatrix} 0  \\ I_{n_F} \end{bmatrix}u^F[k].
\end{equation}

\vspace{0.12cm}

\subsection{Consensus Analysis with Multi-hop Communication}

Now we are ready to provide a necessary and sufficient condition for resilient consensus applying the synchronous MW-MSR algorithm. The following theorem is the first main contribution of this paper. 

\begin{theorem}\label{syn}
	Consider a directed graph $\mathcal{G} = (\mathcal{V},\mathcal{E})$ with $l$-hop communication, where each normal node updates its value according to the synchronous MW-MSR algorithm with parameter
	$f$. Under the $f$-total malicious model, resilient asymptotic
	consensus is achieved if and only if the network topology is
	$(f + 1, f + 1)$-robust with $l$ hops.
	Moreover, the safety interval is given by 
	\begin{equation*}  
	\mathcal{S}=\big[ \min x^N[0], \max x^N[0] \big].        	
	\end{equation*}
\end{theorem}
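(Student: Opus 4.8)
The plan is to prove both directions by generalizing the one-hop W-MSR analysis of \cite{leblanc2013resilient}, with the message-cover trimming of Algorithm~1 replacing the plain removal of the $f$ largest and $f$ smallest values. Throughout I would fix the actual adversary set $\mathcal{A}$ (with $|\mathcal{A}|\le f$), work with the reordered system \eqref{system1}, and track the scalar sequences $\overline{m}[k]=\max_{i\in\mathcal{N}}x_i[k]$ and $\underline{m}[k]=\min_{i\in\mathcal{N}}x_i[k]$. For \emph{sufficiency} I would first establish Safety: any message reaching a normal node $i$ with value exceeding $\overline{m}[k]$ must have a malicious node on every one of its paths (an honest relay never alters a value and no normal source exceeds $\overline{m}[k]$), so the set of all such ``super-maximal'' messages admits a cover inside $\mathcal{A}$, of size at most $f$; being the top of the sorted list, it is contained in $\overline{\mathcal{R}}_i[k]$ and is trimmed, and symmetrically below $\underline{m}[k]$. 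Hence every surviving value lies in $[\underline{m}[k],\overline{m}[k]]$, the update \eqref{msrupdate} is a convex combination of such values, $\overline{m}[k]$ is non-increasing and $\underline{m}[k]$ non-decreasing, both converge to limits $\overline{m}^\ast\ge\underline{m}^\ast$, and the safety interval is $\mathcal{S}=[\min x^N[0],\max x^N[0]]$.

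Next I would prove Agreement by contradiction, assuming $\overline{m}^\ast>\underline{m}^\ast$. Following the one-hop template, I would fix a small $\epsilon_0$ and a strictly decreasing positive sequence $\{\epsilon_t\}$ from a recursion of the form $\epsilon_{t+1}=\alpha\epsilon_t-(1-\alpha)\epsilon_0$, with $\epsilon_0$ chosen so that finitely many terms remain positive. After a time $k_0$ past which $\overline{m}[k]<\overline{m}^\ast+\epsilon_0$ and $\underline{m}[k]>\underline{m}^\ast-\epsilon_0$, I would define the shrinking extreme sets $\mathcal{X}_1(t)=\{i\in\mathcal{N}:x_i[k_0+t]>\overline{m}^\ast-\epsilon_t\}$ and $\mathcal{X}_2(t)=\{i\in\mathcal{N}:x_i[k_0+t]<\underline{m}^\ast+\epsilon_t\}$, which are disjoint and initially nonempty. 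Applying $(f+1,f+1)$-robustness with $l$ hops with respect to $\mathcal{F}=\mathcal{A}$ to the pair $\mathcal{X}_1(t),\mathcal{X}_2(t)$, one of the three conditions holds, and in every case their union contains a normal node that is $(f+1)$-reachable with respect to $\mathcal{A}$. For such a node $i\in\mathcal{X}_1(t)$ there are $f+1$ node-disjoint (except at $i$) incoming paths from outside $\mathcal{X}_1(t)$ with no malicious intermediate; since at most $f$ of their distinct sources lie in $\mathcal{A}$, at least one path is entirely normal and delivers a genuine value $\le\overline{m}^\ast-\epsilon_t$. I would then show that such a value survives the trimming and enters the convex update with weight at least $\alpha$, forcing $x_i[k_0+t+1]\le\overline{m}^\ast-\epsilon_{t+1}$, so $i$ leaves the set at the tighter threshold. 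Arguing symmetrically for $\mathcal{X}_2(t)$, the quantity $|\mathcal{X}_1(t)|+|\mathcal{X}_2(t)|$ strictly decreases until one set empties; an empty $\mathcal{X}_1$ means every normal value has fallen below $\overline{m}^\ast-\epsilon_t$, contradicting $\overline{m}[k]\to\overline{m}^\ast$ (and symmetrically for $\mathcal{X}_2$). Hence $\overline{m}^\ast=\underline{m}^\ast$.

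The hard part is precisely this retention claim, where the multi-hop setting and the $f$-total \emph{malicious} model interact subtly. On the one hand, the $f+1$ node-disjoint incoming paths have a minimum message cover of size $f+1$, while each trimmed set has a cover of size at most $f$, which is the mechanism that should prevent all outside influence from being discarded. On the other hand, a naive single-node version is too strong: the genuine values reaching $i$ all lie on one side of $x_i[k]$ and could in principle be covered by $f$ nodes, so the argument must be carried out for the two extreme sets \emph{jointly}, exploiting that malicious nodes broadcast one common own value (and one common relayed value) and hence cannot simultaneously shield the upper set $\mathcal{X}_1$ and the lower set $\mathcal{X}_2$. I expect the delicate bookkeeping—turning the robustness count $|\mathcal{Z}_{\mathcal{V}_1}^{f+1}|+|\mathcal{Z}_{\mathcal{V}_2}^{f+1}|\ge f+1$ into a guarantee that the adversaries' budget is exceeded so that the union must shrink—to be the crux of the whole proof.

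Finally, for \emph{necessity} I would argue contrapositively: if $\mathcal{G}$ is not $(f+1,f+1)$-robust with $l$ hops under the $f$-total model, fix a violating set $\mathcal{F}$ with $|\mathcal{F}|\le f$ and disjoint nonempty $\mathcal{V}_1,\mathcal{V}_2$ for which all three conditions fail, so that $\mathcal{Z}^{f+1}_{\mathcal{V}_1}\neq\mathcal{V}_1$, $\mathcal{Z}^{f+1}_{\mathcal{V}_2}\neq\mathcal{V}_2$, and $|\mathcal{Z}^{f+1}_{\mathcal{V}_1}|+|\mathcal{Z}^{f+1}_{\mathcal{V}_2}|\le f$. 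Designating $\mathcal{F}$ as the malicious set and initializing the nodes of $\mathcal{V}_1$ at a high value $b$, those of $\mathcal{V}_2$ at a low value $a<b$, and the rest strictly in between, I would show that a normal non-$(f+1)$-reachable node in $\mathcal{V}_1$ sees every message originating outside $\mathcal{V}_1$ covered by at most $f$ nodes, so the message-cover trimming removes all such values and the node never uses anything below $b$; it therefore stays at $b$, and symmetrically a normal node of $\mathcal{V}_2$ stays at $a$. The two groups never agree, so resilient consensus fails, which establishes that robustness with $l$ hops is also necessary and completes the proof.
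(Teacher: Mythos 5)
Your overall skeleton (safety via monotonicity of the normal extremes, agreement by contradiction with a shrinking $\epsilon_\gamma$ recursion and shrinking extreme sets, necessity via the two-set initialization with the few reachable nodes made malicious) is exactly the paper's. But the step you yourself flag as the crux --- that a node $i$ in the upper extreme set must retain at least one value $\le \overline{\omega}-\epsilon_t$ after trimming --- is left open in your proposal, and the mechanism you guess for closing it (a joint argument over $\mathcal{X}_1$ and $\mathcal{X}_2$ exploiting that malicious nodes broadcast a single common value and so cannot shield both sets at once) is not what is needed and is not how the paper proceeds. The difficulty you run into is self-inflicted: you define $\mathcal{X}_1(t),\mathcal{X}_2(t)$ as subsets of $\mathcal{N}$ only. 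Then a malicious source sitting outside $\mathcal{X}_1(t)$ may broadcast a huge value, its message arrives at $i$ \emph{above} $x_i[k]$, and you lose it from the count of node-disjoint low-valued messages; you are left with possibly only one "entirely normal" path, and a single message can trivially be covered by $f$ nodes, so nothing prevents it from being trimmed. The paper avoids this by defining $\mathcal{Z}_1(k,\epsilon_t)$ and $\mathcal{Z}_2(k,\epsilon_t)$ over all of $\mathcal{V}$: every source of the $f+1$ independent paths lying outside $\mathcal{Z}_1$ --- malicious or not --- broadcasts a value at most $\overline{\omega}-\epsilon_0$ (the malicious model guarantees each adversary has one well-defined broadcast value, so it can be consistently classified), these values are relayed unchanged because the paths have no adversarial intermediate, hence all $f+1$ messages land in $\underline{\mathcal{M}}_i[k]$. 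Since the $f+1$ paths are node-disjoint except at $i$, any cover of these $f+1$ messages has cardinality at least $f+1$, whereas $\underline{\mathcal{R}}_i[k]$ has a minimum cover of cardinality at most $f$; therefore at least one of them survives and enters the convex update with weight at least $\alpha$. No two-sided bookkeeping and no inequality of the form $|\mathcal{Z}_{\mathcal{V}_1}^{f+1}|+|\mathcal{Z}_{\mathcal{V}_2}^{f+1}|\ge f+1$ about exceeding the adversaries' budget is invoked at this point; the condition $3)$ of robustness is used only to guarantee that a \emph{normal} $(f+1)$-reachable node exists in the union of the two sets.

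Two smaller points. First, your recursion $\epsilon_{t+1}=\alpha\epsilon_t-(1-\alpha)\epsilon_0$ cannot be salvaged by choosing $\epsilon_0$: it is homogeneous in $\epsilon_0$, and $\epsilon_1=(2\alpha-1)\epsilon_0<0$ whenever $\alpha<1/2$. You need the paper's two-parameter version, $\epsilon_{\gamma}=\alpha\epsilon_{\gamma-1}-(1-\alpha)\epsilon$ with a separate $\epsilon<\epsilon_0\alpha^{n_N}/(1-\alpha^{n_N})$ tied to the time $k_\epsilon$ after which $\overline{x}^N[k]<\overline{\omega}+\epsilon$. Second, in the necessity direction you designate $\mathcal{F}$ itself as the adversary set; the paper instead makes the (at most $f$) nodes of $\mathcal{Z}^{f+1}_{\mathcal{V}_1}\cup\mathcal{Z}^{f+1}_{\mathcal{V}_2}$ malicious, which is what guarantees via $|\mathcal{Z}^{f+1}_{\mathcal{V}_a}|<|\mathcal{V}_a|$ that a normal node survives in each of $\mathcal{V}_1$ and $\mathcal{V}_2$ and stays put; with your choice you would additionally have to rule out $\mathcal{V}_a\subseteq\mathcal{F}$.
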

\vspace{0.12cm}

\begin{proof}
	\textit{(Necessity)} If $\mathcal{G}$ is not $(f +1, f +1)$-robust with $l$ hops, then
	there are nonempty, disjoint subsets $\mathcal{V}_1, \mathcal{V}_2\subset \mathcal{V}$ such that none of
	the conditions in Definition \ref{rs-robust} holds. Suppose that the initial value of
	each node in $\mathcal{V}_1$ is $a$ and each node in $\mathcal{V}_2$ takes $b$, with $a < b$.
	Let all other nodes have initial values taken from the interval
	$(a, b)$. Since $| \mathcal{Z}_{\mathcal{V}_1}^{f+1}| +| \mathcal{Z}_{\mathcal{V}_2}^{f+1}| \leq f$, suppose that all nodes in $\mathcal{Z}_{\mathcal{V}_1}^{f+1}$ and $\mathcal{Z}_{\mathcal{V}_2}^{f+1}$ are malicious and take constant values.
	Then there is still at least one
	normal node in both $\mathcal{V}_1$ and $\mathcal{V}_2$ since $| \mathcal{Z}_{\mathcal{V}_1}^{f+1}| < \left| \mathcal{V}_1\right|$ and $| \mathcal{Z}_{\mathcal{V}_2}^{f+1}| < \left| \mathcal{V}_2\right|$, respectively. Then these normal nodes remove all the values of incoming neighbors outside of their respective sets since the message cover of these values has cardinality equal to $f$ or less. According to the synchronous MW-MSR algorithm, such normal nodes will keep their values and consensus cannot be achieved.
	
	\textit{(Sufficiency)} 
	First, we show that the safety condition of resilient consensus is satisfied. 
	Let $\overline{x}^N[k]$ and $\underline{x}^N[k]$ to be
	the maximum and minimum values of the normal nodes at
	time $k$, respectively. 
	We can show that $\overline{x}^N[k]$ is monotonically nonincreasing and $\underline{x}^N[k]$ is monotonically nondecreasing, and thus each of them has some limit.
	This can be directly shown from the definitions and the facts that the values 
	used in the MW-MSR update rule always lie within the interval $\big[ \underline{x}^N[k], \overline{x}^N[k] \big] \subseteq \mathcal{S}$ for $k\geq 0$.
	Since at each time $k$, in step 2 of Algorithm 1, node $i$ wipes out the possibly manipulated values from at most $f$ nodes within $l$ hops.
	Moreover, the update rule \eqref{system1} uses a convex combination of the values in $\big[ \underline{x}^N[k], \overline{x}^N[k] \big]$. 
	Therefore, the safety condition is satisfied.

	Then, we denote the limits of $\overline{x}^N[k]$ and $\underline{x}^N[k]$ by $\overline{\omega}$ and $\underline{\omega}$,
	respectively. We will prove by contradiction to show that $\overline{\omega}=\underline{\omega}$, and thus the normal nodes will reach consensus.
	Suppose that $\overline{\omega} > \underline{\omega}$. We can then take $\epsilon_0 > 0$ such that $\overline{\omega}-\epsilon_0 > \underline{\omega}+\epsilon_0$. 
	Fix $\epsilon<\epsilon_0\alpha^{n_N}/(1-\alpha^{n_N})$, where $0<\epsilon<\epsilon_0$ and $\alpha$ is the minimum of all $a_{i}[k]$ in step 3 of the MW-MSR algorithm.
	For $1\leq \gamma \leq n_N$, define $\epsilon_\gamma$ recursively as 
	\begin{equation*}
	\epsilon_{\gamma}= \alpha\epsilon_{\gamma-1}-(1-\alpha)\epsilon.
	\end{equation*}
	So we have $0 < \epsilon_{\gamma} < \epsilon_{\gamma-1}\leq \epsilon_0$ for all $\gamma$, since it holds that
	\begin{equation}\label{epsilon_positive}
	\begin{aligned}
	\epsilon_\gamma &= \alpha\epsilon_{\gamma-1}-(1-\alpha)\epsilon
	= \alpha^\gamma\epsilon_0-(1-\alpha^\gamma)\epsilon\\
	&\geq \alpha^{n_N}\epsilon_0-(1-\alpha^{n_N})\epsilon>0. 
	\end{aligned} 
	\end{equation}
	
	At any time step $k$ and for any $\epsilon_t>0$, define two sets: 
	\begin{equation*}
	\begin{aligned}
	\mathcal{Z}_1(k,\epsilon_t)&=\{i\in \mathcal{V}: x_i[k]>\overline{\omega}-\epsilon_t\},\\
	\mathcal{Z}_2(k,\epsilon_t)&=\{i\in \mathcal{V}: x_i[k]<\underline{\omega}+\epsilon_t\}.
	\end{aligned}
	\end{equation*}
	By the definition of $\epsilon_0$, $\mathcal{Z}_1(k,\epsilon_0)$ and $\mathcal{Z}_2(k,\epsilon_0)$ are disjoint.
	
	Let $k_\epsilon$ be the time such
	that $\overline{x}^N[k] < \overline{\omega}+\epsilon $ and $\underline{x}^N[k] > \underline{\omega}-\epsilon$, $\forall k\geq k_\epsilon$. Such a $k_\epsilon$ exists since $\overline{x}^N[k]$ and $\underline{x}^N[k]$ converge to $\overline{\omega}$ and $\underline{\omega}$, respectively, in monotonic manners as discussed above.
	Consider the nonempty and disjoint sets $\mathcal{Z}_1(k_\epsilon,\epsilon_0)$ and $\mathcal{Z}_2(k_\epsilon,\epsilon_0)$. Notice that
	the network is $(f + 1, f + 1)$-robust with $l$ hops w.r.t. any set $\mathcal{F}$ following the $f$-total model and the set of malicious nodes $\mathcal{A}$ also satisfies the $f$-total model. Hence, the network is $(f + 1, f + 1)$-robust with $l$ hops w.r.t. the set $\mathcal{A}$ and at least one of the three conditions in Definition \ref{rs-robust} holds.
	Also notice that the normal node with value $\overline{x}^N[k_\epsilon]$ will definitely be in set $\mathcal{Z}_1(k_\epsilon,\epsilon_0)$, and it is similar for the case of $\mathcal{Z}_2(k_\epsilon,\epsilon_0)$.
	Hence, all nodes in either $\mathcal{Z}_1(k_\epsilon,\epsilon_0)$ or $\mathcal{Z}_2(k_\epsilon,\epsilon_0)$ have the $(f+1)$-reachable property, or the union of the two sets contains at least $f + 1$ nodes having the $(f+1)$-reachable property. Since there are at most $f$ malicious nodes,
	for all cases, there must exist a normal node in the union of $\mathcal{Z}_1(k_\epsilon,\epsilon_0)$ and $\mathcal{Z}_2(k_\epsilon,\epsilon_0)$ such that it has at least $f +1$ independent paths originating from different nodes outside of its set and these paths do not have any internal node in $\mathcal{A}$.

		Suppose that normal
		node $i\in \mathcal{Z}_1(k_\epsilon,\epsilon_0)\cap \mathcal{N}$ has the $(f+1)$-reachable property. Thus, node $i$ has at least $f+1$ neighbors within $l$ hops outside set $\mathcal{Z}_1(k_\epsilon,\epsilon_0)$, i.e., the values of these neighbors are smaller than $x_i[k_\epsilon]$ and are at
		most equal to $\overline{\omega}- \epsilon_0$. Moreover,
		the original values of these multi-hop neighbors of node $i$ will definitely reach node $i$ even if the source nodes are malicious (since the internal nodes of these paths are all normal and they relay the values as received, without making any changes). 
		Hence, node $i$ will use at least one of these values to update its own. This is because in step 2(c) of Algorithm 1, 
		node $i$ will remove values lower than its own
		value of which the cardinality of the minimum message cover
		is at most $f$. As a result, among the neighbors within $l$ hops outside set $\mathcal{Z}_1(k_\epsilon,\epsilon_0)$, the values from up to $f$ of them will be disregarded by node $i$.

	Now, in Algorithm 1, the update rule \eqref{msrupdate} of step 3 is applied. Here, each coefficient of the neighbors is lower bounded by $\alpha$. 
	Since the largest value that node $i$ will use at time $k_\epsilon$ is $\overline{x}^N[k_\epsilon]$, placing the largest possible weight on $\overline{x}^N[k_\epsilon]$ produces
	\begin{equation*}
	\begin{aligned}
	&x_i[k_\epsilon+1] \leq (1-\alpha)\overline{x}^N[k_\epsilon]+\alpha(\overline{\omega}-\epsilon_0)\\
	&~~\leq (1-\alpha)(\overline{\omega}+\epsilon)+\alpha(\overline{\omega}-\epsilon_0) \leq \overline{\omega}-\alpha\epsilon_0+(1-\alpha)\epsilon.
	\end{aligned}
	\end{equation*}
	Note that this upper bound also applies to the updated value
	of any normal node not in $\mathcal{Z}_1(k_\epsilon,\epsilon_0)$, because such
	a node will use its own value in its update. Similarly, if node
	$i\in \mathcal{Z}_2(k_\epsilon,\epsilon_0)\cap \mathcal{N}$ has the $(f+1)$-reachable property, then $x_i[k_\epsilon+1]\geq \underline{\omega}+\alpha\epsilon_0-(1-\alpha)\epsilon$. Again, any normal node not in $\mathcal{Z}_2(k_\epsilon,\epsilon_0)$ will have the same lower bound.
	
	
	Next, consider the sets $\mathcal{Z}_1(k_\epsilon+1,\epsilon_1)$ and $\mathcal{Z}_2(k_\epsilon+1,\epsilon_1)$. By $\epsilon_1<\epsilon_0$, these two sets are still disjoint. Since at least one of the normal nodes in $\mathcal{Z}_1(k_\epsilon,\epsilon_0)$ decreases at
	least to $\overline{\omega}-\epsilon_1$ (or below), or one of the nodes in $\mathcal{Z}_2(k_\epsilon,\epsilon_0)$
	increases at least to $\underline{\omega}+\epsilon_1$ (or above), it must hold
	$\left| \mathcal{Z}_1(k_\epsilon+1,\epsilon_1)\cap \mathcal{N}\right|< \left| \mathcal{Z}_1(k_\epsilon,\epsilon_0)\cap \mathcal{N}\right|$, $\left| \mathcal{Z}_2(k_\epsilon+1,\epsilon_1)\cap \mathcal{N}\right|< \left| \mathcal{Z}_2(k_\epsilon,\epsilon_0)\cap \mathcal{N}\right|$, or both. Recall that $0 < \epsilon_{\gamma} < \epsilon_{\gamma-1}\leq \epsilon_0$. As long as there are still normal nodes in 
	$\mathcal{Z}_1(k_\epsilon+\gamma,\epsilon_\gamma)$ and/or $\mathcal{Z}_2(k_\epsilon+\gamma,\epsilon_\gamma)$, we can repeat the above analysis for time step $k_\epsilon+\gamma$, which will result in either $\left| \mathcal{Z}_1(k_\epsilon+\gamma,\epsilon_\gamma)\cap \mathcal{N}\right|< \left| \mathcal{Z}_1(k_\epsilon+\gamma-1,\epsilon_{\gamma-1})\cap \mathcal{N}\right|$, $\left| \mathcal{Z}_2(k_\epsilon+\gamma,\epsilon_\gamma)\cap \mathcal{N}\right|< \left| \mathcal{Z}_2(k_\epsilon+\gamma-1,\epsilon_{\gamma-1})\cap \mathcal{N}\right|$, or both. 
	
	Since $\left| \mathcal{Z}_1(k_\epsilon,\epsilon_0)\cap \mathcal{N}\right|+\left| \mathcal{Z}_2(k_\epsilon,\epsilon_0)\cap \mathcal{N}\right| \leq {n_N}$,
	there must be some time step $k_\epsilon + T$ (with $T\leq {n_N}$) such that either 
	$\mathcal{Z}_1(k_\epsilon+ T,\epsilon_T)\cap \mathcal{N} $ or $\mathcal{Z}_2(k_\epsilon+ T,\epsilon_T)\cap \mathcal{N} $ is empty. In the former case, all normal nodes in the network at time step
	$k_\epsilon+T$ have values at most $\overline{\omega}-\epsilon_T$, while in the latter case all
	normal nodes at time step $k_\epsilon + T$ have values
	no less than $\underline{\omega}+\epsilon_T$. 
	By \eqref{epsilon_positive} and $T\leq n_N$, it holds that $\epsilon_T > 0$.
	Hence, we have contradiction to the fact that the largest value monotonically
	converges to $\overline{\omega}$ (in the former case) or that the smallest
	value monotonically converges to $\underline{\omega}$ (in the latter case). Hence, it must be the case that $\epsilon_0 = 0$, proving that $\overline{\omega} = \underline{\omega}$.
\end{proof}

	We emphasize that the graph condition based on the notion of robustness with $l$ hops is tight for our MW-MSR algorithm. Our notion captures the capability of agents to be influenced by the outside of the set in the multi-hop settings.
	We note that in \cite{su2017reaching}, an idea similar to robustness is proposed, and based on it, a tight necessary and sufficient condition for Byzantine consensus using an MSR-type algorithm with multi-hop communication is provided. However, the focus there is on the Byzantine model and the condition is expressed in terms of
	the subgraph consisting of only the normal nodes.
	Part of the reason to focus on only the subgraph of normal nodes is that Byzantine nodes are more adversarial compared to malicious nodes as they can send different values to different neighbors. Hence, the subgraph of normal nodes has to be sufficiently robust to fight against the possible attacks.
	The condition there is an extension of the one for the one-hop case shown in \cite{vaidya2012iterative}. Moreover, to meet the condition there, each node in $\mathcal{G}$ must have at least $2f+1$ incoming edges. This is different from the case for the malicious model studied in this paper, where at least $2f$ incoming edges are required. Further discussions on the minimum requirement for our algorithm to guarantee resilient consensus are given in Section \ref{properties}.

\section{Asynchronous Network}


In this section, we analyze the MW-MSR algorithm under asynchronous updates with time delays in the communication among nodes. 

We employ the control input taking account of possible delays in the values from the neighbors as 
\begin{equation}
u_i[k]=\sum_{j\in \mathcal{N}_i^{l-}} a_{ij}[k]x_j^P[k-\tau_{ij}^P[k]],  
\end{equation}
where $x_j^P[k]$ denotes the value of node $j$ at time $k$ sent along path $P$ and $\tau_{ij}^P[k]\in \mathbb{Z}_+$ is the delay in this $(j,i)$-path $P$ at time $k$.
The delays are time varying and may be different in each path, but we assume the common upper bound $\tau$ on any \textit{normal} path $P$, over which all internal nodes are normal, as
\begin{equation}
0\leq \tau_{ij}^P[k] \leq \tau,\medspace j\in \mathcal{N}_i^{l-}, \medspace k\in \mathbb{Z}_+.
\end{equation}
Hence, each normal node $i$ becomes aware of the value
of each of its normal $l$-hop neighbor $j$ on each normal $(j,i)$-path $P$ at least once in $\tau$ time steps, but
possibly at different time instants \cite{dibaji2017resilient}. Although we impose this bound on the delays for transmission of messages, the normal nodes need neither the value of this bound nor the information that whether a path $P$ is a normal one or not.

The structure of the asynchronous MW-MSR algorithm can be outlined as follows.
At each time $k$, each normal node $i$ chooses to update or not. If it chooses not to update, then it keeps its value as $x_i[k+1]=x_i[k]$.
Otherwise, it uses the most recently received values of all its $l$-hop neighbors on each $l$-hop path to update its value using the MW-MSR algorithm in Algorithm~1. Like the one-hop MSR algorithm, if node $i$ does not receive any value along some path $P$ originating from its $l$-hop neighbor $j$ (i.e., the crash model), then node $i$ will take this value on path $P$ as an empty value and will discard this value when it applies the WM-MSR algorithm.
As we discussed earlier in Section II-E, in the asynchronous case also, manipulating message paths is equivalent to manipulating message values only and hence can be disregarded in our analysis. 

Let $D[k]$ be a diagonal matrix whose $i$th entry is
given by $d_i[k]=\sum_{j=1}^{n} a_{ij}[k].$ Then,
let the matrices $A_\gamma[k]\in \mathbb{R}^{n\times n}$ for $ 0\leq \gamma \leq \tau$ and $L_{\tau}[k]\in \mathbb{R}^{n\times (\tau +1)n}$ be given by
\begin{equation}
A_\gamma[k]=\left\{
\begin{array}{lll} 
a_{ij}[k] &\textup{if} \thinspace i\neq j \thinspace\textup{and}\thinspace \tau_{ij}[k]=\gamma,\\
0 & \textup{otherwise,}
\end{array}
\right.
\end{equation}
and $L_{\tau}[k]=\Big[ D[k]-A_0[k] \medspace -A_1[k] \medspace \cdots \medspace -A_{\tau}[k] \Big]$, respectively.
Note that the summation of each row of $L_{\tau}[k]$ is zero. 
The delay $\tau_{ij}[k]$ will be set to be one of the delays $\tau_{ij}^P[k]$ corresponding to the normal paths as we discuss further later.

Now, the control input can be expressed as
\begin{equation}
\begin{array}{lll} 
u^N[k] =-L_{\tau}^N[k]z[k],\\
u^F[k] : \textup{arbitrary,}
\end{array}
\end{equation}
where $z[k]= [x[k]^T x[k-1]^T \cdots  x[k-\tau]^T]^T$ is a $(\tau+1)n$-dimensional vector for $k\geq0$ and $L_{\tau}^N[k]$ is a matrix formed by the first $n_N$ rows of $L_{\tau}[k]$. Here, to simplify the discussion, we assume that $z[0]$ consists of the given initial values of the agents. Then, the agent dynamics can be written as
\begin{equation} \label{system2}
x[k+1]=\Gamma[k] z[k] +   \begin{bmatrix} 0  \\ I_{n_F} \end{bmatrix}u^F[k],
\end{equation}
where $\Gamma[k]$ is an  $n\times(\tau+1)n$ matrix given by $\Gamma[k] = \begin{bmatrix} I_n &  0 \end{bmatrix} -   \begin{bmatrix} L_{\tau}^N[k]^T  & 0 \end{bmatrix}^T. $

The main result of this section now follows.
Here, the safety interval differs from the synchronous case and is given by 
\begin{equation}  \label{safety2}
\mathcal{S}_{\tau}=\Big[ \min z^N[0], \max z^N[0] \Big].            	
\end{equation}

\begin{theorem}\label{asyntheorem}
	Consider a directed graph $\mathcal{G} = (\mathcal{V},\mathcal{E})$ with $l$-hop communication, where each normal node updates its value according to the asynchronous MW-MSR algorithm with parameter
	$f$. Under the $f$-total malicious model, resilient asymptotic
	consensus is achieved 
	only if the underlying graph is
	$(f + 1, f + 1)$-robust with $l$ hops. Moreover, if the underlying graph
	is $(2f + 1)$-robust with $l$ hops, then resilient consensus is attained
	with the safety interval given by \eqref{safety2}.
\end{theorem}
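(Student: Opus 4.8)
The plan is to obtain necessity for free from the synchronous result and to prove sufficiency by transporting the convergence argument of Theorem~\ref{syn} to the augmented (windowed) state, where the extra margin in the robustness parameter is exactly what absorbs the delays.

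For necessity I would note that the synchronous dynamics \eqref{system1} are a special case of the asynchronous dynamics \eqref{system2}: setting every delay $\tau_{ij}^{P}[k]=0$ and letting every normal node update at every step reproduces \eqref{system1} verbatim. Since resilient consensus is demanded for \emph{every} admissible delay pattern and update schedule, it must in particular hold for this one; by the necessity part of Theorem~\ref{syn} the graph is then forced to be $(f+1,f+1)$-robust with $l$ hops. Equivalently, one replays the explicit counterexample: choose disjoint nonempty $\mathcal{V}_1,\mathcal{V}_2$ violating all three conditions of Definition~\ref{rs-robust}, assign the constants $a<b$ to them, make every node of $\mathcal{Z}_{\mathcal{V}_1}^{f+1}\cup\mathcal{Z}_{\mathcal{V}_2}^{f+1}$ a static malicious node, and observe that the remaining normal nodes in each set trim away all cross-set messages, whose minimum covers have cardinality at most $f$, and are therefore frozen.

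For sufficiency I would first establish safety on the stacked state $z[k]$. Define the windowed extremes $\overline{\omega}[k]=\max_{0\le s\le\tau}\overline{x}^{N}[k-s]$ and $\underline{\omega}[k]=\min_{0\le s\le\tau}\underline{x}^{N}[k-s]$. Since step~2 of Algorithm~1 discards precisely the extreme messages whose minimum message cover has cardinality at most $f$, and the at most $f$ malicious nodes can corrupt only messages coverable by $f$ nodes, every value a normal node plugs into \eqref{msrupdate} lies in $[\underline{\omega}[k],\overline{\omega}[k]]\subseteq\mathcal{S}_{\tau}$; as \eqref{system2} forms a convex combination of such values, $\overline{\omega}[k]$ is nonincreasing and $\underline{\omega}[k]$ nondecreasing. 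Both limits exist, say $\overline{\omega}$ and $\underline{\omega}$, and the safety interval \eqref{safety2} follows.

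Assuming for contradiction that $\overline{\omega}>\underline{\omega}$, I reuse the recursively shrinking sequence $\epsilon_{\gamma}$ of Theorem~\ref{syn}, advancing $\gamma$ in blocks of $\tau+1$ steps so that within each block the relevant node updates at least once on a fresh copy of each normal neighbor's value. Crucially I define the extreme sets over \emph{normal} nodes and over the whole window, $\mathcal{Z}_1(k,\epsilon)=\{i\in\mathcal{N}:x_i[t]>\overline{\omega}-\epsilon\text{ for some }t\in[k-\tau,k]\}$ and symmetrically $\mathcal{Z}_2$; then a source lying outside $\mathcal{Z}_1$ is either malicious or a normal node all of whose windowed values, hence every delayed value a neighbor can use, are at most $\overline{\omega}-\epsilon$. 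Applying $(2f+1)$-robustness with $l$ hops with respect to $\mathcal{A}$ to the disjoint pair $\mathcal{Z}_1,\mathcal{Z}_2$, one set contains a $(2f+1)$-reachable node $i$, and since the sets consist only of normal nodes, $i$ is automatically normal. Node $i$ has $2f+1$ interior-disjoint paths of length at most $l$ from sources outside its set, none passing through $\mathcal{A}$ internally, so all $2f+1$ delayed source values arrive faithfully; at most $f$ of these sources are malicious, and the trimmed set (cover exactly $f$) can contain at most $f$ of these interior-disjoint messages, leaving at least $(2f+1)-f-f=1$ message from a normal outside source in use. By the windowed definition that value is separated from the extremum by $\epsilon_0$, so the contraction estimate of Theorem~\ref{syn} gives $x_i[k+1]\le\overline{\omega}-\alpha\epsilon_0+(1-\alpha)\epsilon$ (or its symmetric lower bound), whence one windowed normal extreme set loses a member; iterating over at most $n_N$ blocks empties one of them, contradicting the monotone convergence of $\overline{\omega}[k]$ or $\underline{\omega}[k]$, so $\overline{\omega}=\underline{\omega}$.

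The main obstacle, and the reason the sufficient condition jumps from $(f+1,f+1)$ to $(2f+1)$, is that delays destroy the bookkeeping of the synchronous proof. There the extreme sets are taken over \emph{all} nodes, so a malicious node outside $\mathcal{Z}_1$ is by definition broadcasting a single safe value that round and its relayed copy is automatically bounded, so $f$ independent paths suffice. Under delays a malicious source can reach $i$ along different paths carrying different stale values, so it can no longer be classified as inside or outside any set and its contribution must be written off entirely. This forces two separate budgets of $f$---one for malicious sources and one for the messages the minimum-message-cover trimming may legitimately discard as extreme---and hence $r=2f+1$ influential paths per node, supplied by the single-set $(2f+1,1)$ condition rather than the tight $(f+1,f+1)$ one. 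The delicate points to verify are that the windowed sets are consistent with the delay bound $\tau$ so that ``outside'' genuinely controls every delayed value used, that a block of $\tau+1$ steps forces the relevant node to update on fresh data, and that interior-disjointness of the $2f+1$ paths, after deleting the receiver $i$, yields a minimum message cover exceeding $f$ so that the count $(2f+1)-f-f\ge 1$ is valid.
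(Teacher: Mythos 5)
Your proposal is correct and follows essentially the same route as the paper's Appendix proof: necessity by specializing the asynchronous dynamics to zero delays and invoking Theorem~\ref{syn}, safety via the monotone windowed extremes $\overline{x}^N_\tau[k]$, $\underline{x}^N_\tau[k]$, and convergence by contradiction using the recursive $\epsilon_\gamma$ sequence over $(\tau+1)n_N$ steps, applying $(2f+1)$-robustness with respect to $\mathcal{A}$ to extreme sets restricted to normal nodes and the count $(2f+1)-f-f\ge 1$ to guarantee one usable normal outside value. The only cosmetic difference is that you define the extreme sets over the whole delay window while the paper defines them at a single time index and separately proves that exit from the sets is permanent; both bookkeeping devices serve the same purpose.
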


See the Appendix for the proof of this theorem.

\begin{remark}
	In comparison to the synchronous update case studied in the previous section, the graph condition to achieve resilient consensus under the asynchronous updates with delays is more restrictive. This is because when the nodes updates asynchronously, the normal nodes may not receive the same values from the malicious nodes, which creates a more adversarial situation for resilient consensus.
	Moreover, in the next section, in Lemma \ref{asynapplysyn}, we prove that under the $f$-total model, a graph which is $(2f + 1)$-robust with $l$ hops is also $(f + 1, f + 1)$-robust with $l$ hops. 
	For instance, the graph in Fig.~\ref{graph3} is $3$-robust with $2$ hops and hence $(2,2)$-robust with $2$ hops.
	Thus, as expected, the sufficient condition for the asynchronous algorithm guaranteeing resilient consensus is also sufficient for the synchronous algorithm.
\end{remark}

\begin{remark}
		The difference in the network requirements
		discussed above for the synchronous and asynchronous
		algorithms is a consequence partly due 
		to the deterministic nature in transmission times. 
		In \cite{dibaji2018resilient}, it is shown that for an asynchronous algorithm with one-hop
		communication without delays, we can recover the tight necessary and
		sufficient network condition of the synchronous case, i.e., the graph
		to be $(f+1,f+1)$-robust under the $f$-total malicious model. 
		It is interesting that this result requires \textit{randomization}
		in agents' communication instants whereas
		in the deterministic case, the sufficient graph condition remains
		to be $(2f+1)$-robustness, which coincides with the implication of
		Theorem \ref{asyntheorem}. In the multi-hop setting, however, delays are
		critical and hence we do not pursue such results in this paper.
\end{remark}

\begin{figure}[t]
	\centering
	\includegraphics[width=1.35in]{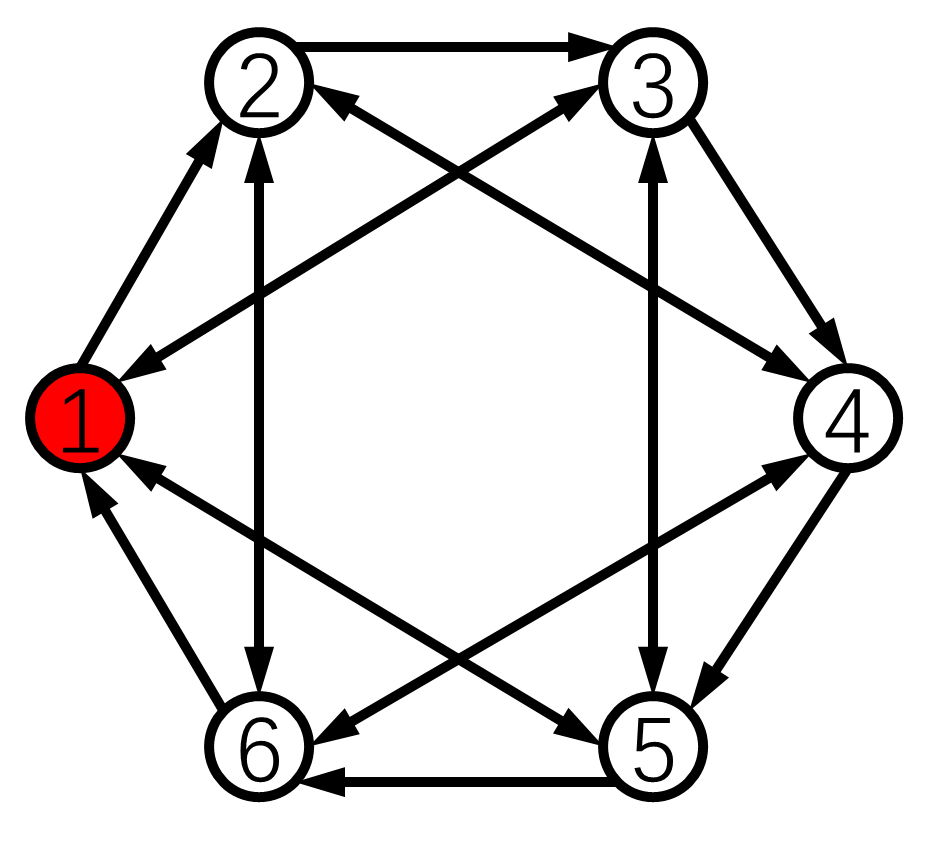}
	\vspace{-7pt}
	\caption{The graph is not $2$-robust with one hop, but it is $3$-robust with $2$ hops and $(2,2)$-robust with $2$ hops.}
	\label{graph3}
	\vspace*{-3.5mm}
\end{figure}

\section{Discussions on Graph Robustness with Multi-hop Communication}\label{properties}

In this section, we demonstrate some properties of graph robustness with $l$ hops, which generalize the properties of robustness with one hop in \cite{leblanc2013resilient} when $l=1$ for this new notion. Moreover, we provide the analysis of graph robustness with $l$ hops for the case where $l$ is sufficiently large. This corresponds to the case of unbounded path length.

\subsection{Properties of Robustness with l Hops}

As discussed earlier, our definition of robustness with $l$ hops is a generalization of the definition of robustness with one-hop communication from \cite{leblanc2013resilient}. 
Here, we are interested in investigating how properties of robustness with the one-hop case can be extended to the multi-hop case.
In what follows, we present a series of lemmas that analyze the generalized notion of robustness. 
Recall that robustness with $l$ hops is defined w.r.t. the given set $\mathcal{F}$ satisfying the $f$-total model, but we omit saying this when it is clear from the context in this section. Typically, we are interested in $f=r-1$ for $r$-robustness.

The first result is simple, stating that $(r,s)$-robustness with $l$ hops of a graph holds with smaller $r$ and $s$.

\begin{lemma}
	If a directed graph $\mathcal{G} = (\mathcal{V},\mathcal{E})$ is $(r, s)$-robust with $l$ hops, then it is also $(r', s')$-robust with $l$ hops when $0 \leq r' \leq r \ \text{and} \ 1 \leq s' \leq s$.
\end{lemma}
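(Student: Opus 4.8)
The plan is to reduce everything to a single monotonicity observation about the reachability sets and then perform a three-way case split on which of the conditions in Definition~\ref{rs-robust} is satisfied. Since robustness with $l$ hops is defined with respect to a fixed set $\mathcal{F}$ (and the $f$-total version merely quantifies over all admissible $\mathcal{F}$), I would fix such an $\mathcal{F}$ throughout and prove the implication for that $\mathcal{F}$; the $f$-total statement then follows verbatim by ranging over $\mathcal{F}$.

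First I would establish the key monotonicity: for any nonempty $\mathcal{V}_a \subset \mathcal{V}$ and any integers $0 \le r' \le r$, one has $\mathcal{Z}_{\mathcal{V}_a}^{r} \subseteq \mathcal{Z}_{\mathcal{V}_a}^{r'}$. This is immediate from the definition of $r$-reachability: if a node $i \in \mathcal{V}_a$ admits at least $r$ independent paths of at most $l$ hops originating outside $\mathcal{V}_a$ and avoiding $\mathcal{F}$ as an internal node, then in particular it admits at least $r'$ such paths, so $i$ is $r'$-reachable with respect to $\mathcal{F}$. Hence every $r$-reachable node is also $r'$-reachable, which gives the claimed inclusion (with the convention that every node is $0$-reachable, so $\mathcal{Z}_{\mathcal{V}_a}^{0} = \mathcal{V}_a$).

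Next, fix an arbitrary pair of nonempty disjoint subsets $\mathcal{V}_1, \mathcal{V}_2 \subset \mathcal{V}$. By the hypothesis that $\mathcal{G}$ is $(r,s)$-robust with $l$ hops, at least one of the three conditions of Definition~\ref{rs-robust} holds at level $r$. I would then argue by cases. If $\mathcal{Z}_{\mathcal{V}_1}^{r} = \mathcal{V}_1$, then the inclusion above together with $\mathcal{Z}_{\mathcal{V}_1}^{r'} \subseteq \mathcal{V}_1$ forces $\mathcal{Z}_{\mathcal{V}_1}^{r'} = \mathcal{V}_1$, so condition~1 holds at level $r'$; the case $\mathcal{Z}_{\mathcal{V}_2}^{r} = \mathcal{V}_2$ is symmetric. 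Finally, if $\left|\mathcal{Z}_{\mathcal{V}_1}^{r}\right| + \left|\mathcal{Z}_{\mathcal{V}_2}^{r}\right| \ge s$, the inclusion yields $\left|\mathcal{Z}_{\mathcal{V}_1}^{r'}\right| + \left|\mathcal{Z}_{\mathcal{V}_2}^{r'}\right| \ge \left|\mathcal{Z}_{\mathcal{V}_1}^{r}\right| + \left|\mathcal{Z}_{\mathcal{V}_2}^{r}\right| \ge s \ge s'$, so condition~3 holds at level $(r',s')$. In every case one of the conditions of Definition~\ref{rs-robust} is satisfied for $(r',s')$, and since $\mathcal{V}_1, \mathcal{V}_2$ were arbitrary, $\mathcal{G}$ is $(r',s')$-robust with $l$ hops.

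There is essentially no hard obstacle here; the only point requiring care is making the monotonicity of $r$-reachability explicit, namely that the requirement of ``at least $r$ independent paths'' weakens as $r$ decreases. The remainder is the bookkeeping of the case split and the trivial chain $s \ge s'$. I would also note that the degenerate boundary $r' = 0$ is handled automatically, since then $\mathcal{Z}_{\mathcal{V}_1}^{0} = \mathcal{V}_1$ makes condition~1 hold unconditionally.
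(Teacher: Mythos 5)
Your proof is correct, and since the paper states this lemma without proof (describing it only as ``simple''), your argument supplies exactly the standard reasoning the authors evidently intend: the monotonicity $\mathcal{Z}_{\mathcal{V}_a}^{r} \subseteq \mathcal{Z}_{\mathcal{V}_a}^{r'}$ for $r' \leq r$, followed by the three-way case split on Definition~\ref{rs-robust}. No gaps; the handling of the boundary case $r'=0$ and the reduction to a fixed $\mathcal{F}$ are both appropriate.
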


In the second result, we show that the level of robustness of a given graph with $l$ hops does not decrease by adding edges to the graph nor by increasing the relay range $l$.

\begin{lemma}
	Suppose that a directed subgraph $\mathcal{G} = (\mathcal{V},\mathcal{E})$ of $\mathcal{G}' = (\mathcal{V},\mathcal{E}')$ is $(r, s)$-robust with $l$ hops, where $\mathcal{E} \subseteq\mathcal{E}'$. Then $\mathcal{G}'$ is $(r, s)$-robust with $l$ hops.
	Moreover, $\mathcal{G}$ is $(r, s)$-robust with $l'$ hops, where $l'\geq l$.
\end{lemma}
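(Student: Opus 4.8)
The plan is to prove both claims with a single monotonicity argument: enlarging the edge set (from $\mathcal{G}$ to $\mathcal{G}'$) or increasing the relay range (from $l$ to $l'$) can only create new qualifying paths and never destroy existing ones, so the set of $r$-reachable nodes can only grow, and each of the three conditions in Definition~\ref{rs-robust} is preserved under such growth. To set up, I would fix an arbitrary set $\mathcal{F}$ following the $f$-total model (it then suffices to argue for this fixed $\mathcal{F}$, and robustness under the $f$-total model follows by ranging over all admissible $\mathcal{F}$), fix a pair of nonempty disjoint sets $\mathcal{V}_1,\mathcal{V}_2\subset\mathcal{V}$, and to disambiguate the two settings write $\mathcal{Z}_{\mathcal{V}_a}^r(\mathcal{G})$ and $\mathcal{Z}_{\mathcal{V}_a}^r(\mathcal{G}')$ (respectively, the sets of $r$-reachable nodes computed with $l$ and with $l'$ hops) for $a=1,2$.

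The core step is the inclusion $\mathcal{Z}_{\mathcal{V}_a}^r(\mathcal{G})\subseteq\mathcal{Z}_{\mathcal{V}_a}^r(\mathcal{G}')$. For the edge case I would take any $i\in\mathcal{Z}_{\mathcal{V}_a}^r(\mathcal{G})$; by definition it has $r$ independent paths of at most $l$ hops originating outside $\mathcal{V}_a$ with no node of $\mathcal{F}$ as an internal node. Since $\mathcal{E}\subseteq\mathcal{E}'$, each such path is also a path in $\mathcal{G}'$, and its hop count, its source node, and its internal node set are all unchanged; hence its independence and its $\mathcal{F}$-avoidance persist, and the same $r$ paths witness $i\in\mathcal{Z}_{\mathcal{V}_a}^r(\mathcal{G}')$. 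For the hop case the argument is identical, the only difference being that a path of at most $l$ hops is trivially a path of at most $l'$ hops whenever $l'\geq l$.

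Given this inclusion, I would close each of the three cases of Definition~\ref{rs-robust} that could hold for $\mathcal{G}$. If condition~3 holds, then $|\mathcal{Z}_{\mathcal{V}_1}^r(\mathcal{G}')|+|\mathcal{Z}_{\mathcal{V}_2}^r(\mathcal{G}')|\geq|\mathcal{Z}_{\mathcal{V}_1}^r(\mathcal{G})|+|\mathcal{Z}_{\mathcal{V}_2}^r(\mathcal{G})|\geq s$, so condition~3 holds for the larger setting. If condition~1 holds, i.e.\ $\mathcal{Z}_{\mathcal{V}_1}^r(\mathcal{G})=\mathcal{V}_1$, then the sandwich $\mathcal{V}_1=\mathcal{Z}_{\mathcal{V}_1}^r(\mathcal{G})\subseteq\mathcal{Z}_{\mathcal{V}_1}^r(\mathcal{G}')\subseteq\mathcal{V}_1$ forces $\mathcal{Z}_{\mathcal{V}_1}^r(\mathcal{G}')=\mathcal{V}_1$; condition~2 is symmetric. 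Thus at least one condition survives for every disjoint pair $\mathcal{V}_1,\mathcal{V}_2$, establishing $(r,s)$-robustness of the larger setting with respect to $\mathcal{F}$, and since $\mathcal{F}$ was an arbitrary admissible set, the conclusion holds under the $f$-total model.

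I expect the only point needing genuine care to be the core inclusion: one must verify that the independence of the paths (only $i$ common) and the $\mathcal{F}$-avoidance of their internal nodes are intrinsic properties of the paths viewed as node sequences, and therefore transfer \emph{verbatim}, rather than being properties that a richer graph or a larger hop budget might disturb. Everything after that is the elementary observation that $A\subseteq B$ preserves both cardinality lower bounds and equalities of the form $A=\mathcal{V}_a$, the latter once one also uses the trivial bound $\mathcal{Z}_{\mathcal{V}_a}^r\subseteq\mathcal{V}_a$.
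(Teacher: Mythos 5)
Your proof is correct: the key observation that the set of $r$-reachable nodes is monotone under both edge addition and relay-range increase (since every witnessing path, its independence, and its $\mathcal{F}$-avoidance are intrinsic to the node sequence and carry over verbatim) is exactly what is needed, and your case analysis of the three conditions in Definition~\ref{rs-robust}, including the sandwich argument for conditions 1) and 2), closes the argument cleanly. The paper states this lemma without proof, treating it as a direct extension of the one-hop result of LeBlanc et al., and your monotonicity argument is precisely the intended justification.
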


The maximum robustness with $l$ hops for a graph consisting of $n$ nodes is the same as that with the one-hop case. The following lemma suggests that the bound $n \geq 2f+1$ cannot be breached by introducing multi-hop communication to the MSR algorithms. This bound is dependent on the nature of the MSR algorithms, which cannot tolerate half or more of the agents to be adversarial.

\begin{lemma}\label{maxrobust}
	No directed graph  $\mathcal{G} = (\mathcal{V},\mathcal{E})$ on $n$ nodes is $(
	\lceil n/2\rceil + 1)$-robust with $l$ hops. Moreover, the complete graph $\mathcal{K}_n$ is $(
	\lceil n/2\rceil, s)$-robust with $l$ hops for $1 \leq s \leq n$.
\end{lemma}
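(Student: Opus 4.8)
The plan is to isolate a single combinatorial fact that drives both claims: in any directed graph, if a node $i\in\mathcal{V}_1$ has $r$ independent paths (sharing only $i$) of at most $l$ hops originating outside $\mathcal{V}_1$, then these paths must have $r$ \emph{distinct} source nodes, all lying in $\mathcal{V}\setminus\mathcal{V}_1$; otherwise a repeated source would be a second common node, contradicting independence. Consequently, $i$ can be $r$-reachable with $l$ hops only if $|\mathcal{V}\setminus\mathcal{V}_1|\geq r$, and this bound holds regardless of the choice of $\mathcal{F}$ and of $l$. I would establish this source-counting bound first, as it underlies everything.

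For the first claim, I would exhibit a single bad partition. Set $r=\lceil n/2\rceil+1$ and take $\mathcal{F}=\emptyset$. Partition $\mathcal{V}$ into disjoint nonempty sets $\mathcal{V}_1,\mathcal{V}_2$ with $\mathcal{V}_1\cup\mathcal{V}_2=\mathcal{V}$ and sizes $\lceil n/2\rceil$ and $\lfloor n/2\rfloor$, so that $|\mathcal{V}_1|,|\mathcal{V}_2|\leq\lceil n/2\rceil<r$. Then for any $i\in\mathcal{V}_1$ we have $|\mathcal{V}\setminus\mathcal{V}_1|=|\mathcal{V}_2|<r$, so by the source-counting bound no node of $\mathcal{V}_1$ is $r$-reachable, and symmetrically for $\mathcal{V}_2$. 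Hence $\mathcal{Z}_{\mathcal{V}_1}^r=\mathcal{Z}_{\mathcal{V}_2}^r=\emptyset$, which violates all three conditions of Definition \ref{rs-robust} with $s=1$. Therefore the graph is not $(\lceil n/2\rceil+1)$-robust with $l$ hops.

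For the second claim about $\mathcal{K}_n$, I would observe that the source-counting bound is tight: in $\mathcal{K}_n$ every node outside $\mathcal{V}_1$ has a direct one-hop edge to $i$, and these one-hop paths have no internal nodes, so they are independent and automatically avoid any $\mathcal{F}$. Thus each $i\in\mathcal{V}_1$ is exactly $(n-|\mathcal{V}_1|)$-reachable with respect to every $\mathcal{F}$, making reachability all-or-nothing on each set: with $r=\lceil n/2\rceil$, every node of $\mathcal{V}_1$ is $r$-reachable iff $|\mathcal{V}_1|\leq n-r=\lfloor n/2\rfloor$. Given disjoint nonempty $\mathcal{V}_1,\mathcal{V}_2$, I would split into cases: if $|\mathcal{V}_1|\leq\lfloor n/2\rfloor$ then condition (1) of Definition \ref{rs-robust} holds, and if $|\mathcal{V}_2|\leq\lfloor n/2\rfloor$ then condition (2) holds. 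The remaining case $|\mathcal{V}_1|,|\mathcal{V}_2|\geq\lfloor n/2\rfloor+1$ forces $|\mathcal{V}_1|+|\mathcal{V}_2|\geq 2\lfloor n/2\rfloor+2>n$, contradicting disjointness. Since condition (1) or (2) always holds and neither depends on $s$, the graph is $(\lceil n/2\rceil,s)$-robust with $l$ hops for every $1\leq s\leq n$.

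All steps are short; the only point requiring care is the source-counting bound, specifically justifying that independence forces distinct sources and that this degrades gracefully under arbitrary $\mathcal{F}$ and arbitrary $l$. This is where I expect the main (mild) obstacle to lie, since the remainder is elementary $\lceil\cdot\rceil/\lfloor\cdot\rfloor$ arithmetic and a two-way case split.
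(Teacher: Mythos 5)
Your proof is correct and follows essentially the same route as the paper: the impossibility claim is established by the same balanced bipartition into sets of sizes $\lceil n/2\rceil$ and $\lfloor n/2\rfloor$, with your explicit ``distinct sources'' observation simply making precise why neither side can contain an $(\lceil n/2\rceil+1)$-reachable node. For the complete-graph claim the paper merely defers to the one-hop analysis of LeBlanc et al., and your explicit argument (one-hop edges have no internal nodes, so reachability of $i\in\mathcal{V}_1$ equals $n-|\mathcal{V}_1|$ regardless of $\mathcal{F}$ and $l$, and the two sets cannot both exceed $\lfloor n/2\rfloor$) is exactly the intended instantiation of that analysis; the only cosmetic omission is restricting to $n\geq 2$ so that the bipartition into nonempty sets exists.
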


\begin{proof}
	We consider the nontrivial case with $n\geq 3$. Then pick $\mathcal{V}_1$ and
	$\mathcal{V}_2$ by taking any bipartition of $\mathcal{V}$ (i.e., $\mathcal{V}_1\cap \mathcal{V}_2 =\emptyset$ and  $\mathcal{V}_1 \cup \mathcal{V}_2 =\mathcal{V}$)
	such that $|\mathcal{V}_1| = \lceil n/2\rceil$ and $|\mathcal{V}_2| = \lfloor n/2\rfloor$. Neither $\mathcal{V}_1$ nor $\mathcal{V}_2$ has
	$\lceil n/2\rceil+1$ nodes; thus, neither one has a node being $(\lceil n/2\rceil+1)$-reachable with $l$-hop communication. Hence, $\mathcal{G}$ is not $(\lceil n/2\rceil+1)$-robust with $l$ hops.
	
	The proof for the complete graph case follows an analysis similar to the one for the one-hop case \cite{leblanc2013resilient}.
\end{proof}

The following lemma exposes that the notion of $(r, s)$-robust graphs has a more complicated structure in the relation between the two parameters $r$ and $s$.

\begin{lemma}\label{2fplus1}
		If a directed graph $\mathcal{G} = (\mathcal{V},\mathcal{E})$ is $(r, s)$-robust with $l$ hops under the $f$-total model, then it is also $(r-1, s+1)$-robust with $l$ hops under the $f$-total model.
\end{lemma}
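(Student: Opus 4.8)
The plan is to verify the defining conditions of $(r-1,s+1)$-robustness directly from those of $(r,s)$-robustness. Since both notions are taken under the $f$-total model, I fix an arbitrary set $\mathcal{F}$ with $|\mathcal{F}|\leq f$ and an arbitrary pair of nonempty disjoint subsets $\mathcal{V}_1,\mathcal{V}_2\subset\mathcal{V}$, and check that at least one of the three conditions of Definition \ref{rs-robust} holds with parameters $(r-1,s+1)$. The first tool is a monotonicity observation on reachability: if node $i$ has $r$ independent $\mathcal{F}$-avoiding paths of at most $l$ hops from outside $\mathcal{V}_a$, then any $r-1$ of them witness $(r-1)$-reachability, so $\mathcal{Z}_{\mathcal{V}_a}^{r}\subseteq\mathcal{Z}_{\mathcal{V}_a}^{r-1}$ for $a=1,2$. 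Consequently, if $\mathcal{Z}_{\mathcal{V}_1}^{r}=\mathcal{V}_1$ then also $\mathcal{Z}_{\mathcal{V}_1}^{r-1}=\mathcal{V}_1$, and likewise for $\mathcal{V}_2$; thus the two ``fullness'' conditions transfer verbatim from level $r$ to level $r-1$, disposing of conditions~1 and~2.

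It remains to treat the case where neither set is fully $(r-1)$-reachable, for which the target is the counting condition $|\mathcal{Z}_{\mathcal{V}_1}^{r-1}|+|\mathcal{Z}_{\mathcal{V}_2}^{r-1}|\geq s+1$. Here neither set is fully $r$-reachable either (by the inclusion above), so $(r,s)$-robustness applied to $(\mathcal{V}_1,\mathcal{V}_2)$ must return its third condition, $|\mathcal{Z}_{\mathcal{V}_1}^{r}|+|\mathcal{Z}_{\mathcal{V}_2}^{r}|\geq s$, and the same inclusion gives $|\mathcal{Z}_{\mathcal{V}_1}^{r-1}|+|\mathcal{Z}_{\mathcal{V}_2}^{r-1}|\geq s$. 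The whole difficulty is the extra ``$+1$'': I would argue by contradiction, assuming all three $(r-1,s+1)$-conditions fail at $(\mathcal{V}_1,\mathcal{V}_2)$. Combined with the above, this forces the boundary situation $\mathcal{Z}_{\mathcal{V}_a}^{r}=\mathcal{Z}_{\mathcal{V}_a}^{r-1}$ with the sum equal to exactly $s$, and each set keeps a nonempty ``core'' $\mathcal{W}_a:=\mathcal{V}_a\setminus\mathcal{Z}_{\mathcal{V}_a}^{r-1}$ every node of which has at most $r-2$ independent $\mathcal{F}$-avoiding paths from outside $\mathcal{V}_a$.

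To close the contradiction I would feed the derived pair $(\mathcal{W}_1,\mathcal{W}_2)$ (or, if needed, a one-node-at-a-time shrinking of $\mathcal{V}_a$ toward $\mathcal{W}_a$) back into $(r,s)$-robustness: the core nodes have so few incoming paths that they should fail to be $r$-reachable even with respect to the smaller sets, so $(\mathcal{W}_1,\mathcal{W}_2)$ would violate all three conditions of $(r,s)$-robustness, contradicting the hypothesis. I expect this last step to be the main obstacle, and it is precisely where the multi-hop setting departs from the one-hop proof of \cite{leblanc2013resilient}: deleting $\mathcal{Z}_{\mathcal{V}_a}^{r-1}$ from $\mathcal{V}_a$ relocates those nodes to the outside, which can open \emph{new} independent $\mathcal{F}$-avoiding paths into a core node (a path may now be routed through a relocated node as a source, or as a permitted non-$\mathcal{F}$ relay), so reachability need not be preserved under this shrinking. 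The real work is therefore a path-bookkeeping argument bounding how many new independent paths such a relocation can create and showing it cannot lift any core node from at most $r-2$ up to $r$ reachability; performing the removal incrementally, so that each step changes a core node's path count in a controlled way, is the route I would pursue to make this rigorous.
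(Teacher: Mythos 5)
Your setup is fine up through the reduction to the counting condition: the inclusion $\mathcal{Z}_{\mathcal{V}_a}^{r}\subseteq\mathcal{Z}_{\mathcal{V}_a}^{r-1}$ disposes of conditions 1) and 2) and yields $|\mathcal{Z}_{\mathcal{V}_1}^{r-1}|+|\mathcal{Z}_{\mathcal{V}_2}^{r-1}|\geq s$, and you correctly isolate the extra ``$+1$'' as the crux. But the route you propose for that last step has a genuine gap. You want to pass to the cores $\mathcal{W}_a=\mathcal{V}_a\setminus\mathcal{Z}_{\mathcal{V}_a}^{r-1}$ and then show that no core node can be lifted to $r$-reachability by the relocation. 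That claim is false in general: each relocated node can serve as the source of one new independent path into a core node (independent paths share only their destination, so distinct relocated nodes can each contribute a distinct new path), and in your boundary situation up to $s$ nodes are relocated. Hence a core node with only $r-2$ independent paths from outside $\mathcal{V}_1$ can have $r$ or more from outside $\mathcal{W}_1$ as soon as $|\mathcal{Z}_{\mathcal{V}_1}^{r-1}|\geq 2$ --- for instance, a node whose only in-neighbors inside $\mathcal{V}_1$ are two $(r-1)$-reachable nodes gains two paths when both are removed. Doing the removal incrementally does not rescue this, because the per-step increase is exactly the $+1$ you would need to forbid, and you must remove all of $\mathcal{Z}_{\mathcal{V}_a}^{r-1}$ to reach $\mathcal{W}_a$.

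The paper closes the argument by removing a \emph{single} node and staying at level $r$. Pick $i\in\mathcal{Z}_{\mathcal{V}_1}^{r}$ (nonempty since $s\geq 1$), set $\mathcal{V}_1'=\mathcal{V}_1\setminus\{i\}$ and $\mathcal{V}_2'=\mathcal{V}_2$, and apply $(r,s)$-robustness to this new pair. The key observation is $\mathcal{Z}_{\mathcal{V}_1'}^{r}\subseteq\mathcal{Z}_{\mathcal{V}_1}^{r-1}$: among $r$ independent paths into $j$ from outside $\mathcal{V}_1'$, at most one contains $i$, and the remaining $r-1$ originate outside $\mathcal{V}_1$ and avoid $\mathcal{F}$ as before. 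One also checks $|\mathcal{Z}_{\mathcal{V}_1'}^{r}|<|\mathcal{V}_1'|$ (otherwise $|\mathcal{Z}_{\mathcal{V}_1}^{r-1}|=|\mathcal{V}_1|$, contradicting the standing assumption), so condition 3) gives $|\mathcal{Z}_{\mathcal{V}_1'}^{r}|+|\mathcal{Z}_{\mathcal{V}_2'}^{r}|\geq s$. Since $i\notin\mathcal{V}_1'$ but $i\in\mathcal{Z}_{\mathcal{V}_1}^{r}\subseteq\mathcal{Z}_{\mathcal{V}_1}^{r-1}$, adding $i$ back yields $|\mathcal{Z}_{\mathcal{V}_1}^{r-1}|\geq|\mathcal{Z}_{\mathcal{V}_1'}^{r}|+1$, and hence the sum is at least $s+1$. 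This is a direct argument, and the only path bookkeeping it needs is the one-node observation above; I would rework your final step along these lines.
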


\begin{proof}
	Consider any nonempty disjoint subsets $\mathcal{V}_1, \mathcal{V}_2\subset \mathcal{V}$. If $|\mathcal{Z}_{\mathcal{V}_a}^r| = |\mathcal{V}_a|$ or $|\mathcal{Z}_{\mathcal{V}_a}^{r-1}| = |\mathcal{V}_a|$ for $a=1, 2$, then this pair of subsets satisfies conditions 1) or 2) of $(r-1,s+1)$-robustness with $l$ hops. Thus, assume $|\mathcal{Z}_{\mathcal{V}_a}^r| < |\mathcal{V}_a|$ and $|\mathcal{Z}_{\mathcal{V}_a}^{r-1}| < |\mathcal{V}_a|$ for $a=1, 2$. Then condition 3) for $(r,s)$-robustness with $l$ hops must be satisfied, i.e., $| \mathcal{Z}_{\mathcal{V}_1}^r| +| \mathcal{Z}_{\mathcal{V}_2}^r| \geq s$.
	
	Since $s\geq1$, at least one of $\mathcal{Z}_{\mathcal{V}_1}^r$, $\mathcal{Z}_{\mathcal{V}_2}^r$ is nonempty. Suppose that $\mathcal{Z}_{\mathcal{V}_1}^r$ is nonempty. Then, we choose $i\in \mathcal{Z}_{\mathcal{V}_1}^r$. Pick the new pair of nonempty disjoint subsets as $\mathcal{V}_1'= \mathcal{V}_1\setminus \{i\}$ and $\mathcal{V}_2'= \mathcal{V}_2$.
	Observe that if $j \in  \mathcal{Z}_{\mathcal{V}_1'}^r$ then $j \in  \mathcal{Z}_{\mathcal{V}_1}^{r-1}$.
	Because node $i$ is the only difference between $\mathcal{V}_1'$ and $ \mathcal{V}_1$, and node $i$ can only be part of one independent path whose destination is node $j$.
	Consequently, even if node $i$ is on one independent path whose destination is node $j$, node $j$ still has the $(r-1)$-reachable property, i.e., it has at least $r-1$ other independent paths of at most $l$ hops originating from nodes outside $\mathcal{V}_1$ and these paths do not have any internal nodes from set $\mathcal{F}$. Then, by adding $i\in \mathcal{Z}_{\mathcal{V}_1}^r\subseteq \mathcal{Z}_{\mathcal{V}_1}^{r-1}$ back into the set $\mathcal{V}_1$, we have
	\begin{equation}  \label{ineq1}
	|\mathcal{Z}_{\mathcal{V}_1}^{r-1}| \geq |\mathcal{Z}_{\mathcal{V}_1'}^r| +1	.
	\end{equation}
	Moreover, $|\mathcal{Z}_{\mathcal{V}_1'}^r| < |\mathcal{V}_1'|$ and $|\mathcal{Z}_{\mathcal{V}_2'}^r| < |\mathcal{V}_2'|$ (since $\mathcal{V}_2'=\mathcal{V}_2$). The first inequality holds because if $|\mathcal{Z}_{\mathcal{V}_1'}^r| = |\mathcal{V}_1'|$ and from the observation we just proved (if $j \in  \mathcal{Z}_{\mathcal{V}_1'}^r$ then $j \in  \mathcal{Z}_{\mathcal{V}_1}^{r-1}$), we have
	$|\mathcal{Z}_{\mathcal{V}_1}^{r-1}| =|\mathcal{V}_1|$, leading us to a contradiction.
	
	Therefore, for nonempty disjoint subsets $\mathcal{V}_1', \mathcal{V}_2'$, it must hold that $| \mathcal{Z}_{\mathcal{V}_1'}^r| +| \mathcal{Z}_{\mathcal{V}_2'}^r| \geq s$. 
	Together with \eqref{ineq1}, we have 
	\begin{equation*}  
	| \mathcal{Z}_{\mathcal{V}_1}^{r-1}| +| \mathcal{Z}_{\mathcal{V}_2}^{r-1}| \geq | \mathcal{Z}_{\mathcal{V}_1'}^r| +| \mathcal{Z}_{\mathcal{V}_2'}^r|+1\geq s+1	,
	\end{equation*} 
	suggesting that $\mathcal{G}$ is $(r-1, s+1)$-robust with $l$ hops under the $f$-total model.
\end{proof}

From this result, we can directly derive the following corollary, which indicates that if a graph is $(2f+1)$-robust with $l$ hops, then it is also $(f+1,f+1)$-robust with $l$ hops.

\begin{corollary}\label{asynapplysyn}
	If a directed graph $\mathcal{G} = (\mathcal{V},\mathcal{E})$ is $(r+s-1)$-robust with $l$ hops, where $1\leq r+s-1\leq \lceil n/2\rceil$, then $\mathcal{G}$ is $(r,s)$-robust with $l$ hops.
\end{corollary}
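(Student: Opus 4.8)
The plan is to derive this as an immediate consequence of Lemma~\ref{2fplus1} by iterating it. The key observation is that $(r+s-1)$-robustness with $l$ hops is, by definition, nothing but $(r+s-1,1)$-robustness with $l$ hops. Lemma~\ref{2fplus1} allows us to trade one unit of the first robustness parameter for one unit of the second while preserving the $f$-total model: $(r',s')$-robustness with $l$ hops implies $(r'-1,s'+1)$-robustness with $l$ hops. Starting from $(r+s-1,1)$ and applying this trade repeatedly produces the chain
\begin{equation*}
(r+s-1,1)\to(r+s-2,2)\to\cdots\to(r,s),
\end{equation*}
so exactly $s-1$ applications of Lemma~\ref{2fplus1} yield the desired $(r,s)$-robustness with $l$ hops.

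I would organize this as a short induction on $s$ (equivalently, a direct count of the $s-1$ applications). The base case $s=1$ is immediate, since $(r+s-1,1)=(r,1)$ is literally $(r,s)$-robustness. For the inductive/iterative step, the only thing to verify is that every intermediate pair $(r+s-1-k,\,1+k)$ for $0\le k\le s-1$ lies in the regime where Lemma~\ref{2fplus1} applies: the first parameter runs through $r+s-1,r+s-2,\dots,r+1$, each at least $r+1\ge 1$, and the second runs through $1,2,\dots,s-1$, each at least $1$, so the hypotheses of the lemma are satisfied at every step.

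The only point requiring genuine care --- and hence the main obstacle, modest as it is --- is the bookkeeping of the $f$-total model, which Lemma~\ref{2fplus1} holds fixed along the whole chain. Reading the hypothesis $(r+s-1)$-robustness under its natural $(r+s-2)$-total model, every application preserves $f=r+s-2$, so the chain terminates in $(r,s)$-robustness under the $(r+s-2)$-total model. Since $s\ge 1$ gives $r+s-2\ge r-1$, and robustness under a larger-$f$ model admits more candidate sets $\mathcal{F}$ and is therefore the stronger property, this conclusion in particular yields $(r,s)$-robustness under the $(r-1)$-total model that is the natural one for the $(r,s)$ convention. Finally, the range hypothesis $1\le r+s-1\le\lceil n/2\rceil$ plays no role in the iteration itself; by Lemma~\ref{maxrobust} it merely guarantees that the assumed $(r+s-1)$-robustness is an attainable, non-vacuous hypothesis rather than an empty one.
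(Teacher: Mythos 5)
Your proposal is correct and is exactly the intended derivation: the paper gives no separate proof, stating only that the corollary is "directly derived" from Lemma~\ref{2fplus1}, and the standard route is precisely your iteration $(r+s-1,1)\to(r+s-2,2)\to\cdots\to(r,s)$ with the $f$-total model held fixed throughout. Your added bookkeeping about which total model is in force is a reasonable clarification of a point the paper leaves implicit, and does not change the argument.
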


Similar to the one-hop case, robustness with $l$ hops can guarantee a certain level of graph connectivity and a minimum in-degree of the graph. The proof for the connectivity result is trivial and hence omitted.

\begin{lemma}
	If a directed graph $\mathcal{G} = (\mathcal{V},\mathcal{E})$ is $r$-robust with $l$ hops, then $\mathcal{G}$ is at least $r$-connected.
\end{lemma}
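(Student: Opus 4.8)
The plan is to prove the contrapositive: assuming $\mathcal{G}$ is not $r$-connected, I would exhibit one pair of disjoint nonempty sets that violates $(r,1)$-robustness with $l$ hops. Since $r$-robustness with $l$ hops must hold with respect to \emph{every} set $\mathcal{F}$ admissible under the $(r-1)$-total model, it holds in particular for $\mathcal{F}=\emptyset$; hence it suffices to produce a violating pair for this single choice. This is convenient because for $\mathcal{F}=\emptyset$ the $r$-reachability condition in Definition~\ref{rs-robust} simplifies to merely counting independent paths of length at most $l$ coming from outside the set, with no constraint whatsoever on their internal nodes.

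First I would extract a vertex separator from the failure of $r$-connectivity: a set $\mathcal{C}\subset\mathcal{V}$ with $|\mathcal{C}|\leq r-1$ whose removal splits the remaining vertices into two nonempty parts $\mathcal{V}_1$ and $\mathcal{V}_2$ such that every path of $\mathcal{G}$ joining $\mathcal{V}_1$ and $\mathcal{V}_2$, in either direction, must pass through a node of $\mathcal{C}$. This is the standard Menger-type consequence of low connectivity, and it is exactly the point where the precise digraph connectivity notion enters the argument.

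Next I would show $\mathcal{Z}_{\mathcal{V}_1}^r=\emptyset$. Take any $i\in\mathcal{V}_1$ and any path of length at most $l$ into $i$ whose source lies outside $\mathcal{V}_1$, i.e. in $\mathcal{V}_2\cup\mathcal{C}$. In either case the path must meet $\mathcal{C}$: if its source is already in $\mathcal{C}$ this is immediate, and if its source is in $\mathcal{V}_2$ then the path has to cross from $\mathcal{V}_2$ into $\mathcal{V}_1$, which by the separator property can only occur through a node of $\mathcal{C}$. Now any family of such paths that are independent shares only the node $i$, and $i\notin\mathcal{C}$; therefore a given node of $\mathcal{C}$ can lie on at most one of them, so the paths hit distinct nodes of $\mathcal{C}$ and there are at most $|\mathcal{C}|\leq r-1$ of them. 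Hence no $i\in\mathcal{V}_1$ is $r$-reachable, giving $\mathcal{Z}_{\mathcal{V}_1}^r=\emptyset$, and the symmetric argument yields $\mathcal{Z}_{\mathcal{V}_2}^r=\emptyset$.

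Finally, for the pair $(\mathcal{V}_1,\mathcal{V}_2)$ all three conditions of Definition~\ref{rs-robust} with $(r,s)=(r,1)$ fail: conditions 1) and 2) fail because $\emptyset\neq\mathcal{V}_a$ for $a=1,2$, and condition 3) fails because $|\mathcal{Z}_{\mathcal{V}_1}^r|+|\mathcal{Z}_{\mathcal{V}_2}^r|=0<1$. Thus $\mathcal{G}$ is not $(r,1)$-robust with $l$ hops with respect to $\mathcal{F}=\emptyset$, so it is not $r$-robust with $l$ hops, which completes the contrapositive. I expect the main obstacle to be the separator-extraction step: making precise, for the directed multi-hop setting, that insufficient connectivity produces a cut of size at most $r-1$ that blocks \emph{all} paths between the two parts in \emph{both} directions (a one-directional cut would leave one of the two parts possibly reachable and break the count). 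Once such a bidirectional separator is in hand, the path-counting is the same clean Menger-style observation as in the one-hop case of \cite{leblanc2013resilient}, which is presumably why the authors regard the result as routine.
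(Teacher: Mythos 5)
The paper offers no proof to compare against: it dismisses this lemma as ``trivial and hence omitted.'' Judged on its own terms, the skeleton of your argument is certainly the intended one and is correct where it is complete. The reduction to $\mathcal{F}=\emptyset$ is valid (a path qualifying for $r$-reachability with respect to any $\mathcal{F}$ also qualifies with respect to $\emptyset$, and $\emptyset$ is admissible under the $(r-1)$-total model, so exhibiting a violating pair for $\mathcal{F}=\emptyset$ suffices). The counting step is airtight: every qualifying path into $i$ from outside its set meets $\mathcal{C}$, the meeting node is not $i$, independent paths share only $i$, so at most $\left|\mathcal{C}\right|\leq r-1$ such paths exist, giving $\mathcal{Z}^r_{\mathcal{V}_1}=\mathcal{Z}^r_{\mathcal{V}_2}=\emptyset$ and the failure of all three conditions of Definition~\ref{rs-robust} with $s=1$. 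This is exactly the multi-hop replacement for ``count the in-neighbors'' in the one-hop argument of \cite{leblanc2013resilient}.

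The step you flag as the ``main obstacle'' is, however, not a formality you can defer: whether a cut of size at most $r-1$ separating $\mathcal{V}_1$ from $\mathcal{V}_2$ \emph{in both directions} exists depends entirely on which notion of $r$-connectivity for digraphs is meant, and the paper never says. If $r$-connected refers to the underlying undirected graph --- equivalently, a cut is a node set whose removal leaves two nonempty parts with no edges between them in either direction --- then your separator exists by definition (take $\mathcal{V}_1$ to be one weak component of $\mathcal{G}$ minus the cut and $\mathcal{V}_2$ the rest) and your proof is finished. If instead $r$-connected is the strong-connectivity version (removal of any fewer than $r$ nodes leaves the digraph strongly connected), the bidirectional separator genuinely need not exist, and in fact the lemma itself is false: the directed path $1\to 2\to 3$ is $1$-robust with $l$ hops for every $l$ (in every pair of nonempty disjoint subsets, node $2$ or node $3$ has an incoming path from outside its set), yet it is not strongly connected. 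So your argument proves the statement for the weak/undirected reading of connectivity, which is the only reading under which it holds for general digraphs; to close the proof you should state that reading explicitly rather than leaving the separator extraction as an anticipated difficulty.
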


\begin{lemma}\label{mini-degree}
	If a directed graph $\mathcal{G} = (\mathcal{V},\mathcal{E})$ is $(r,s)$-robust with $l$ hops under the $(r-1)$-total model, where $0\leq r\leq \lceil n/2\rceil$ and $1\leq s\leq n$, then $\mathcal{G}$ has the minimum in-degree $\delta(\mathcal{G})$ as
	\begin{equation*}
	\delta(\mathcal{G}) \geq \left\{
	\begin{array}{lll} 
	r+s-1 &\textup{if} \ s<r,\\
	2r-2 & \textup{if}\ s\geq r.
	\end{array}
	\right.
	\end{equation*}
\end{lemma}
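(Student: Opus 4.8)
The plan is to argue by contraposition, generalizing the one-hop degree argument of \cite{leblanc2013resilient}. I fix a node $v$ attaining the minimum in-degree $\delta(\mathcal{G})$ and, supposing $\delta(\mathcal{G})$ falls below the asserted bound, I construct nonempty disjoint sets $\mathcal{V}_1,\mathcal{V}_2$ together with a set $\mathcal{F}$ satisfying $|\mathcal{F}|\le r-1$ for which all three alternatives of Definition~\ref{rs-robust} fail, contradicting $(r,s)$-robustness with $l$ hops under the $(r-1)$-total model. The recurring quantitative fact I would use is that independent paths sharing only their common endpoint must enter it through distinct incoming edges; hence the number of independent $l$-hop paths reaching a node from outside any set containing it is at most that node's in-degree. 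In particular a node of in-degree below $r$ can never be $r$-reachable, and more generally the reachability of $v$ can be lowered only by relocating its in-neighbors into the same set as $v$.

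I would first dispose of the case $s\ge r$ by reducing it to the case $s<r$. Assuming $r\ge 2$ (the statement is vacuous otherwise), we have $1\le r-1\le s$, so the monotonicity of robustness (robustness is preserved when $r$ and $s$ are decreased, the first lemma of this section) shows that $\mathcal{G}$ is also $(r,r-1)$-robust with $l$ hops. Applying the bound established in the regime $s<r$ with the value $s'=r-1$ then yields $\delta(\mathcal{G})\ge r+(r-1)-1=2r-2$, which is exactly the claim for $s\ge r$. It therefore suffices to prove $\delta(\mathcal{G})\ge r+s-1$ whenever $s<r$.

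For the regime $s<r$, suppose for contradiction that $\delta(\mathcal{G})\le r+s-2$, and place $v$ into $\mathcal{V}_1$. Because a direct edge coming from outside $\mathcal{V}_1$ can never be cut by any choice of $\mathcal{F}$, the only way to prevent $v$ from being $r$-reachable is to move enough of its in-neighbors inside $\mathcal{V}_1$: relocating $\delta(\mathcal{G})-(r-1)\le s-1$ of them leaves $v$ with at most $r-1$ outside in-neighbors, and simultaneously placing these relocated nodes into $\mathcal{F}$ severs the relayed ($\ge 2$-hop) paths that would otherwise reach $v$ through them. This consumes at most $s-1\le r-2$ of the budget and bounds $|\mathcal{Z}_{\mathcal{V}_1}^r|$ by the number of helper nodes added. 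The set $\mathcal{V}_2$ is then built so as to contain a non-$r$-reachable node while consuming only the remaining budget of $\mathcal{F}$, and I would verify that conditions 1), 2), and 3) of Definition~\ref{rs-robust} all fail for the pair $(\mathcal{V}_1,\mathcal{V}_2)$ and this $\mathcal{F}$.

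The step I expect to be the main obstacle is the simultaneous enforcement of all three failures: conditions 1) and 2) require a non-$r$-reachable node in each of $\mathcal{V}_1$ and $\mathcal{V}_2$, while condition 3) requires the total count $|\mathcal{Z}_{\mathcal{V}_1}^r|+|\mathcal{Z}_{\mathcal{V}_2}^r|$ to remain at most $s-1$. Since the bounded budget $|\mathcal{F}|\le r-1$ must be shared between severing the relayed paths into both sets, and since the auxiliary in-neighbors one is forced to introduce may themselves turn out to be $r$-reachable, the construction must be organized so that these helper vertices do not raise the $r$-reachable total to $s$; exploiting the slack $\delta(\mathcal{G})-(r-1)\le s-1$ available in this regime is what makes the accounting close. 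The multi-hop character of the problem enters only through the dichotomy that one-hop edges are unblockable, whereas longer relayed paths can be destroyed by inserting their interior nodes into $\mathcal{F}$.
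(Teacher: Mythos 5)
Your overall strategy is the contrapositive of the paper's direct argument and the skeleton is right: the paper also fixes a node $i$, puts it in $\mathcal{V}_1$ together with some of its in-neighbors which are simultaneously declared to be $\mathcal{F}$, and extracts the degree bound from the fact that $i$ must still be $r$-reachable via paths that neither originate inside $\mathcal{V}_1$ nor pass through $\mathcal{F}$. Your reduction of the case $s\geq r$ to the case $s<r$ via monotonicity in $s$ (apply the $s<r$ bound with $s'=r-1$) is a valid and slightly cleaner reorganization; the paper instead redoes the construction with $r-2$ in-neighbors, which is the same computation.

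However, there is a concrete gap: you never construct $\mathcal{V}_2$, and you explicitly flag the simultaneous failure of conditions 2) and 3) as an unresolved obstacle, worrying that the budget $|\mathcal{F}|\leq r-1$ must be shared between the two sets and that the helper in-neighbors might themselves be $r$-reachable. Both worries dissolve once you make the choice the paper makes: take $\mathcal{V}_2=\mathcal{V}\setminus\mathcal{V}_1$. Then no budget at all is spent on $\mathcal{V}_2$, because any collection of independent paths into a node of $\mathcal{V}_2$ must have pairwise distinct source nodes lying outside $\mathcal{V}_2$, i.e., inside $\mathcal{V}_1$, and $|\mathcal{V}_1|\leq s<r$; hence $\mathcal{Z}_{\mathcal{V}_2}^r=\emptyset$ regardless of $\mathcal{F}$, so condition 2) fails. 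For condition 3), the helper nodes being $r$-reachable is harmless: since $v$ itself is not $r$-reachable, $|\mathcal{Z}_{\mathcal{V}_1}^r|\leq|\mathcal{V}_1|-1\leq s-1$, and adding $|\mathcal{Z}_{\mathcal{V}_2}^r|=0$ keeps the total below $s$. With this one observation (independent paths require distinct sources, so a set of fewer than $r$ nodes cannot $r$-reach anything in its complement) your argument closes exactly as the paper's does; without it, the proof as written is incomplete at precisely the step you identify as the main difficulty.
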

\vspace{0.12cm}

\begin{proof}
	The case for $n\leq 2$ and $r\leq 1$ is trivial. Consider the case for $n\geq 3$ and $2\leq r\leq \lceil n/2\rceil$.
	Fix node $i\in \mathcal{V}$. Choose $\mathcal{V}_1=\{i\}$ and $\mathcal{V}_2= \mathcal{V}\setminus \mathcal{V}_1$. Then, $| \mathcal{Z}_{\mathcal{V}_2}^r|=0$ and $| \mathcal{Z}_{\mathcal{V}_1}^r| =|\mathcal{V}_1|$. Hence, node $i$ must have at least $r$ independent paths from outside and the number of the in-neighbors of node $i$ should be $|\mathcal{N}_i^-|\geq r$. (Note that the malicious nodes can be the source nodes of these independent paths.)
	
	When $s<r$, form $\mathcal{V}_1$ by choosing $s-1$ in-neighbors of node $i$ along with node $i$ itself. Then, choose $\mathcal{V}_2= \mathcal{V}\setminus \mathcal{V}_1$. 
	Since $|\mathcal{V}_1| = s < r$, then, $| \mathcal{Z}_{\mathcal{V}_2}^r|=0$ and $| \mathcal{Z}_{\mathcal{V}_1}^r| =|\mathcal{V}_1|$. The worst case is that the $s-1$ in-neighbors of node $i$ are all malicious, node $i$ should have at least $r$ independent paths from outside, and these paths do not contain any malicious nodes as internal nodes.
	This implies that node $i$ has additional
	$r$ in-neighbors outside of $\mathcal{V}_1$, thereby guaranteeing $|\mathcal{N}_i^-|\geq r+s-1$. 
	
	When $s \geq r$, form $\mathcal{V}_1$ by choosing $r-2$ in-neighbors of node $i$ along with node $i$ itself. Then, choose
	$\mathcal{V}_2= \mathcal{V}\setminus \mathcal{V}_1$. Since $|\mathcal{V}_1| < r$ and $s \geq r$, we have $| \mathcal{Z}_{\mathcal{V}_2}^r|=0$ and $| \mathcal{Z}_{\mathcal{V}_1}^r| =|\mathcal{V}_1|$. 
	Consider the worst case that the $r-2$ in-neighbors of node $i$ are all malicious.
	It must be that node $i$ has additional $r$ in-neighbors outside of $\mathcal{V}_1$, thereby guaranteeing $|\mathcal{N}_i^-|\geq 2r-2$. Since we choose $i\in \mathcal{V}$
	arbitrarily, we have proved the bound for $\delta(\mathcal{G})$.
\end{proof}

Here, we have extended the proof in \cite{leblanc2013resilient} to the multi-hop case. 
From Lemma \ref{mini-degree}, we conclude that 
$\mathcal{G}$ should have the minimum in-degree no less than $2f$
to guarantee resilient consensus using the MW-MSR algorithm under the $f$-total malicious model. This holds since the underlying graph $\mathcal{G}$ is at least $(f+1,f+1)$-robust with $l$ hops for achieving resilient consensus. 
For MSR algorithms, nodes with $2f$ in-neighbors may not use any values from neighbors, which may appear problematic especially if there are multiple such nodes. However, there are examples such as cycle graphs\footnote[3]{A cycle graph is an undirected graph consisting of only a single cycle.} satisfying $(f+1,f+1)$-robust with $l$ hops, while every node in cycle graphs has in-degree as $2f$. Detailed analysis is given in the next subsection.

\subsection{The Case of Unbounded Path Length}
In this subsection, we discuss the relation of the graph conditions used in this paper and in the recent work \cite{khan2020exact}. The authors there studied the Byzantine binary consensus under the local broadcast model, which is essentially equivalent to the $f$-total malicious mode studied in the current paper. The proposed algorithm in \cite{khan2020exact} is based on a non-iterative flooding algorithm, where nodes must relay their values over the entire network along with the path information. This model corresponds to the case of unbounded path length in our work, i.e. $l\geq l^*$, where $l^*$ is the longest cycle-free path length of the network.
Moreover, they propose a tight necessary and sufficient graph condition for their algorithm to achieve binary consensus under synchronous updates. 
Our aim in this part of the paper is to establish that our graph condition is equivalent to theirs for the case of unbounded path length ($l\geq l^*$). Further, we will highlight that to achieve the same tolerance as the algorithm in \cite{khan2020exact}, our algorithm does not in general require $l^*$-hop communication necessarily for general graphs.

To show the equivalence between the two graph conditions, we introduce some graph notions from \cite{khan2020exact}. 
There, normal nodes update the states based on a modified certified propagation algorithm \cite{tseng2015broadcast}, i.e., when a normal node receives $f+1$ same binary values from different paths excluding a suspicious set $\mathcal{F}$, it commits its value to this value. Hence, their graph notion is closely related to the partitions of sets $\mathcal{V}$ and $\mathcal{F}$.

\begin{definition}
	For disjoint node sets $\mathcal{X}, \mathcal{Y}$, we say $\mathcal{X}\rightarrow \mathcal{Y}$ if and only if set $\mathcal{X}$ contains at least $f+1$ distinct incoming neighbors of $\mathcal{Y}$, i.e., $|\{i:(i, j)\in \mathcal{E}, i \in \mathcal{X}, j \in \mathcal{Y}\}| >f$. Denote $\mathcal{X}\not\rightarrow \mathcal{Y}$ when $\mathcal{X}\rightarrow \mathcal{Y}$ is not true.
	
\end{definition}

\begin{definition}
	For disjoint node sets $\mathcal{X}, \mathcal{Y}$ and for set $\mathcal{F}$, we say $\mathcal{X}\stackrel{\mathcal{F}}{\rightsquigarrow} \mathcal{Y}$ if and only if for every node $u\in \mathcal{Y}$, there exist at least $f+1$ disjoint
	$\mathcal{X}u$-paths that have only $u$ in common and none of them contains any internal node from the set $\mathcal{F}$. Denote $\mathcal{X}\stackrel{\mathcal{F}}{\not\rightsquigarrow} \mathcal{Y}$ when $\mathcal{X}\stackrel{\mathcal{F}}{\rightsquigarrow} \mathcal{Y}$ is not true.
\end{definition}

Two graph notions are introduced next, called conditions NC and SC. They are known to be equivalent \cite{khan2020exact}.

\begin{definition}
	\textit{(Condition NC)}
	Given a graph $\mathcal{G} = (\mathcal{V},\mathcal{E})$, condition NC is said to hold if for every
	partition $\mathcal{L},\mathcal{C},\mathcal{R}$ of $\mathcal{V}$, and for every set $\mathcal{F}$ with $|\mathcal{F}|\leq f$,
	where both $\mathcal{L} \setminus \mathcal{F}$ and $\mathcal{R} \setminus \mathcal{F}$ are non-empty, we have that either $\mathcal{R} \cup \mathcal{C}\rightarrow \mathcal{L} \setminus \mathcal{F}$ or $\mathcal{L} \cup \mathcal{C} \rightarrow \mathcal{R} \setminus \mathcal{F}$.
	
	\textit{(Condition SC)}
	Given a graph $\mathcal{G} = (\mathcal{V},\mathcal{E})$, condition SC is said to hold if for every
	partition $\mathcal{L},\mathcal{R}$ of $\mathcal{V}$, and for every set $\mathcal{F}$ with $|\mathcal{F}|\leq f$,
	where both $\mathcal{L} \setminus \mathcal{F}$ and $\mathcal{R} \setminus \mathcal{F}$ are non-empty, we have that either $\mathcal{L}\stackrel{\mathcal{F}}{\rightsquigarrow} \mathcal{R}\setminus\mathcal{F}$ or $\mathcal{R}\stackrel{\mathcal{F}}{\rightsquigarrow} \mathcal{L}\setminus\mathcal{F}$.
\end{definition}

We are now ready to show that condition NC (and hence condition SC) is equivalent to our robust graph notion with multi-hop communication used in Theorem \ref{syn}.

\begin{proposition}\label{connectivity}
	Consider a directed graph $\mathcal{G} = (\mathcal{V},\mathcal{E})$ with $l$-hop communication where $l\geq l^*$.
	The graph $\mathcal{G}$ is $(f + 1, f + 1)$-robust with $l$ hops if and only if condition NC holds.
\end{proposition}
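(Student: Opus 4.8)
The plan is to reduce the claim to the equivalent condition SC and then prove that $(f+1,f+1)$-robustness with $l$ hops is equivalent to condition SC, invoking the known equivalence of NC and SC from \cite{khan2020exact}. Condition SC is phrased through the relation $\stackrel{\mathcal{F}}{\rightsquigarrow}$, i.e. the existence of $f+1$ internally disjoint, $\mathcal{F}$-avoiding paths, which is exactly the language of $(f+1)$-reachability. The first step is therefore a bridge lemma: for $l\ge l^*$, a node $u\in\mathcal{V}_1$ is $(f+1)$-reachable with $l$-hop communication w.r.t. $\mathcal{F}$ if and only if there exist $f+1$ disjoint $(\mathcal{V}\setminus\mathcal{V}_1)u$-paths that avoid $\mathcal{F}$ as internal nodes; equivalently, $\mathcal{Z}_{\mathcal{V}_1}^{f+1}=\mathcal{V}_1$ iff $(\mathcal{V}\setminus\mathcal{V}_1)\stackrel{\mathcal{F}}{\rightsquigarrow}\mathcal{V}_1$. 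The only subtlety here is the hop bound: since internally disjoint paths may be taken cycle-free, each has length at most $l^*\le l$, so the ``at most $l$ hops'' restriction in the definition of reachability becomes vacuous in this regime. This is precisely where the hypothesis $l\ge l^*$ enters.

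I would carry out both implications by contraposition, using Menger's theorem to trade ``fewer than $f+1$ disjoint $\mathcal{F}$-avoiding paths'' for ``a separating vertex set of size at most $f$.'' For the implication that SC-failure forces robustness-failure: from a partition $\mathcal{L},\mathcal{R}$ and set $\mathcal{F}$ witnessing the failure, there is a node $u_1\in\mathcal{L}\setminus\mathcal{F}$ admitting at most $f$ disjoint $\mathcal{F}$-avoiding paths from $\mathcal{R}$, hence a cut of at most $f$ vertices separating $\mathcal{R}$ from $u_1$. I would take $\mathcal{V}_1$ to be the ``trapped'' region consisting of $u_1$ together with the nodes reaching $u_1$ without crossing this cut, promote the cut vertices into the adversary set, and symmetrically construct $\mathcal{V}_2$ around $u_2\in\mathcal{R}\setminus\mathcal{F}$; then no node of $\mathcal{V}_1$ (resp. $\mathcal{V}_2$) is $(f+1)$-reachable from outside its set, so all three conditions of Definition \ref{rs-robust} fail. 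The reverse implication runs this construction backwards: given nonempty disjoint $\mathcal{V}_1,\mathcal{V}_2$ and $\mathcal{F}$ with $\mathcal{Z}_{\mathcal{V}_1}^{f+1}\subsetneq\mathcal{V}_1$, $\mathcal{Z}_{\mathcal{V}_2}^{f+1}\subsetneq\mathcal{V}_2$, and $|\mathcal{Z}_{\mathcal{V}_1}^{f+1}|+|\mathcal{Z}_{\mathcal{V}_2}^{f+1}|\le f$, I would absorb the reachable nodes $\mathcal{Z}_{\mathcal{V}_1}^{f+1}\cup\mathcal{Z}_{\mathcal{V}_2}^{f+1}$ into a new faulty set $\mathcal{F}'$ of size at most $f$, extend $\mathcal{V}_1,\mathcal{V}_2$ to a partition $\mathcal{L},\mathcal{R}$ of $\mathcal{V}$ by assigning the remaining nodes to either side, and verify through the bridge lemma and Menger that the un-reachable witnesses survive as nodes violating $\mathcal{R}\stackrel{\mathcal{F}'}{\rightsquigarrow}\mathcal{L}\setminus\mathcal{F}'$ and $\mathcal{L}\stackrel{\mathcal{F}'}{\rightsquigarrow}\mathcal{R}\setminus\mathcal{F}'$.

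The hard part will be the set bookkeeping around the faulty set, which stems from a genuine asymmetry between the two notions: $(f+1)$-reachability permits the nodes of $\mathcal{F}$ to serve as path sources or destinations while forbidding them only as internal nodes, whereas condition SC insists the source side be $\mathcal{R}$ (resp. $\mathcal{L}$) and asks for reachability only of the non-faulty nodes $\mathcal{L}\setminus\mathcal{F}$ (resp. $\mathcal{R}\setminus\mathcal{F}$). Reconciling this requires care in deciding which nodes enter $\mathcal{F}'$, $\mathcal{V}_1$, and $\mathcal{V}_2$, and in checking that the separating cut can always be chosen of size at most $f$ while keeping the witness nodes outside it.

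The counting condition (item 3 of Definition \ref{rs-robust}), with its bound $f+1$, is the other delicate point. Since an SC-failure only exhibits two un-reachable witnesses rather than a global shortage of reachable nodes, a direct choice such as $\mathcal{V}_1=\mathcal{L},\ \mathcal{V}_2=\mathcal{R}$ need not satisfy $|\mathcal{Z}_{\mathcal{V}_1}^{f+1}|+|\mathcal{Z}_{\mathcal{V}_2}^{f+1}|\le f$, because many other nodes may remain reachable. This is exactly why the configuration must be contracted to the trapped regions described above, so that the inequality genuinely holds; obtaining this inequality, rather than merely the failure of conditions 1 and 2, is where the argument relies on promoting the entire Menger cut into the adversary set. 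I expect this interplay between the counting bound and the trapped-region construction to be the main obstacle, with the $\mathcal{F}$-node source/destination asymmetry a close second.
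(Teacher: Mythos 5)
Your overall skeleton---pass to condition SC via the known NC$\Leftrightarrow$SC equivalence, and observe that $l\ge l^*$ makes the hop bound in the reachability definition vacuous because internally disjoint paths can be taken cycle-free---matches the paper. But the paper deliberately orients the equivalence the other way in the forward direction: it proves ``robustness $\Rightarrow$ NC'' \emph{directly}, because a single $(f+1)$-reachable node $i\in\mathcal{L}\setminus\mathcal{F}$ already supplies $f+1$ distinct in-neighbors of $\mathcal{L}\setminus\mathcal{F}$ (one crossing edge per independent path), which is all NC asks for; and it proves ``$\lnot$robustness $\Rightarrow$ $\lnot$SC'' directly, because a single non-$(f+1)$-reachable node in each of $\mathcal{V}_1,\mathcal{V}_2$ is exactly an SC witness. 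No Menger argument and no construction of new node sets is needed in either direction. By instead attacking ``$\lnot$SC $\Rightarrow$ $\lnot$robustness'' head-on, you have chosen the hard orientation (SC's universal quantifier over all of $\mathcal{L}\setminus\mathcal{F}$ is what makes it hard to produce from robustness, and what the NC detour is there to avoid), and this is where your plan breaks.

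The concrete flaw is the step ``promote the cut vertices into the adversary set.'' A failure of $(f+1,f+1)$-robustness under the $f$-total model must be exhibited with respect to some adversary set of cardinality at most $f$, but your construction produces two Menger cuts $C_1,C_2$, each of size up to $f$, on top of the original $\mathcal{F}$; the set $\mathcal{F}\cup C_1\cup C_2$ can have up to $3f$ nodes and therefore does not witness any violation of the definition. Moreover, the promotion is not actually what delivers the counting condition: if $\mathcal{V}_1$ is the trapped region determined by $C_1$ (nodes cut off from $\mathcal{R}$ in the $\mathcal{F}$-modified graph), then \emph{every} $\mathcal{F}$-avoiding path from outside $\mathcal{V}_1$ into $\mathcal{V}_1$ must contain a vertex of $C_1$, so no node of $\mathcal{V}_1$ can have more than $|C_1|\le f$ independent such paths and $\mathcal{Z}_{\mathcal{V}_1}^{f+1}=\emptyset$ already with the \emph{original} $\mathcal{F}$ as the adversary set. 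So the load-bearing move you single out is both budget-violating and unnecessary. A similar bookkeeping problem affects your reverse direction: absorbing $\mathcal{Z}_{\mathcal{V}_1}^{f+1}\cup\mathcal{Z}_{\mathcal{V}_2}^{f+1}$ into a new set $\mathcal{F}'$ either exceeds the size-$f$ budget (if $\mathcal{F}'\supseteq\mathcal{F}$) or changes the internal-avoidance set, in which case ``at most $f$ $\mathcal{F}$-avoiding paths'' no longer implies ``at most $f$ $\mathcal{F}'$-avoiding paths''; the paper simply keeps $\mathcal{F}$ itself and exhibits the non-reachable nodes as the SC witnesses.
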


\begin{proof}
	We first show the only if part. Since the graph is $(f+1,f+1)$-robust with $l$ hops under the $f$-total model, at least one of the three conditions in Definition \ref{rs-robust} holds. For every partition $\mathcal{L},\mathcal{R}$ of $\mathcal{V}$, and for every set $|\mathcal{F}|\leq f$, where both $\mathcal{L} \setminus \mathcal{F}$ and $\mathcal{R} \setminus \mathcal{F}$ are non-empty, we conclude that there is at least one normal node $i$ in the union of $\mathcal{L}$ and $\mathcal{R}$ that has $f+1$ independent paths from outside and these paths do not contain any internal nodes in $\mathcal{F}$. This can be seen in the proof of Theorem \ref{syn}. Suppose that node $i\in \mathcal{L}$ has this property. 
	For each independent path, there exists at least one edge that goes from the outside of $\mathcal{L}$ to a node in $\mathcal{L}$, and thus, $\mathcal{R} \cup \mathcal{C}\rightarrow \mathcal{L} \setminus \mathcal{F}$. The case for $i\in \mathcal{R}$ can be proved similarly.

	Next, we show the if part by contradiction. Suppose that $\mathcal{G}$ is not $(f + 1, f + 1)$-robust with $l$ hops. Then, none of the three conditions in Definition \ref{rs-robust} holds. For every partition $\mathcal{L},\mathcal{R}$ of $\mathcal{V}$, and for every set $|\mathcal{F}|\leq f$, where both $\mathcal{L} \setminus \mathcal{F}$ and $\mathcal{R} \setminus \mathcal{F}$ are non-empty, we conclude that all the normal nodes in the union of $\mathcal{L}$ and $\mathcal{R}$ have at most $f$ independent paths from outside where these paths do not contain any internal nodes in $\mathcal{F}$. Hence, $\mathcal{L}\stackrel{\mathcal{F}}{\not\rightsquigarrow} \mathcal{R}\setminus\mathcal{F}$ and $\mathcal{R}\stackrel{\mathcal{F}}{\not\rightsquigarrow} \mathcal{L}\setminus\mathcal{F}$ (i.e., condition SC does not hold), and thus we have contradiction. 
	
	Finally, since condition SC and condition NC are equivalent, we have proved that condition NC implies the $(f + 1, f + 1)$-robustness with $l$ hops.
\end{proof}

Although our condition coincides with those in \cite{khan2020exact} when $l\geq l^*$, we note that the maximum robustness of a given graph does not require $l\geq l^*$ necessarily. That is, for a given graph under the $f$-total model, our algorithm may not require $l^*$-hop communication to get the same tolerance as the algorithm using $l^*$-hop communication in \cite{khan2020exact}. We illustrate this fact by presenting the examples of cycle graphs.

\begin{figure}[t]
	\centering
	\includegraphics[width=1.35in]{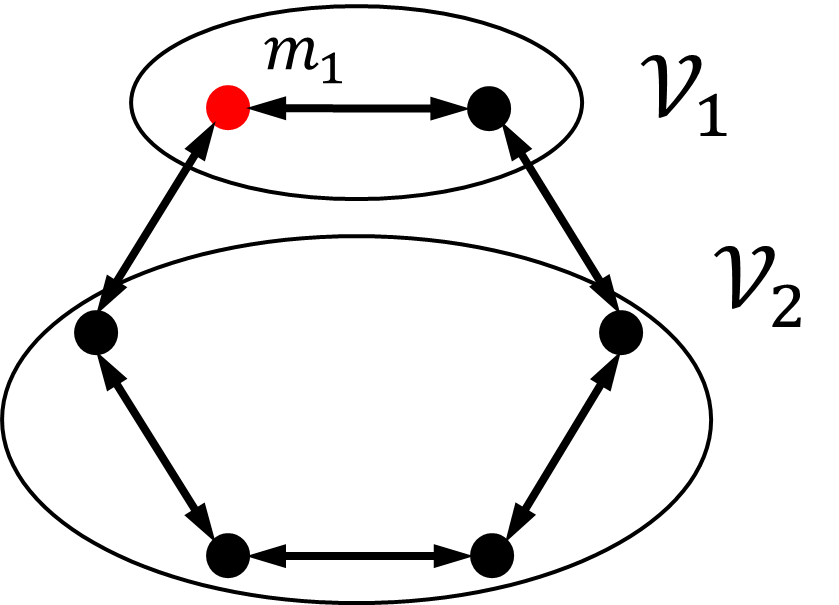}
	\vspace{-7pt}
	\caption{Illustration for cycle graphs.}
	\label{cyclegraph}
	\vspace*{-3.5mm}
\end{figure}

From the example graph in Fig.~\ref{graph1}(a), we can see that 
for any cycle graph, the following lemma holds. This indicates that resilient consensus is guaranteed using the MW-MSR algorithm in such graphs when one node behaves maliciously. In \cite{khan2020exact}, it is reported that a cycle graph can tolerate one malicious node, but their algorithm uses the $l^*$-hop communication. In this respect, as the following lemma suggests, our algorithm is efficient by exploiting the ability of the MW-MSR algorithm and the graph condition is tighter than the one in \cite{khan2020exact}. Moreover, note that a cycle graph is 2-connected, which is the minimum connectivity requirement for any MSR-based algorithm to guarantee resilient consensus for $f=1$. 

\begin{lemma}
	The cycle graph $\mathcal{C}_n$ with $n>2$ nodes is $(2,2)$-robust with $\lceil l^*/2 \rceil$ hops under the $1$-total model.
\end{lemma}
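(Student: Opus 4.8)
The plan is to argue by contradiction using a geometric characterization of $2$-reachability that is special to cycle graphs. First I would record that the longest cycle-free path in $\mathcal{C}_n$ visits all $n$ nodes, so $l^* = n-1$ and the relay range is $l = \lceil(n-1)/2\rceil$, whence $2l \geq n-1$. Since every node of $\mathcal{C}_n$ has exactly two neighbors, two independent paths into a node $i$ (sharing only $i$, and hence originating at two distinct outside sources) must enter $i$ through its two different edges; equivalently, one path runs along the clockwise arc and the other along the counterclockwise arc emanating from $i$. Thus $i\in\mathcal{V}_a$ is $2$-reachable w.r.t. $\mathcal{V}_a$ and $\mathcal{F}$ iff \emph{both} the clockwise ray and the counterclockwise ray of length $\leq l$ from $i$ reach a node outside $\mathcal{V}_a$ without passing the single node of $\mathcal{F}$ as an interior vertex. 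The crucial consequence of $l\geq\lceil(n-1)/2\rceil$ is that these two rays together cover the whole cycle (overlapping in at most the antipodal node when $n$ is even), so a ray is ``blocked'' only when it lies entirely inside $\mathcal{V}_a$ or is cut off by the $\mathcal{F}$-node.

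Next I would suppose, for contradiction, that $(2,2)$-robustness with $l$ hops fails for some disjoint nonempty $\mathcal{V}_1,\mathcal{V}_2$ and some $\mathcal{F}$ with $|\mathcal{F}|\leq 1$. Then neither set is entirely $2$-reachable and $|\mathcal{Z}_{\mathcal{V}_1}^{2}|+|\mathcal{Z}_{\mathcal{V}_2}^{2}|\leq 1$, so at least one set, say $\mathcal{V}_2$, satisfies $\mathcal{Z}_{\mathcal{V}_2}^{2}=\emptyset$. The heart of the argument is to show that $\mathcal{Z}_{\mathcal{V}_2}^{2}=\emptyset$ forces $\mathcal{V}_2$ to be a single contiguous arc whose complement is an arc of length at most $l$. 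For the first part I would observe that any maximal $\mathcal{V}_2$-arc not containing the $\mathcal{F}$-node and of length at most $2l-1$ has an interior node that reaches both neighboring gaps within $l$ hops, hence is $2$-reachable; since two or more $\mathcal{V}_2$-arcs would leave an $\mathcal{F}$-free arc of length at most $n-3\le 2l-1$, this rules out the multi-arc case. For the single-arc case I would do a short interval computation: parametrizing the arc as $[a,b]$, a node is $2$-reachable exactly when it lies in a central sub-interval, and eliminating all such nodes forces $|\mathcal{V}_2|$ to be so large (using $2l\ge n-1$) that the complement shrinks to an arc of length $\leq l$; moreover this elimination is possible only if the $\mathcal{F}$-node lies in $\mathcal{V}_2$ or is the unique complement node, so in particular it never lies strictly inside the complement arc.

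Finally I would exploit this confinement. With $\mathcal{V}_1$ contained in a complement arc $[c,d]$ of length $\leq l$ whose two flanking nodes $c-1,d+1$ are distinct members of $\mathcal{V}_2$, every $j\in\mathcal{V}_1$ reaches $d+1$ along its clockwise ray and $c-1$ along its counterclockwise ray, each within $l$ hops; the interiors of these two paths stay inside $[c,d]$ and therefore avoid the $\mathcal{F}$-node, while $c-1$ and $d+1$ serve as two distinct outside sources (either may coincide with the $\mathcal{F}$-node, which is harmless since a source may lie in $\mathcal{F}$). Hence every node of $\mathcal{V}_1$ is $2$-reachable, i.e. $\mathcal{Z}_{\mathcal{V}_1}^{2}=\mathcal{V}_1$, contradicting that condition~1) of Definition~\ref{rs-robust} fails. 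As $\mathcal{F}$ was an arbitrary set of the $1$-total model, this proves $(2,2)$-robustness with $\lceil l^*/2\rceil$ hops.

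I expect the main obstacle to be the bookkeeping around the single $\mathcal{F}$-node. When $\mathcal{F}\subseteq\mathcal{V}_2$, it can cut a ray short and thereby let $\mathcal{V}_2$ be smaller than the $\lceil(n-1)/2\rceil+1$ needed in the $\mathcal{F}$-free case, so I must verify that the complement still fits inside a window of length $l$ and that the $\mathcal{F}$-node cannot sit strictly inside that window (otherwise it could block $\mathcal{V}_1$'s internal reach). A secondary subtlety, relevant when $n$ is even and the complement degenerates to a single node, is that the two required paths must originate at \emph{distinct} sources; the antipodal overlap of the two rays means a lone outside node cannot certify $2$-reachability, and this must be tracked so that the ``$2$-reachable interior node'' claims above are not applied vacuously.
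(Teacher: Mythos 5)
Your proposal is correct, and I checked the deferred computations: with $l=\lceil(n-1)/2\rceil$ one indeed gets $2l\geq n-1$, the multi-arc case dies because an $\mathcal{F}$-free maximal arc of $\mathcal{V}_2$ has at most $n-3\leq 2l-1$ nodes and hence a $2$-reachable middle node, and in the single-arc case $\mathcal{Z}_{\mathcal{V}_2}^{2}=\emptyset$ forces either a one-node complement or $\min$-side blocking by the $\mathcal{F}$-node with $|\mathcal{V}_2|\geq l+1$, so the complement is an arc of at most $n-l-1\leq l$ nodes not containing the $\mathcal{F}$-node in its interior; the final step then goes through. However, your route is genuinely different from the paper's. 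The paper argues directly: it sweeps over the possible shapes of $\mathcal{V}_1$ (a single node, non-adjacent nodes, two adjacent nodes, three adjacent nodes, and so on), in each case exhibiting which condition of Definition~\ref{rs-robust} holds, with the extremal configuration being $\mathcal{F}\subset\mathcal{V}_1$ a short arc and $\mathcal{V}_2$ the rest, where the middle node of $\mathcal{V}_2$ needs paths of exactly $\lceil l^*/2\rceil$ hops. You instead argue by contradiction, isolating a structural lemma --- a set with no $2$-reachable node must be a single arc whose complement is an arc of at most $l$ nodes avoiding the $\mathcal{F}$-node internally --- and then showing the other set is entirely $2$-reachable. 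The paper's enumeration is shorter and makes the tightness of the hop bound visible at a glance, but it is informal (``we can continue this process'') and phrased only for partitions of $\mathcal{V}$; your argument is more systematic, explicitly covers arbitrary disjoint nonempty pairs and arbitrary placement of the $\mathcal{F}$-node, and correctly flags the two genuine edge cases (the $\mathcal{F}$-node cutting a ray, and the single-node complement where the two rays share their source). Both rest on the same cycle geometry: each node has exactly two access directions, so $2$-reachability is a two-sided window condition on arcs.
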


\begin{proof}
	We need to show that for any node partition $\mathcal{V}_1$,
	$\mathcal{V}_2$ of $\mathcal{V}$, at least one of the conditions for $(2,2)$-robustness with $l$ hops holds. Let $\mathcal{F}$ be a set of a single node, i.e., $\mathcal{F}=\{m_1\}$ (satisfying the 1-total model).
	We first select a single node as set $\mathcal{V}_1$. Then, for any set $\mathcal{F}$ and for any set $\mathcal{V}_2$, condition 1) for $(2,2)$-robustness with $l$ hops holds. (The case is similar when we select non-neighboring nodes as set $\mathcal{V}_1$.)
	Second, we select two neighboring nodes as set $\mathcal{V}_1$. If node $m_1\notin \mathcal{V}_1$, then condition 1) for $(2,2)$-robustness with 2 hops holds. If node $m_1\in \mathcal{V}_1$, then node $m_1$ has 2 independent 2-hop paths originating from outside. To meet the conditions for $(2,2)$-robustness with $l$ hops, we need to find another node having this property in $\mathcal{V}_2$
	(see the illustration in Fig. \ref{cyclegraph}).
	The worst case is when all the remaining nodes are in $\mathcal{V}_2$. Then the middle node in $\mathcal{V}_2$ has 2 shortest paths originating from outside, which are of length $\lceil l^*/2 \rceil$ hops.
	
	We can continue this process and select three neighboring nodes as set $\mathcal{V}_1$. We can follow an analysis as above: If node $m_1\in \mathcal{V}_1$ and all the remaining nodes are in $\mathcal{V}_2$, then the middle node in $\mathcal{V}_2$ has 2 shortest paths originating from outside, which are shorter than the length $\lceil l^*/2 \rceil$ hops. This process can be continued until we switch sets $\mathcal{V}_1$ and $\mathcal{V}_2$. Hence, we conclude that the cycle graph $\mathcal{C}_n$ is $(2,2)$-robust with $\lceil l^*/2 \rceil$ hops.
\end{proof}

\section{Numerical Examples}

In this section, we conduct numerical simulations over networks
using both synchronous and asynchronous versions of the proposed MW-MSR algorithm to verify their effectiveness. 

\begin{figure}[t]
	\centering
	\includegraphics[width=3.2in,height=1.3in]{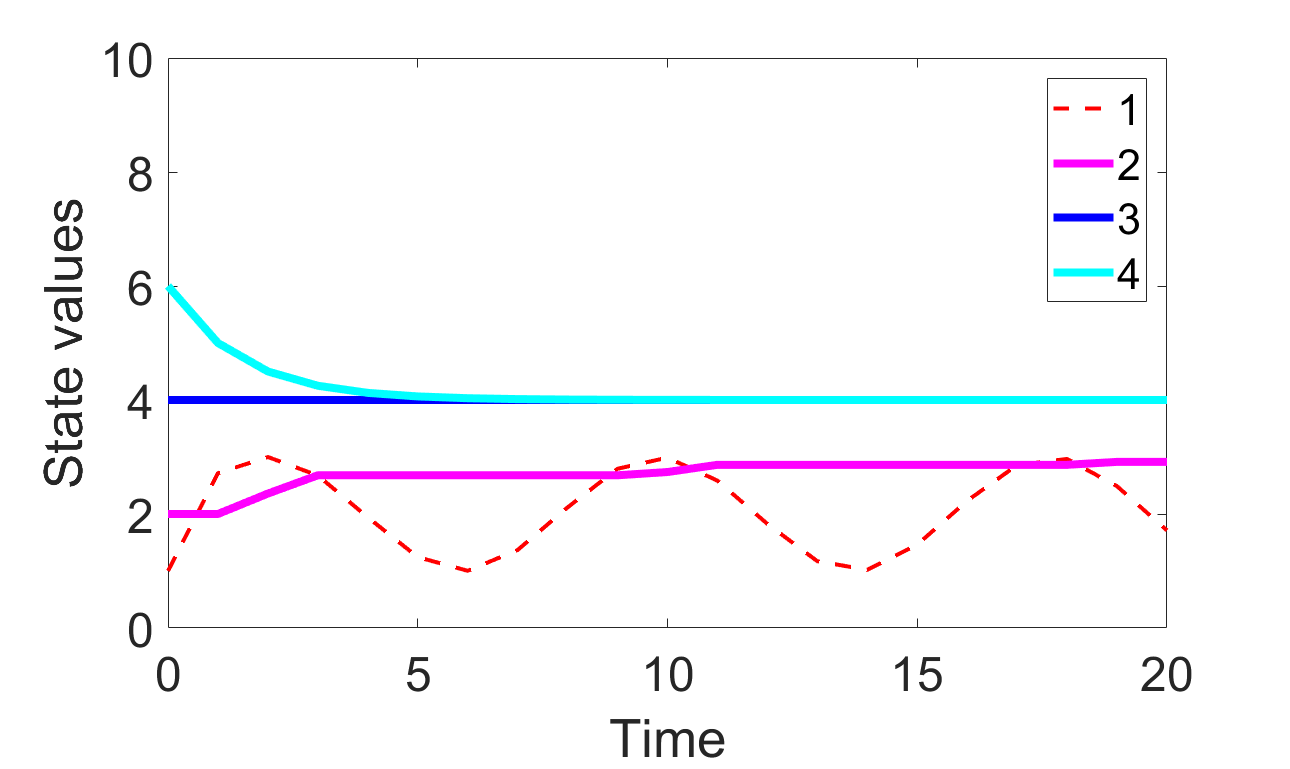}
	\vspace{-7pt}
	\caption{Time responses of the synchronous one-hop W-MSR algorithm.}
	\label{one-hop-state}
	\vspace*{-3.5mm}
\end{figure}

\begin{figure}[t]
	\centering
	\includegraphics[width=3.2in,height=1.3in]{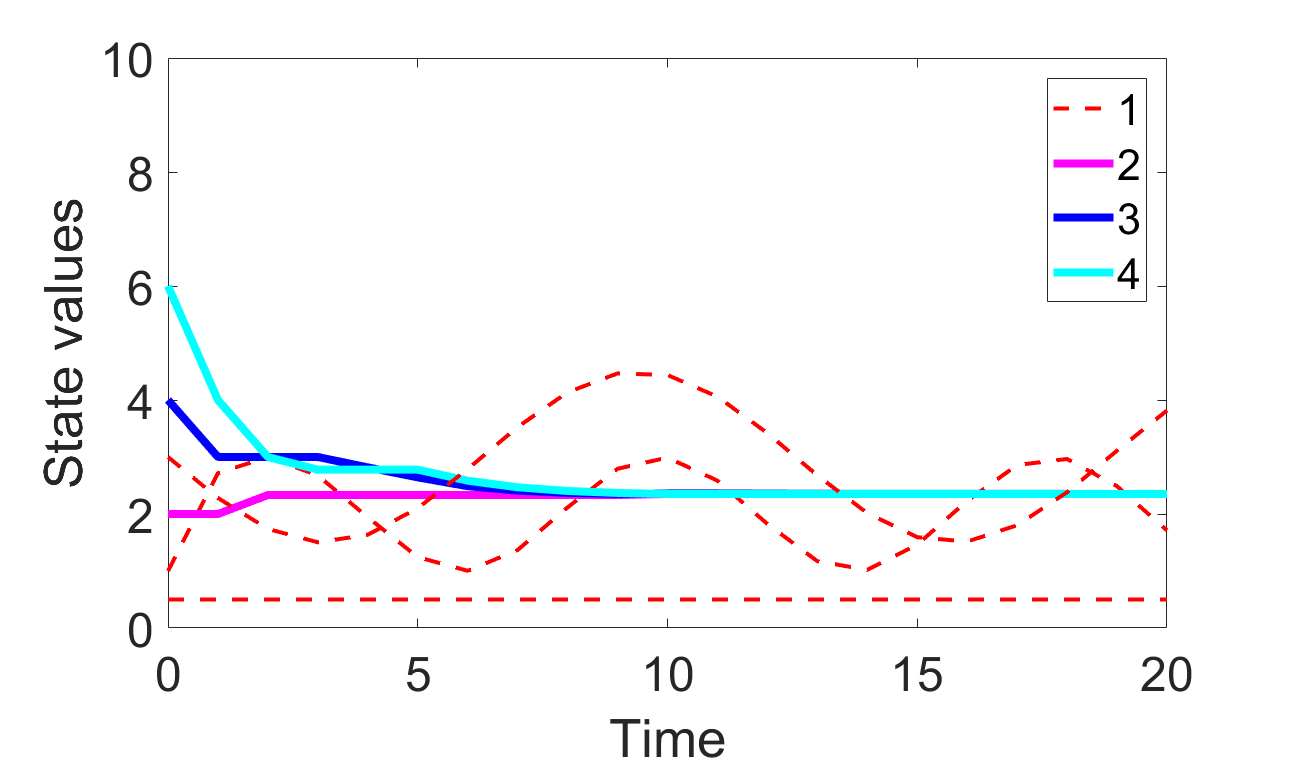}
	\vspace{-7pt}
	\caption{Time responses of the synchronous two-hop MW-MSR algorithm.}
	\label{two-hop-syn}
	\vspace*{-3.5mm}
\end{figure}

\begin{figure}[t]
	\centering
	\includegraphics[width=3.2in,height=1.3in]{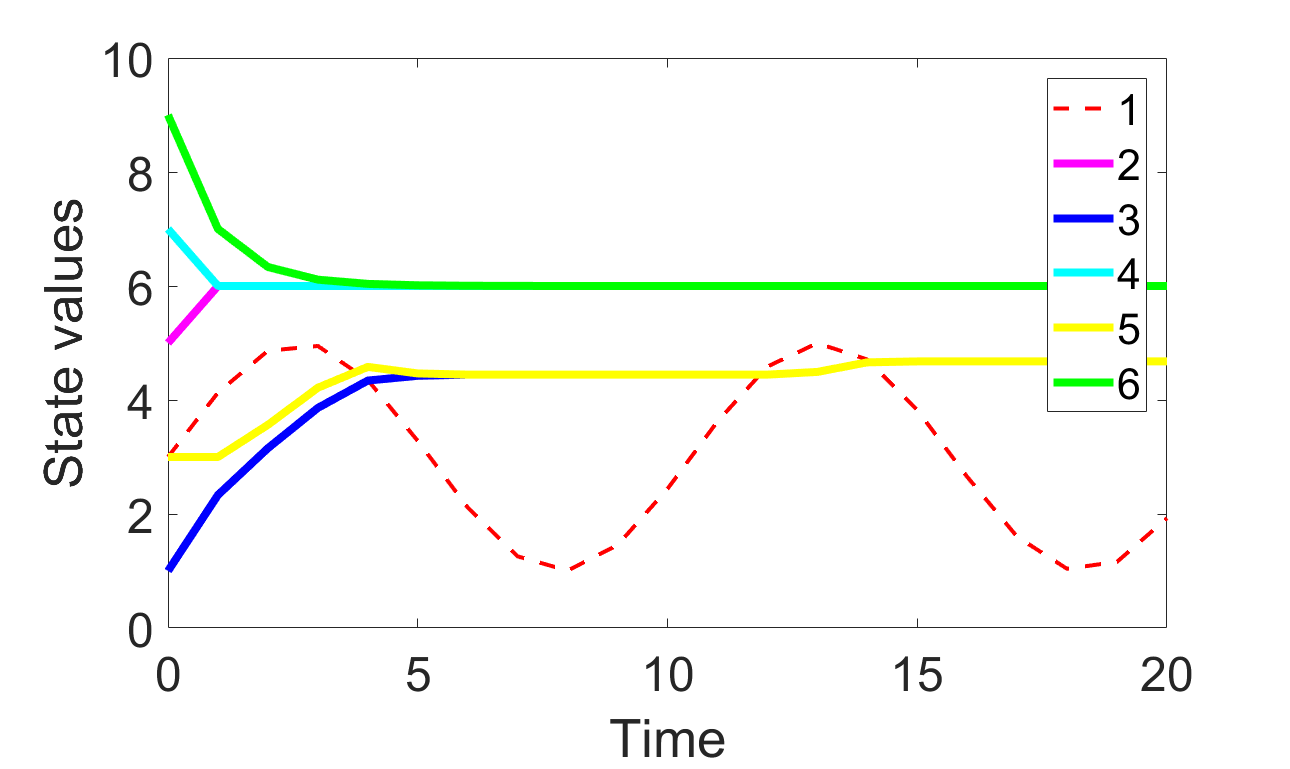}
	\vspace{-7pt}
	\caption{Time responses of the synchronous one-hop W-MSR algorithm.}
	\label{6_one-hop-state}
	\vspace*{-3.5mm}
\end{figure}

\begin{figure}[t]
	\centering
	
	\subfigure[\scriptsize{The states of all nodes.}]{
		\includegraphics[width=3.2in,height=1.3in]{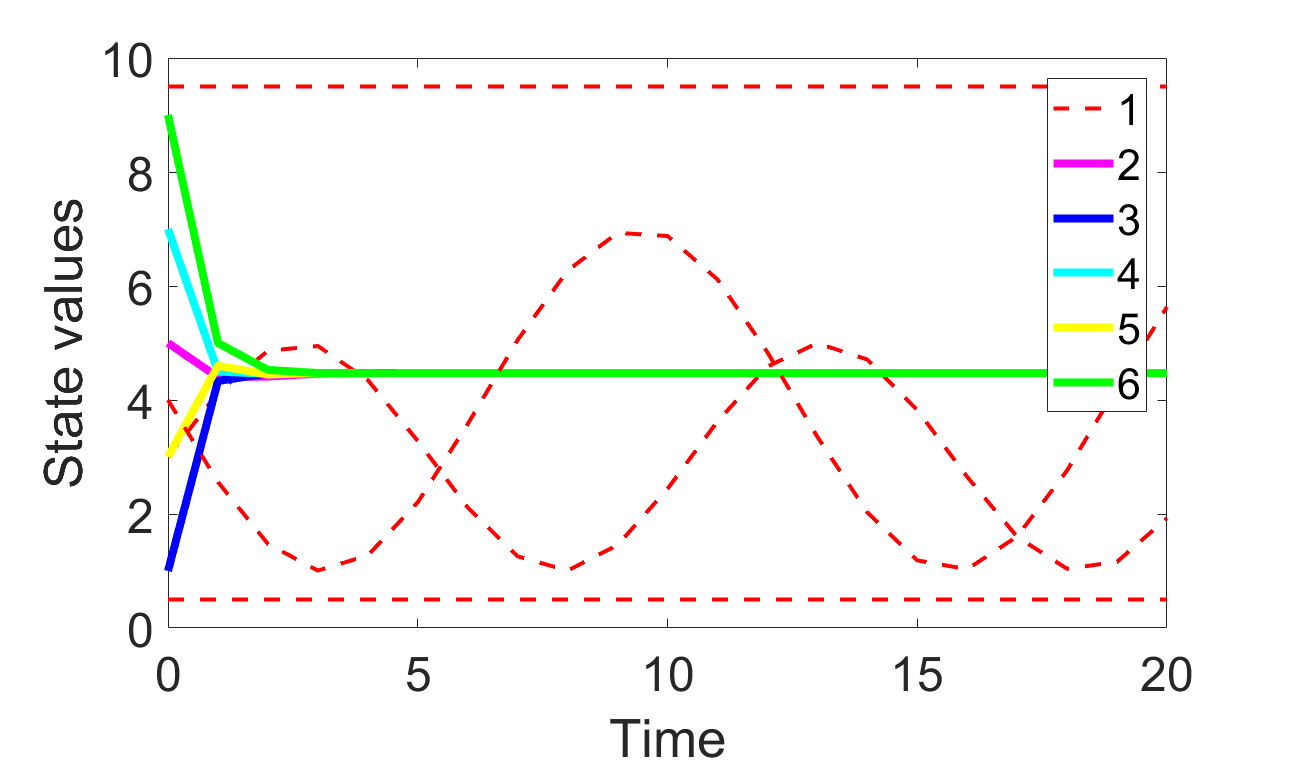}
	}
	
	\vspace{-7pt}
	\subfigure[\scriptsize{Consensus error.}]{
		\includegraphics[width=3.2in,height=1.3in]{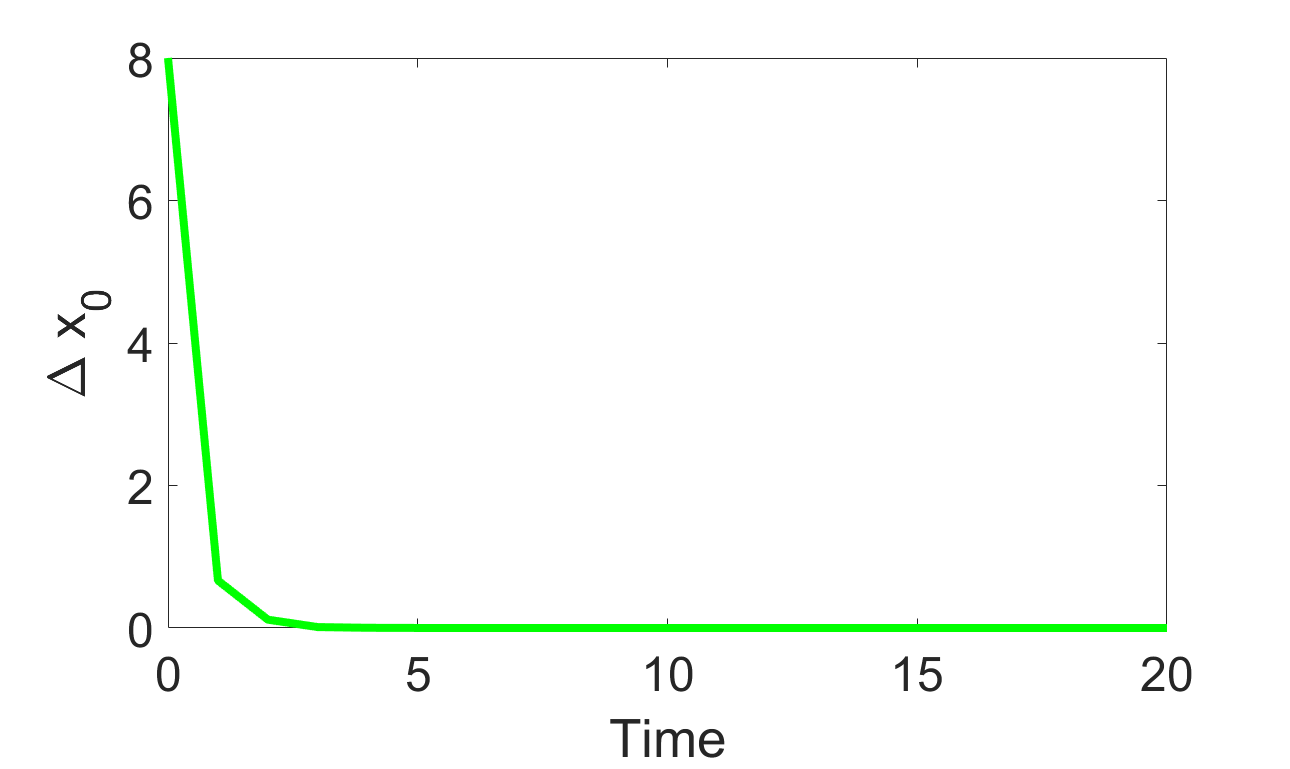}
	}
	\vspace{-7pt}
	\caption{Time responses of the synchronous two-hop MW-MSR algorithm.}
	\label{6_two-hop-syn}
	\vspace*{-3.5mm}
\end{figure}

\subsection{Synchronous MW-MSR Algorithm}

In this part, we conduct simulations for the synchronous MW-MSR algorithm.
Consider the undirected network in Fig.~\ref{graph1}(a) with $f=1$. 
Let the initial states be $x[0]=[1\ 2\ 4\ 6]^T$. 
This graph is not $(2, 2)$-robust with one hop, and hence, is not robust enough to tolerate $f=1$ using the conventional one-hop W-MSR algorithm. 
Here we set node 1 to be malicious and let the value of node 1 evolve based on the sine function w.r.t. time.
Then, normal nodes update their values using the one-hop W-MSR algorithm. 
The results are given in Fig.~\ref{one-hop-state}, and resilient consensus among normal nodes is not achieved.

Next, consider the two-hop case for this graph. For malicious node 1, we assume that it does not only manipulate its own value as in the one-hop case, but also relays false information. Specifically, when node 1 receives a message from node 4 and relays the value $x_4[k]$ to node 2, it manipulates this value based on the sine function w.r.t. time. Similarly, when node 1 receives a message from node 2 and relays the value $x_2[k]$ to node 4, it manipulates this value to a fixed value of 0.5.
Then, we observe that resilient consensus is achieved as shown in Fig.~\ref{two-hop-syn}.

\subsection{Asynchronous MW-MSR Algorithm}

In this part, we conduct simulations for the asynchronous MW-MSR algorithm.
Consider the directed network in Fig.~\ref{graph3} with $f=1$. 
Let the nodes take initial states as $x[0]=[3\ 5\ 1\ 7\ 3\ 9]^T$. 
This graph is not $2$-robust with one hop (e.g., consider the sets $\{1,3,5\}$ and $\{2,4,6\}$), and hence, is not robust enough to tolerate $f=1$ using the one-hop W-MSR algorithm. 
Here, we set node 1 to be malicious and assume that the value of node 1 evolves based on the sine function w.r.t. time.
Then, we apply the one-hop W-MSR algorithm and observe that resilient consensus among normal nodes is not achieved as shown in Fig.~\ref{6_one-hop-state}.

Next, consider the two-hop case for this graph. It becomes $3$-robust with $2$ hops, and hence, it is also $(2,2)$-robust with $2$ hops as Corollary \ref{asynapplysyn} indicates.
Therefore, in the two-hop case, it can tolerate one malicious node under both synchronous and asynchronous updates.
For malicious node 1, we assume that it does not only manipulate its own value as in the one-hop case, but also relays false information. Specifically, when node 1 receives a message from node 3 and relays the value $x_3[k]$ to its other neighbors, it manipulates this value to a fixed value of 0.5. Similarly, when node 1 receives a message from node 6 and relays the value $x_6[k]$ to other neighbors, it manipulates this value to a fixed value of 9.5. Additionally, when node 1 receives a message from node 5 and relays the value $x_5[k]$ to other neighbors, it manipulates this value based on the sine function w.r.t. time.
In Fig.~\ref{6_two-hop-syn}, we plot the consensus error given by $\Delta x_0[k]= \max x^N[k]- \min x^N[k]$ and observe that resilient consensus is achieved.

Lastly, we examine the two-hop algorithm under asynchronous updates with delays. We consider the same attack, but let each normal node update in a periodic manner. Specifically, nodes 2, 3, 4, 5, and 6 update in every 1, 5, 4, 3, and 2 steps, respectively. The delays for the messages from one-hop neighbors and two-hop neighbors are set as 0 and 1 step, respectively.
Fig.~\ref{6_two-hop-asyn} shows the states as well as the consensus error given by $\Delta x_\tau[k]= \max z^N[k]- \min z^N[k]$. It indicates that consensus is attained despite the malicious attacks.
Note that in this situation, we can only guarantee the nonincreasing property of $\Delta x_\tau[k]$ (with $\tau=5$).
Through these simulations, we have verified the effectiveness of the MW-MSR algorithms to achieve resilient consensus in small-scale networks.

\begin{figure}[t]
	\centering
	
	\subfigure[\scriptsize{The states of all nodes.}]{
		\includegraphics[width=3.2in,height=1.3in]{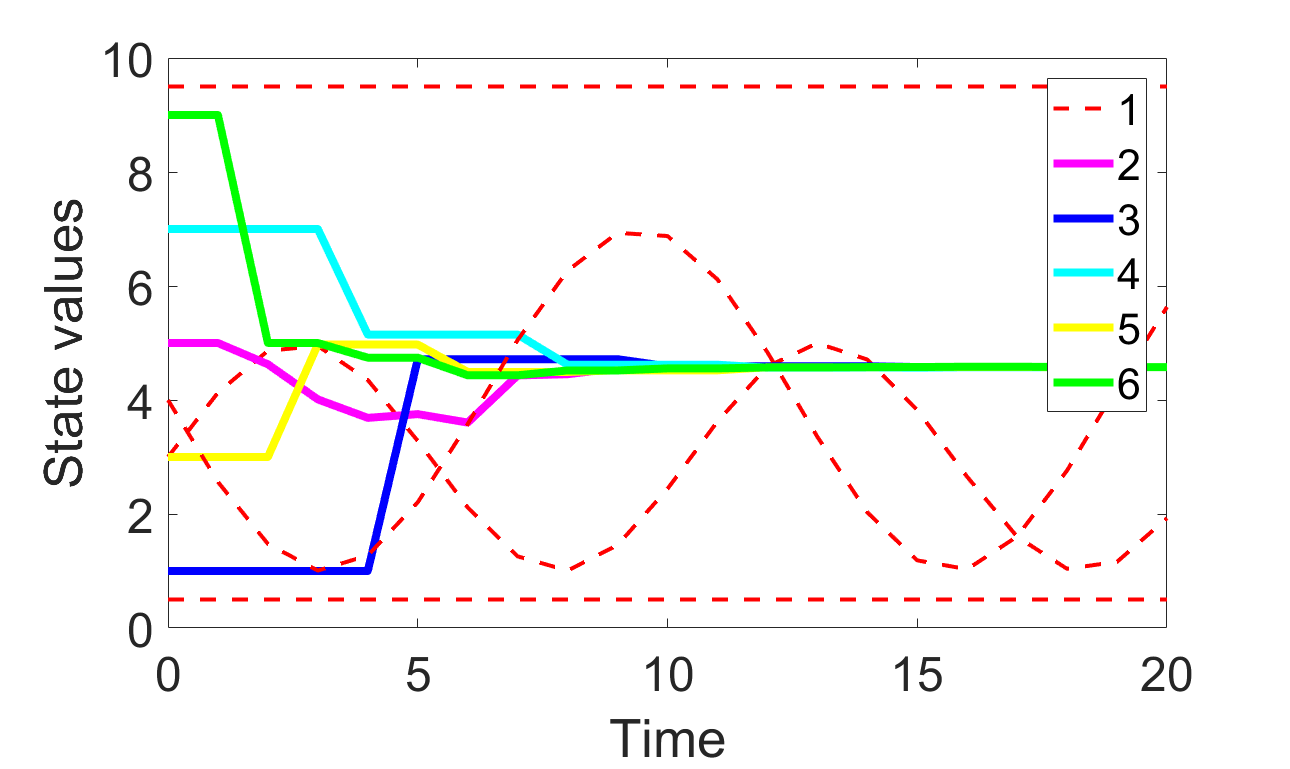}
	}
	
	\vspace{-7pt}
	\subfigure[\scriptsize{Consensus error.}]{
		\includegraphics[width=3.2in,height=1.3in]{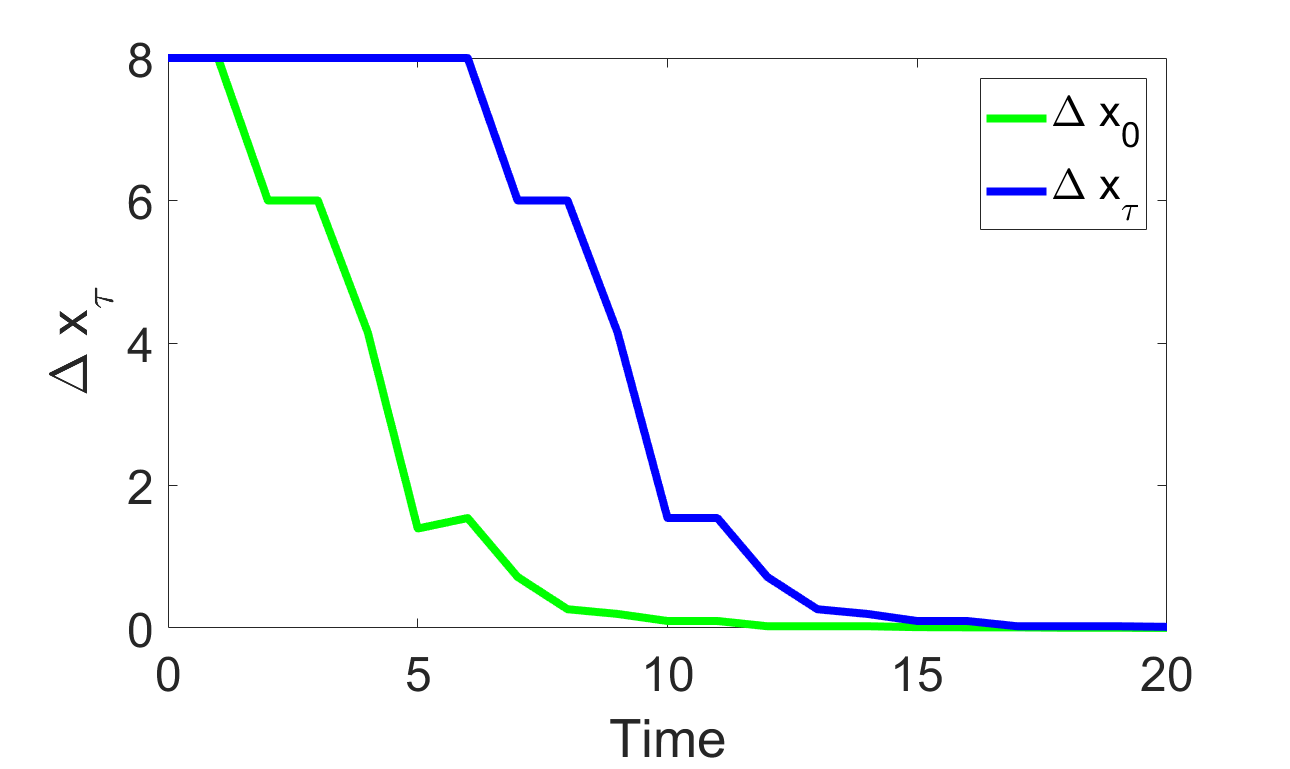}
	}
	\vspace{-7pt}
	\caption{Time responses of the asynchronous two-hop MW-MSR algorithm.}
	\label{6_two-hop-asyn}
	\vspace*{-3.5mm}
\end{figure}

\subsection{Simulations in Large Wireless Sensor Networks}

In this part of the simulations, we create a WSN composed of 100 nodes located in a grid structure as shown in Fig.~\ref{location}. Let the nodes take indices $0, 1,\dots, 99$ and the coordinate of node $i$ is $( i\mod 10 , \lfloor \frac{i}{10} \rfloor)$. 
Each node can communicate only with the nodes located within the communication radius of $r$. 
Once $r$ is determined, the topology of the network is formed. 
Then we apply the one-hop, two-hop, and three-hop MW-MSR algorithms to the network.
Recall that $f$ denotes the maximum number of malicious nodes in the network, and we increase $f$ from 0 to 11 by selecting the malicious nodes with indices in the order of $32, 34, 36, 38, 43, 62, 64, 66, 68, 74, 14$.

\begin{figure}[]
	\centering
	\includegraphics[width=1.8in]{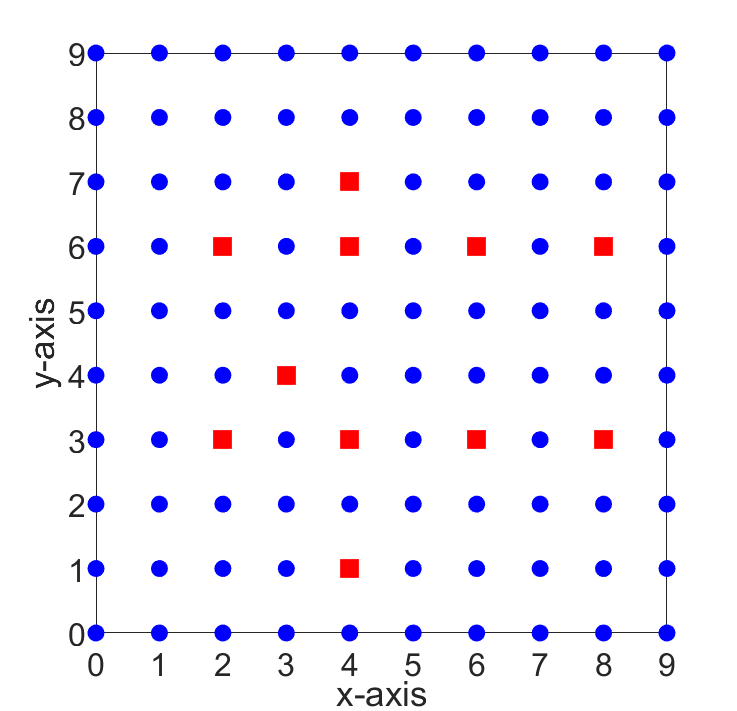}
	\vspace{-5pt}
	\caption{The 100-node sensor network. The red nodes are set as malicious one by one as $f$ increases up to 11.}
	\label{location}
\end{figure}

Here, we examine how the network connectivity affects the performance of the MW-MSR algorithms with different hops. Using different values for the number of malicious agents $f$ and the communication radius $r$, the results of the one-hop algorithm are presented in Fig.~\ref{success}(a). 
For each $f$ and each $r$ (corresponding to one cell in the figure), we compute the success rate of the algorithm to achieve resilient consensus over 50 Monte Carlo runs with randomly chosen initial values (within [0, 100]) of the normal agents for each run. 
The malicious nodes take values based on sine functions w.r.t. time.
Similarly, we also conduct simulations for the two-hop and the three-hop algorithms and the results are given in Figs.~\ref{success}(b) and (c), respectively.

One can see that by increasing the number of hops $l$, the success rate of the algorithm to achieve resilient consensus increases almost for every value of $f$. Such improvement is especially significant when $f\leq6$ and $r\leq3$. This verifies our intuition as well as theoretical findings that graph robustness increases as $l$ increases.
However, the difference between the results for the two-hop and the three-hop algorithms is somewhat minor. One reason is that the maximum robustness under the $f$-total model of a given graph is bounded by the minimum in-degree $2f$. This may indicate that the two-hop communication for this graph already reaches the number of hops for the maximum graph robustness.


\begin{figure}[t]
	\centering
	
	\subfigure[\scriptsize{One-hop algorithm.}]{
		\includegraphics[width=3.2in,height=1.3in]{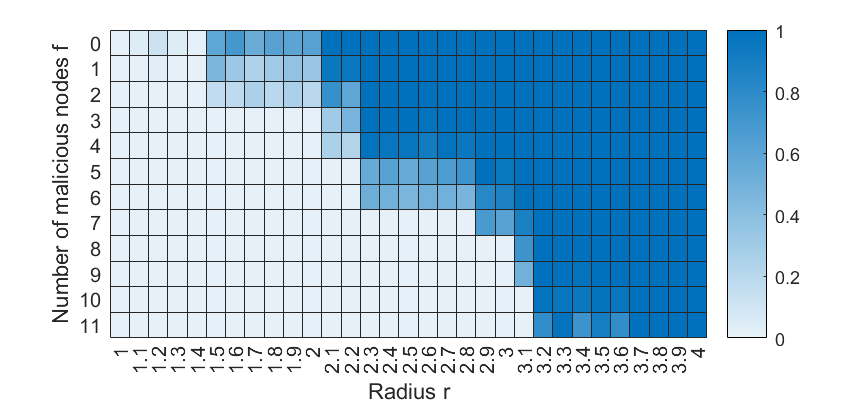}
	}
	
	\vspace{-7pt}
	\subfigure[\scriptsize{Two-hop algorithm.}]{
		\includegraphics[width=3.2in,height=1.3in]{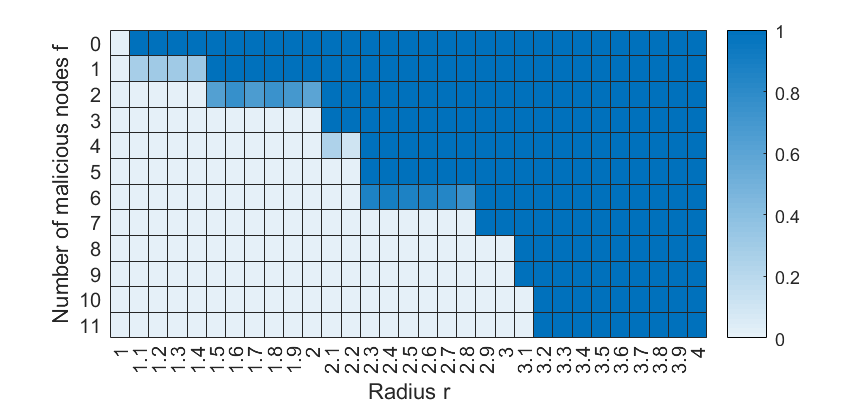}
	}
	\vspace{-7pt}
	\subfigure[\scriptsize{Three-hop algorithm.}]{
		\includegraphics[width=3.2in,height=1.3in]{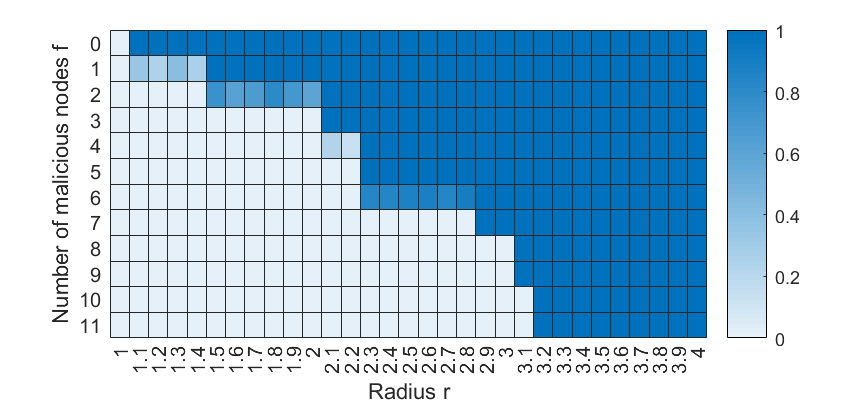}
	}
	\caption{Success rate of the MW-MSR algorithm.}
	\label{success}
	\vspace*{-3.5mm}
\end{figure}

\begin{figure}[t]
	\centering
	
	\subfigure[\scriptsize{Consensus error for $f=0, r=1.2$.}]{
		\includegraphics[width=3.2in,height=1.3in]{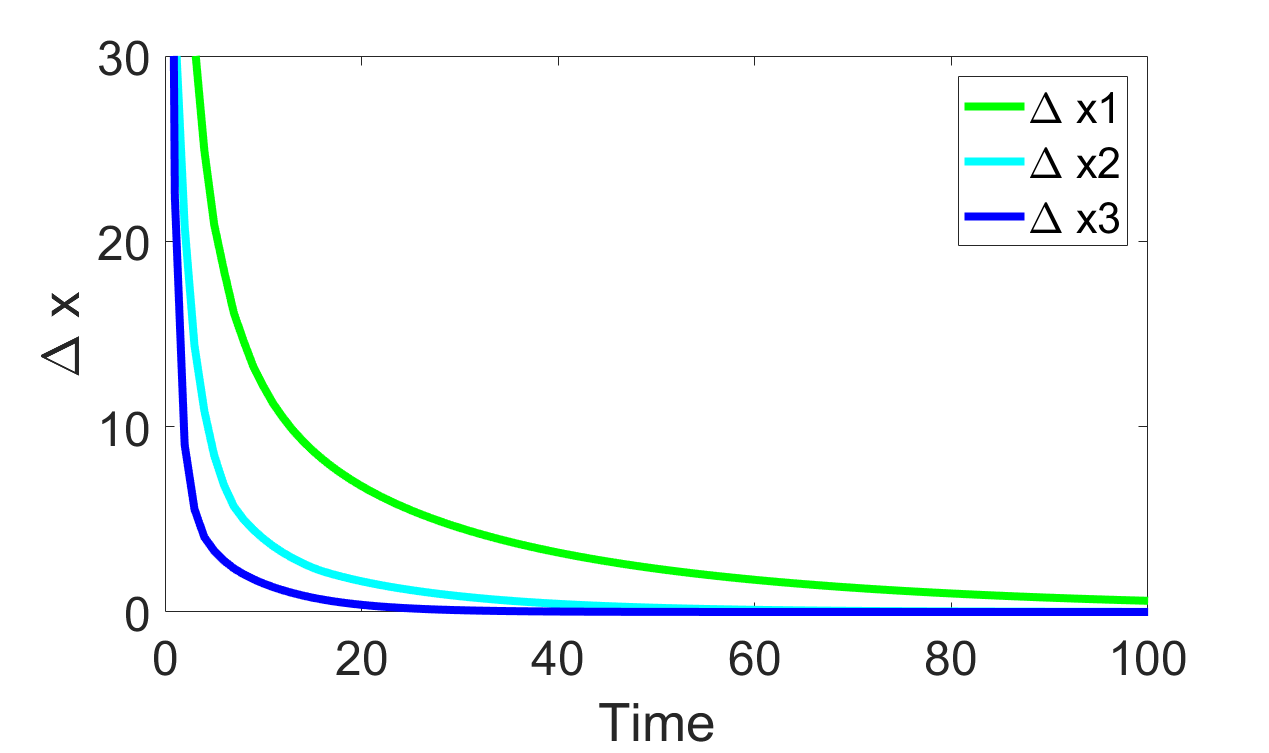}
	}
	\vspace{-7pt}
	\subfigure[\scriptsize{Consensus error for $f=1, r=1.2$.}]{
		\includegraphics[width=3.2in,height=1.3in]{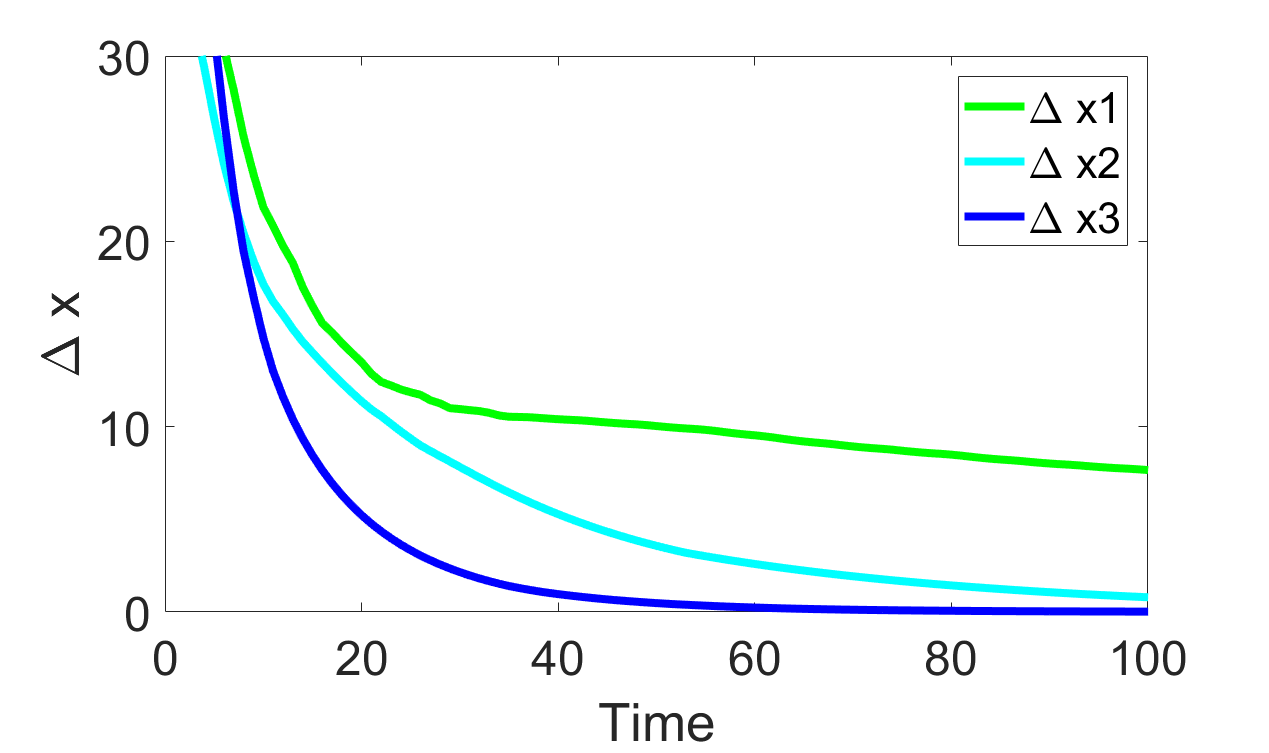}
	}
	\vspace{-7pt}
	\subfigure[\scriptsize{Consensus error for $f=9, r=3.1$.}]{
		\includegraphics[width=3.2in,height=1.3in]{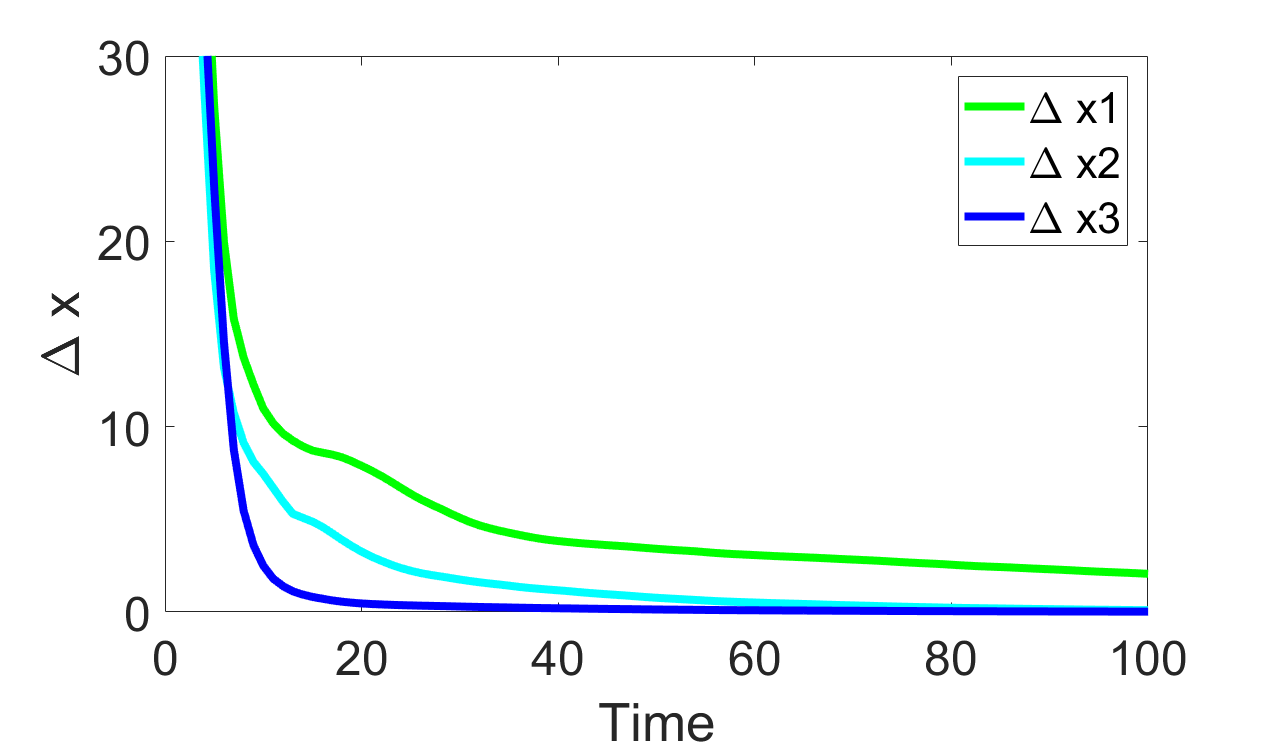}
	}
	
	\caption{Consensus error of the MW-MSR algorithm.}
	\label{error_3algorithms}
	\vspace*{-3.5mm}
\end{figure}

In these simulations, the success rate for reaching consensus is determined by the level of consensus error at time $k=70$. If the consensus error is below the threshold $c=1$, then the run is considered as a success. Hence, if the consensus process is very slow, it may be considered as a failure. 
For example, observe that for the case of $f = 0$ (i.e., with no attacks) the success rate increases with larger $r \leq 1.5$ while the network is connected as long as $r>1$. 

We should remark that the process of consensus forming can be accelerated by increasing the number of hops, as discussed in \cite{jin2006multi} for the fault-free case. This can be clearly seen in all plots in Fig.~\ref{error_3algorithms}, presenting the time responses of the consensus errors for several cases of $f$ and $r$, where $\Delta x1, \Delta x2$ and $ \Delta x3$ stand for the consensus errors for the one-hop, two-hop, and three-hop algorithms, respectively. Based on these examples, we conclude that by introducing multi-hop communication to the MSR algorithms, it does not only improve the robustness of the network but also accelerates the convergence in consensus forming even in adversarial environments.

\section{Conclusion}
In this paper, we have investigated the resilient consensus problem when multi-hop communication is available.
We have proposed generalized versions of MSR algorithms to correctly use the additional values received from multi-hop neighbors.
Moreover, we have fully characterized the network requirement for the algorithms in terms of robustness with $l$ hops. 
By introducing multi-hop communication, the convergence of the resilient consensus process can be accelerated. Furthermore, it provides an effective way to enhance robustness of networks without increasing physical communication links.
In future works, we intend to extend our algorithms to the asynchronous Byzantine consensus problem using multi-hop communication with a fixed number of hops.


\appendices

\section*{Appendix}
\section*{Proof of Theorem \ref{asyntheorem}}

\begin{proof}
	(Necessity) Since the synchronous algorithm is a special case of the asynchronous algorithm, the necessary condition in Theorem \ref{syn} also holds here.

	(Sufficiency) First, we show the safety condition. 
	For $k = 0$, by the assumption on $z[0]$, it holds $z^N[0]\in \mathcal{S}_{\tau}$, and thus $x_i[0] \in \mathcal{S}_{\tau} ,\forall i \in \mathcal{N}$.
	Next, for $k \geq 0$, let $\overline{x}^N_\tau[k]$ and $\underline{x}^N_\tau[k]$ be the largest value and the smallest value, respectively, of the normal agents from time $k, k-1, \dots, k-\tau$. That is, 
	\begin{equation}
	\begin{array}{lll} 
	\overline{x}^N_\tau[k] =\max \left( x^N[k], x^N[k-1],\dots, x^N[k-\tau]\right),\\
	\underline{x}^N_\tau[k] = \min \left( x^N[k], x^N[k-1],\dots, x^N[k-\tau]\right).
	\end{array}
	\end{equation}
	Then we prove that $\overline{x}^N_\tau[k]$ is a nonincreasing function of $k \geq 0$.
	By \eqref{system2}, at time $k \geq 0$, each normal agent updates its value based on a convex combination of the neighbors' values from $k$ to $k-\tau$. 
	Moreover, the values outside of the interval determined by the normal agents' values $[\underline{x}^N_\tau[k], \overline{x}^N_\tau[k]]$ will be ignored by step 2 of the MW-MSR algorithm. This is because in this step, node $i$ will remove the largest sized subsets of large and small values that can be manipulated by at most $f$ nodes within $l$ hops.
	Hence, we obtain $x_i[k+1] \leq \max \left( x^N[k], x^N[k-1],\dots, x^N[k-\tau]\right)$ for any $ i \in \mathcal{N}$. We also have
	\begin{equation*}
	\begin{aligned}
	x_i[k] &\leq \max \left( x^N[k], x^N[k-1],\dots, x^N[k-\tau]\right),\\
	x_i[k-1] &\leq \max \left( x^N[k], x^N[k-1],\dots, x^N[k-\tau]\right),\\
	&\vdots\\
	x_i[k+1-\tau]&\leq \max \left( x^N[k], x^N[k-1],\dots, x^N[k-\tau]\right)
	\end{aligned}
	\end{equation*}
	for any $ i \in \mathcal{N}$. Therefore, $\overline{x}^N_\tau[k]$ is nonincreasing in time as
	\begin{equation*}
	\begin{aligned}
	\overline{x}^N_\tau&[k+1] =\max \left( x^N[k+1], x^N[k],\dots, x^N[k+1-\tau]\right)\\
	&\leq \max \left( x^N[k], x^N[k-1],\dots, x^N[k-\tau]\right)=\overline{x}^N_\tau[k].
	\end{aligned}
	\end{equation*}
	We can similarly prove that $\underline{x}^N_\tau[k]$ is nondecreasing in time. This indicates that for $k \geq 0$, we have $x_i[k]\in \mathcal{S}_{\tau}$, $\forall i \in \mathcal{N}$. Thus, we have shown the safety condition.
	
	Next, we show the convergence. As shown above, $\overline{x}^N_\tau[k]$ and $\underline{x}^N_\tau[k]$ are monotonically decreasing and increasing, respectively, 
	and moreover bounded. Thus, both of their limits exist and are denoted by $\overline{\omega}_\tau$ and $\underline{\omega}_\tau$, respectively.
	We claim that the limits satisfy $\overline{\omega}_\tau=\underline{\omega}_\tau$, i.e., consensus is achieved. We prove by contradiction and assume that $\overline{\omega}_\tau>\underline{\omega}_\tau$.
	
	Recall that $\alpha$ lower bounds the nonzero entries of $\Gamma[k]$. Choose $\epsilon_0 > 0$ small enough that $\overline{\omega}_\tau-\epsilon_0 > \underline{\omega}_\tau+\epsilon_0$. Fix 
	\begin{equation}\label{ep}
	\epsilon<\frac{\epsilon_0\alpha^{(\tau+1)n_N}}{(1-\alpha^{(\tau+1)n_N})}, \medspace 0<\epsilon<\epsilon_0.
	\end{equation}
	Define the sequence $\{\epsilon_\gamma\}$ by
	\begin{equation*}
	\epsilon_{\gamma+1}= \alpha\epsilon_\gamma-(1-\alpha)\epsilon, \medspace \gamma=0,1,\dots,(\tau+1)n_N-1.
	\end{equation*}
	So we have $0 < \epsilon_{\gamma+1} < \epsilon_{\gamma}$ for all $\gamma$. In particular, they are positive because by \eqref{ep}, it holds that
	\begin{equation*}
	\begin{aligned}
	\epsilon_{(\tau+1)n_N}&= \alpha^{(\tau+1)n_N}\epsilon_0- \sum_{m=0}^{(\tau+1)n_N-1}\alpha^m(1-\alpha)\epsilon\\
	&= \alpha^{(\tau+1)n_N}\epsilon_0-(1-\alpha^{(\tau+1)n_N})\epsilon>0.
	\end{aligned}
	\end{equation*}
	
	Take $k_\epsilon \in \mathbb{Z}_+$ such that $\overline{x}^N_\tau[k]<\overline{\omega}_\tau+\epsilon$ and $\underline{x}^N_\tau[k]>\underline{\omega}_\tau-\epsilon$ for $k\geq k_\epsilon$. Such $k_\epsilon$ exists due to the convergence of $\overline{x}^N_\tau[k]$ and $\underline{x}^N_\tau[k]$. Then we can define the two disjoint sets as 
	\begin{equation*}
	\mathcal{Z}_{1\tau}(k_\epsilon+\gamma,\epsilon_\gamma)=\{j\in \mathcal{N}: x_j[k_\epsilon+\gamma]>\overline{\omega}_\tau-\epsilon_\gamma\},
	\end{equation*}
	\begin{equation*}
	\mathcal{Z}_{2\tau}(k_\epsilon+\gamma,\epsilon_\gamma)=\{j\in \mathcal{N}: x_j[k_\epsilon+\gamma]<\underline{\omega}_\tau+\epsilon_\gamma\}.
	\end{equation*}
	
	Next, we show that one of the two sets becomes empty in a finite number of steps, which contradicts the assumption on $\overline{\omega}_\tau$ and $\underline{\omega}_\tau$ being the limits. Consider the set $\mathcal{Z}_{1\tau}(k_\epsilon,\epsilon_0)$. Due to the definition of $\overline{x}^N_\tau[k]$ and its limit $\overline{\omega}_\tau$, one or more normal nodes are contained in the union of the sets $\mathcal{Z}_{1\tau}(k_\epsilon+\gamma,\epsilon_\gamma)$ for $0 \leq\gamma\leq\tau + 1$. We claim that $\mathcal{Z}_{1\tau}(k_\epsilon,\epsilon_0)$ is in fact nonempty. To prove this, it is sufficient to show that if a normal node $j$ is not in $\mathcal{Z}_{1\tau}(k_\epsilon+\gamma,\epsilon_\gamma)$, then it is not in $\mathcal{Z}_{1\tau}(k_\epsilon+\gamma+1,\epsilon_{\gamma+1})$ for $\gamma=0,\dots,\tau$.
	
	Suppose that node $j$ satisfies $x_j[k_\epsilon+\gamma]\leq \overline{\omega}_\tau-\epsilon_\gamma$. Every normal node updates its value to a convex combination of the multi-hop neighbors' values at the current or previous times. Moreover, the values greater than $\overline{x}^N_\tau[k_\epsilon+\gamma]$ are ignored in step 2 of the MW-MSR algorithm. Hence, the value of node $j$ at the next time step is upper bounded as
	\begin{equation}\label{converge}
	\begin{aligned}
	x_j&[k_\epsilon+\gamma+1] \leq (1-\alpha)\overline{x}^N_\tau[k_\epsilon+\gamma]+\alpha(\overline{\omega}_\tau-\epsilon_\gamma)\\
	&\leq (1-\alpha)(\overline{\omega}_\tau+\epsilon)+\alpha(\overline{\omega}_\tau-\epsilon_\gamma)\\
	&\leq \overline{\omega}_\tau-\alpha\epsilon_\gamma+(1-\alpha)\epsilon
	=\overline{\omega}_\tau-\epsilon_{\gamma+1}.
	\end{aligned}
	\end{equation}
	It thus follows that node $j$ is not in $\mathcal{Z}_{1\tau}(k_\epsilon+\gamma+1,\epsilon_{\gamma+1})$. This means that the cardinality of the set $\mathcal{Z}_{1\tau}(k_\epsilon+\gamma,\epsilon_\gamma)$ is nonincreasing for $\gamma=0,\dots,\tau+1$. The same holds for $\mathcal{Z}_{2\tau}(k_\epsilon+\gamma,\epsilon_\gamma)$, and hence $\mathcal{Z}_{2\tau}(k_\epsilon,\epsilon_0)$ is nonempty too.

	We next show that one of these two sets in fact becomes empty in finite time. Since the graph is $(2f + 1)$-robust with $l$ hops w.r.t. any set $\mathcal{F}$ satisfying the $f$-total model, the graph is also $(2f + 1)$-robust with $l$ hops w.r.t. set $\mathcal{A}$ (i.e., the set of adversarial nodes).
	Therefore, between the two nonempty disjoint sets $\mathcal{Z}_{1\tau}(k_\epsilon,\epsilon_0)$ and $\mathcal{Z}_{2\tau}(k_\epsilon,\epsilon_0)$, one of them has a normal agent with at least $2f + 1$ independent paths originating from the nodes outside and these paths do not have any internal node in the set $\mathcal{A}$.

	Suppose that normal
	node $i\in \mathcal{Z}_{1\tau}(k_\epsilon,\epsilon_0)$ has this property. Since there are at most $f$ malicious nodes and node $i$ can only remove the values of which the cardinality of the minimum message cover is $f$. Moreover, node $i$ is supposed to update once in at most $\tau$ time steps.
	Therefore, when node $i$ makes an update at time $k_\epsilon+\tau$, it will use at least one delayed value from the normal nodes outside the set  $\mathcal{Z}_{1\tau}(k_\epsilon,\epsilon_0)$, upper bounded by $\overline{\omega}_\tau-\epsilon_\tau$. It thus follows that, at time $k_\epsilon+\tau$, when node $i$ makes an update, its value can be bounded as
	\begin{equation*}
	x_i[k_\epsilon+\tau+1]\leq (1-\alpha)\overline{x}^N_\tau[k_\epsilon+\tau]+\alpha(\overline{\omega}_\tau-\epsilon_\tau).
	\end{equation*}
	By \eqref{converge}, we have $x_i[k_\epsilon+\tau+1] \leq \overline{\omega}_\tau-\epsilon_{\tau+1}$. We can conclude that if node $i$ in $\mathcal{Z}_{1\tau}(k_\epsilon,\epsilon_0)$ has $2f +1$ independent paths originating from the nodes outside the set, then it goes outside of $\mathcal{Z}_{1\tau}(k_\epsilon+\tau+1,\epsilon_{\tau+1})$ after $\tau + 1$ steps. Consequently, 
	$\left|\mathcal{Z}_{1\tau}(k_\epsilon+\tau+1,\epsilon_{\tau+1}) \right| < \left| \mathcal{Z}_{1\tau}(k_\epsilon,\epsilon_0)\right| $. Likewise, it follows that if $\mathcal{Z}_{2\tau}(k_\epsilon,\epsilon_0)$ has a node having at least $2f+1$ independent paths originating from the nodes outside, then $\left|\mathcal{Z}_{2\tau}(k_\epsilon+\tau+1,\epsilon_{\tau+1}) \right| < \left| \mathcal{Z}_{2\tau}(k_\epsilon,\epsilon_0)\right| $.
	
	Since there are only $n_N$ normal nodes, we can repeat the steps above until one of the sets $\mathcal{Z}_{1\tau}(k_\epsilon+\tau+1,\epsilon_{\tau+1})$ and $\mathcal{Z}_{2\tau}(k_\epsilon+\tau+1,\epsilon_{\tau+1})$ becomes empty, and it takes no more than $(\tau + 1)n_N$ steps. Once the set becomes empty, it remains so indefinitely. This contradicts the assumption that $\overline{\omega}_\tau$ and $\underline{\omega}_\tau$ are the limits. Therefore, we obtain $\overline{\omega}_\tau=\underline{\omega}_\tau$.
\end{proof}

\begin{IEEEbiography}[{\includegraphics[width=1in,height=1.25in,clip,keepaspectratio]{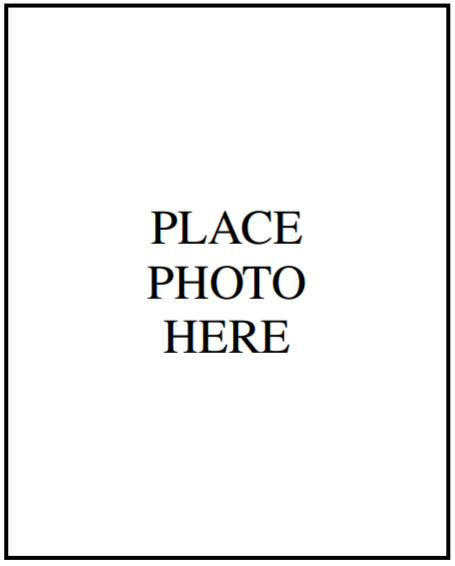}}]{Liwei Yuan} received the B.E. degree in Electrical Engineering and Automation from Tsinghua University, China, in 2017, and the M.E. degree in Computer Science from Tokyo Institute of Technology, Japan, in 2019. He is currently working toward the Ph.D. degree in the Department of Computer Science at Tokyo Institute of Technology, Yokohama, Japan. His current research focuses on security in multi-agent systems and distributed algorithms.
\end{IEEEbiography}

\begin{IEEEbiography}[{\includegraphics[width=1in,height=1.25in,clip,keepaspectratio]{blank}}]{Hideaki Ishii} (M'02-SM'12-F'21) received the
	M.Eng.\ degree from Kyoto University, Kyoto, Japan, in 1998, and the
	Ph.D. degree from the University of Toronto, Toronto,
	ON, Canada, in 2002. He was a Postdoctoral Research
	Associate at the University of Illinois at Urbana-Champaign,
	Urbana, IL, USA, from 2001 to
	2004, and a Research Associate at The University of Tokyo, Tokyo, Japan, 
	from 2004 to 2007.
	Currently, he is a Professor at the Department of Computer Science,
	Tokyo Institute of Technology, Yokohama, Japan. 
	His research interests
	are in networked control systems, multiagent systems, distributed algorithms,
	and cyber-security of control systems.
	
	Dr.~Ishii has served as an Associate Editor for the IEEE Control
	Systems Letters and the Mathematics of Control, Signals, and Systems
	and previously for Automatica, the IEEE Transactions on Automatic
	Control, and the IEEE Transactions on Control of Network Systems.
	He is the Chair of the IFAC Coordinating Committee on Systems and Signals since 2017.
	He is the IPC Chair for the IFAC World Congress 2023 to be held in Yokohama, Japan.
	He received the IEEE Control Systems Magazine Outstanding Paper
	Award in 2015. Dr.~Ishii is an IEEE Fellow.
\end{IEEEbiography}

\end{document}